\newtheorem*{remark}{Remark}
\newtheorem{theorem}{Theorem}[section]
\newtheorem{lemma}[theorem]{Lemma}
\newtheorem{proposition}[theorem]{Proposition}
\newtheorem{definition}{Definition}[section]
\newcommand*\bigcdot{\mathpalette\bigcdot@{.5}}
\newcommand*\bigcdot@[2]{\mathbin{\vcenter{\hbox{\scalebox{#2}{$\m@th#1\bullet$}}}}}
\newcommand*\bbigcdot{\mathpalette\bigcdot@{.75}}
\newcommand*\bbigcdot@[2]{\mathbin{\vcenter{\hbox{\scalebox{#2}{$\m@th#1\bullet$}}}}}
\DeclareMathOperator{\Sgn}{sgn}
\DeclareMathOperator{\mom}{MoM}
\DeclareMathOperator{\ii}{\mathrm{i}\!}
\begin{document}

\title[Moments of moments of symplectic and orthogonal characteristic polynomials]{On the moments of the moments of the characteristic polynomials of Haar distributed symplectic and orthogonal matrices}
\abstract 
We establish formulae for the moments of the moments of the characteristic polynomials of random orthogonal and symplectic matrices in terms of certain lattice point count problems.  This allows us to establish asymptotic formulae when the matrix-size tends to infinity in terms of the volumes of certain regions involving continuous Gelfand-Tsetlin patterns with constraints.  The results we find differ from those in the unitary case considered previously.
\endabstract

\author{T. Assiotis}
\email{theo.assiotis@ed.ac.uk}
\address{School of Mathematics, University of Edinburgh, James Clerk Maxwell Building, Peter Guthrie Tait Rd, Edinburgh, EH9 3FD, United Kingdom}

\author{E. C. Bailey}
\email{ebailey@gc.cuny.edu}
\address{Department of Mathematics, CUNY Graduate Center, New York, 10016, United States of America}

\author{J. P. Keating}
\email{jon.keating@maths.ox.ac.uk}
\address{Mathematical Institute, University of Oxford, Oxford, OX2 6GG, United Kingdom}

\maketitle

%++++++++++++++++++++++++%
%     Introduction       %
%++++++++++++++++++++++++%

\tableofcontents
\section{Introduction}\label{sec:intro} 

\subsection{Context}

Let
\begin{align*}
P_{G(N)}(\theta;g)=\det \left(I-ge^{-\ii \theta}\right)
\end{align*}
denote the characteristic polynomial on the unit circle (where $\ii\coloneqq\sqrt{-1}$) of a matrix $g \in G(N)$, for $G(N) \in \{ Sp(2N), SO(2N) \}$. Here, $Sp(2N)$ denotes the group of $2N \times 2N$ symplectic unitary matrices, and $SO(2N)$ denotes the group of $2N \times 2N$ orthogonal matrices and with determinant $+1$. We note that the eigenvalues of matrices from $Sp(2N)$ and $SO(2N)$ lie on the unit circle and come in complex conjugate pairs, namely they are of the form: $e^{\ii \phi_1}, e^{-\ii \phi_1}, e^{\ii \phi_2}, e^{-\ii \phi_2},\dots,e^{-\ii \phi_N}, e^{\ii \phi_N}$. In particular, we have that:
\begin{align}\label{ComplexConjObs}
\overline{P_{G(N)}(\theta;g)}=P_{G(N)}(-\theta;g).
\end{align}
 
Endowing the groups $Sp(2N)$ and $SO(2N)$ with the normalized Haar measure, we denote by $\mathbb{E}_{g \in G(N)}$ the mathematical expectation with respect to the corresponding measure on $G(N)$. We are interested in the following quantities, which we call the {\it moments of the moments} of the characteristic polynomial:
\begin{align}
\mom_{G(N)}\left(k,\beta\right)=\mathbb{E}_{g \in G(N)}\left[\left(\frac{1}{2\pi}\int_{0}^{2\pi}|P_{G(N)}(\theta;g)|^{2\beta}d\theta\right)^k\right].
\end{align}
Our focus will be on the asymptotics of $\mom_{G(N)}\left(k,\beta\right)$ in the limit as $N\to\infty$ when $k$ and $\beta$ are fixed integers.

When $G(N)$ is the unitary group $U(N)$, there has recently been a good deal of interest in the moments of the moments.  General conjectures were made concerning the large-$N$ asymptotics in this case by Fyodorov, Hiary and Keating in \cite{fyodorov12} and, in more detail, by Fyodorov and Keating in \cite{fyodorov14}.  These conjectures were explored in numerical computations and further generalized in \cite{FGK}.  One reason for studying the moments of the moments is that the conjectured asymptotics can be used to motivate conjectures for the extreme value statistics of the characteristic polynomials \cite{fyodorov12, fyodorov14}.   

In the case of the unitary group, the conjectured asymptotics for $\mom_{G(N)}\left(k,\beta\right)$ was proved when $k=2$ by Claeys and Krasovsky using a Riemann-Hilbert analysis \cite{CK}, and for all non-negative integer values of $k$ and $\beta$ by Bailey and Keating \cite{baikea18} using an approach based on exact formulae for finite $N$.  An alternative approach when $k$ and $\beta$ are non-negative integers was developed by Assiotis and Keating \cite{asskea19}, using a connection with representation theory and constrained Gelfand-Tsetlin patterns and thus establishing a connection with combinatorics.  This yields the same results as found in \cite{baikea18}, but leads to an alternative explicit formula for the coefficient appearing in the leading-order contribution to the asymptotics in terms of the volume of the associated Gelfand-Tsetlin polytopes; i.e.~it provides a geometrical interpretation for this constant.  Recently, Fahs has extended the approach developed in \cite{CK} to give a proof of the asymptotic formula for $\mom_{G(N)}\left(k,\beta\right)$ for non-negative integer values of $k$ and general non-negative real $\beta$, but without an explicit expression for the coefficient of the leading order term.  There is considerable interest in removing the assumption that $k$ is a non-negative integer though this is likely to require new ideas. Finally, there has also been a good deal of progress in proving the associated conjectures for the extreme value statistics of the characteristic polynomials; see, for example, \cite{ABB, CMN, PZ2017}.

Our purpose here is to extend the approach developed in \cite{asskea19} to give formulae for $\mom_{G(N)}\left(k,\beta\right)$, when $k$ and $\beta$ are non-negative integers and when $G(N)$ is either of the groups $Sp(2N)$ and $SO(2N)$, in terms of the associated constrained Gelfand-Tsetlin patterns (which are different to those that appear in the unitary case).  We then establish asymptotic formulae in which the volumes of the related Gelfand-Tsetlin polytopes appear.  Importantly, we find that the leading order asymptotic dependence on $N$ depends on the group in question.

We now have a well developed understanding of how to use results for random matrices to make conjectures about the corresponding questions in number theory.  For example, formulae for the moments of the moments of the characteristic polynomials of random unitary matrices, and for the extreme value statistics of the characteristic polynomials, can be used to motivate conjectures for the moments of the moments and for the extreme value statistics of the Riemann zeta-function on short intervals of its critical line \cite{fyodorov12, fyodorov14}.  There has recently been progress in proving these conjectures; see, for example, \cite{Najnudel, ABBRS, harper1, harper2}.  Our results here provide a similar basis for conjecturing formulae for the moments of the moments of $L$-functions from orthogonal and symplectic families, for example $L$-functions associated with quadratic twists of elliptic curves and quadratic Dirichlet $L$-functions, where the two averages are, first, over a short section of the critical line (e.g.~ a section of length 2$\pi$) centred on the symmetry point of the functional equation, and, second, over members of the family (i.e.~in the two examples given, over twists).  This application will be explored further in a subsequent paper.

It would be interesting to extend the approach developed in \cite{CK} and \cite{fahs19} to the orthogonal and symplectic groups.  This would require uniform asymptotics for determinants of the form Toeplitz + Hankel as the singularities merge; as far as we are aware this theory remains to be developed.  It would also be interesting to explore the implications of our results for orthogonal and symplectic analogues of Guassian Multiplicative Chaos, along the lines of the corresponding theory in the unitary case (see, for example, \cite{Webb, NSW}).

\subsection{Main results}

\begin{theorem}\label{MainTheoremSymplectic}
Let $G(N)=Sp(2N)$. Let $k,\beta \in \mathbb{N}$. Then, $\mom_{Sp(2N)}\left(k,\beta\right)$ is a polynomial function in $N$. Moreover, 
\begin{align}
\mom_{Sp(2N)}\left(k,\beta\right)=\mathfrak{c}_{Sp}(k,\beta)N^{k\beta(2k\beta+1)-k}+O\left(N^{k\beta(2k\beta+1)-k-1}\right),
\end{align}
where the leading order term coefficient $\mathfrak{c}_{Sp}(k,\beta)$ is the volume of a convex region defined in Section~\ref{sec:sympl_asympt} and is strictly positive.
\end{theorem}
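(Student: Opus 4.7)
The plan is to adapt the representation-theoretic and combinatorial approach of \cite{asskea19}, developed for the unitary group, to the symplectic setting. The main substitution is that ordinary Schur polynomials and Gelfand--Tsetlin patterns are replaced by the irreducible characters $\chi_\lambda^{Sp(2N)}$ of $Sp(2N)$ and by the associated symplectic Gelfand--Tsetlin patterns, which have a characteristic zig-zag interlacing structure of depth $2N$. First I would use the identity $|P_{Sp(2N)}(\theta;g)|^{2\beta} = P_{Sp(2N)}(\theta;g)^{\beta} P_{Sp(2N)}(-\theta;g)^{\beta}$ afforded by \eqref{ComplexConjObs}, and expand each of the two factors via a dual Cauchy-type identity for symplectic characters, obtaining an expansion of the form $P_{Sp(2N)}(\theta;g)^{\beta} = \sum_\lambda c_\lambda(e^{-i\theta})\chi_\lambda^{Sp(2N)}(g)$, with $\lambda$ ranging over partitions confined to a box whose dimensions depend on $\beta$ and $N$. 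Integrating in $\theta$ by Fourier orthogonality collapses the two factors into a sum over pairs $(\lambda,\mu)$ satisfying a matching condition on their monomial content.

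Taking the $k$-th power of the resulting expression and integrating over $g$ with respect to Haar measure, orthogonality of the irreducible symplectic characters reduces $\mom_{Sp(2N)}(k,\beta)$ to a finite weighted sum of multiplicities of the trivial representation in $k$-fold tensor products of pairs $\chi_{\lambda_i}^{Sp(2N)}\otimes\chi_{\mu_i}^{Sp(2N)}$. Each such multiplicity is computed by iterating the branching rule for the chain $Sp(2N)\supset Sp(2N-2)\supset\cdots\supset Sp(2)\supset\{1\}$, which produces symplectic Gelfand--Tsetlin patterns; after assembling the data across the $k$ factors, $\mom_{Sp(2N)}(k,\beta)$ becomes a count of lattice points in a region $\mathfrak{R}_N$ cut out by integer linear inequalities whose defining data depend affinely on $N$. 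Polynomiality in $N$ then follows from an Ehrhart-type argument together with a verification that the quasi-polynomial periods vanish, mirroring the strategy used in \cite{baikea18, asskea19} for the unitary group.

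For the asymptotics I would rescale the pattern entries by $1/N$ and identify the count as a Riemann sum converging to the volume of a continuous convex region $\mathfrak{R}$, consisting of $k$-tuples of continuous symplectic Gelfand--Tsetlin patterns subject to the analogous linear constraints; this volume is $\mathfrak{c}_{Sp}(k,\beta)$. A direct dimension count for $\mathfrak{R}$---delicate because Sp-GT patterns have twice as many rows as ordinary GT patterns, which is precisely what distinguishes the symplectic leading exponent from the unitary one---should give exactly $k\beta(2k\beta+1)-k$. Positivity of $\mathfrak{c}_{Sp}(k,\beta)$ then follows by exhibiting an explicit interior point of $\mathfrak{R}$, for example a $k$-tuple of strictly interlacing symplectic patterns with generic entries, so that $\mathfrak{R}$ has non-empty interior of the expected dimension.

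The main obstacle I anticipate is twofold. First, one must pin down the correct symplectic analogue of the dual Cauchy identity in a form that interacts cleanly with the Fourier-orthogonality step; since $Sp(2N)$-characters involve both $z_j$ and $z_j^{-1}$, the symmetric-function bookkeeping is noticeably less symmetric than in the unitary case. Second, tracking the dimension of $\mathfrak{R}$ through the Sp-GT pattern structure, and showing that the Riemann sum approximation is strong enough to guarantee that the subleading terms are genuinely $O(N^{k\beta(2k\beta+1)-k-1})$, requires a careful accounting of free parameters versus constraints across the $k$ factors---this is where the precise exponent $k\beta(2k\beta+1)-k$ and the absence of lower-order contamination must be justified.
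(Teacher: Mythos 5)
Your high-level skeleton (reduce to representation theory, translate into constrained symplectic Gelfand--Tsetlin patterns, pass to a continuous volume by lattice-point counting, then show strict positivity by producing an interior point) does match the paper's plan, but the central reduction you propose is the wrong one and, as stated, cannot feed into the fixed-dimension volume estimate. You want to expand $P_{Sp(2N)}(\theta;g)^{\beta}$ as a sum of irreducible $Sp(2N)$-characters and then compute multiplicities by iterating the branching chain $Sp(2N)\supset Sp(2N-2)\supset\cdots\supset Sp(2)$. The Gelfand--Tsetlin patterns that iteration produces have roughly $N$ levels, so the lattice problem lives in $\mathbb{Z}^{L}$ with $L$ growing with $N$; that is incompatible with Theorem~\ref{LatticePointCountTheorem} (which requires fixed $L$), and the ``direct dimension count giving $k\beta(2k\beta+1)-k$'' cannot be read off from this picture at all. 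Moreover the expansion coefficients $c_\lambda(e^{-\mathrm{i}\theta})$ themselves depend on $N$ because the character ring of $Sp(2N)$ changes with $N$, so the $N$-dependence becomes badly entangled. The paper avoids all of this by applying Fubini first, so that the Haar average sits inside the $\theta$-integral and is an average of a fixed product of $2k\beta$ characteristic polynomials, and then invoking the Bump--Gamburd identity (Proposition~\ref{BumpGamburdSymplectic}): this equals $(x_1\cdots x_{2k\beta})^N\, sp^{(4k\beta)}_{\langle N^{2k\beta}\rangle}(x_1,\dots,x_{2k\beta})$, a \emph{single} character of $Sp(4k\beta)$, not of $Sp(2N)$. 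The rank and the number of polynomial factors are swapped. Expanding via Definition~\ref{CombinatorialFormulaSymplectic} then yields $(4k\beta)$-symplectic GT patterns, of fixed length, with $N$ entering only through the top row $\langle N^{2k\beta}\rangle$ (and hence as a bound on the entries). The rectangular top row forces the upper triangle to equal $N$, leaving $k\beta(2k\beta+1)$ free entries, and the $k$ linear constraints coming out of the $\theta$-integration cut this to $k\beta(2k\beta+1)-k$. That duality step is the crucial idea your proposal is missing.

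Two smaller points. Your polynomiality argument (``Ehrhart plus vanishing quasi-period'') is genuinely different from the paper's; Proposition~\ref{polystrucsympl} instead uses the Conrey et al.\ sum-over-$\varepsilon$ formula (Proposition~\ref{combsumsympl}) and an l'H\^opital argument, which is more direct and avoids having to prove that the periods vanish. And the positivity statement is considerably subtler than ``exhibit a generic interior point'': because the $k$ constraints each couple several levels and consecutive constraints overlap in a shared row, Lemma~\ref{PositivityOfVolumeSymplectic} has to classify the constraints into a few geometric types, split the trapezoidal ones into a symmetric piece and a residual ``type~5'' piece, and build the interior point by an iterated $\epsilon$-perturbation that respects interlacing across the overlaps; this is flagged in the paper as one of the harder parts of the argument, so you should not treat it as routine.
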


\begin{theorem}\label{MainTheoremOrthogonal}
Let $G(N)=SO(2N)$.  Let $k, \beta\in\mathbb{N}$.  Then, $\mom_{SO(2N)}(k,\beta)$ is a polynomial function in $N$.  Moreover, 
\begin{align}
\mom_{SO(2N)}(1,1)&=2(N+1)\\
\intertext{otherwise,}
\mom_{SO(2N)}(k,\beta)&=\mathfrak{c}_{SO}(k,\beta)N^{k\beta(2k\beta-1)-k}+O\left(N^{k\beta(2k\beta-1)-k-1}\right),
\end{align}
where the leading order term coefficient $\mathfrak{c}_{SO}(k,\beta)$ is given as a sum of volumes of convex regions described in Section~\ref{sec:ortho_asympts} and is strictly positive. 
\end{theorem}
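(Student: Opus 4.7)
The plan is to adapt the representation-theoretic approach of Assiotis--Keating \cite{asskea19} from $U(N)$ to $SO(2N)$, converting $\mom_{SO(2N)}(k,\beta)$ into a count of integer points in a rational polytope built from orthogonal Gelfand--Tsetlin patterns. Using (\ref{ComplexConjObs}), I would first write
\begin{align*}
|P_{SO(2N)}(\theta;g)|^{2\beta}=P_{SO(2N)}(\theta;g)^{\beta}P_{SO(2N)}(-\theta;g)^{\beta},
\end{align*}
and expand each factor as a finite $\mathbb{Z}[e^{\pm \i \theta}]$-linear combination of irreducible $SO(2N)$-characters $\chi_\mu^{SO(2N)}(g)$, with partitions $\mu$ fitting in an $N\times\beta$ rectangle; this is the orthogonal analogue of the dual Cauchy identity used in \cite{asskea19}.

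Raising to the $k$-th power, applying Fubini, and evaluating the Haar expectation over $SO(2N)$ by Schur orthogonality extracts, for each tuple of highest weights $(\mu_1,\nu_1,\dots,\mu_k,\nu_k)$, an iterated orthogonal Littlewood--Richardson coefficient multiplied by an exponential in the $\theta_j$'s. The $\theta_j$-integrations then impose linear balance relations on the partition parts, reducing $\mom_{SO(2N)}(k,\beta)$ to a weighted sum of orthogonal LR coefficients over constrained tuples. I would then apply the branching rule for the chain $SO(2n)\downarrow SO(2n-1)\downarrow SO(2n-2)\downarrow \cdots$ iteratively to express each such coefficient as a count of integer orthogonal Gelfand--Tsetlin patterns satisfying the appropriate interlacing and sign inequalities. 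The resulting formula is a count of integer points in a rational polytope $\mathcal{P}_{SO}(k,\beta;N)$ whose defining inequalities scale linearly and integrally with $N$; standard Ehrhart theory then yields polynomiality in $N$, with leading term $N^{D}\cdot\mathrm{vol}\bigl(\mathcal{P}_{SO}(k,\beta)\bigr)$ where the dimension $D=k\beta(2k\beta-1)-k$ is obtained by tallying the free parameters at each level of the pattern.

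The principal difficulty, compared with the unitary case, is the orthogonal branching itself: the highest weight of an $SO(2N)$ irreducible admits a sign on the last coordinate, and branching through the alternating chain of even and odd orthogonal groups introduces additional sign/parity choices absent in the $U(N) \downarrow U(N-1)$ interlacing. These choices split the overall lattice-point count into several combinatorial chambers, each of which is a distinct rational polytope; it is the \emph{sum} of their volumes that yields $\mathfrak{c}_{SO}(k,\beta)$, which accounts for the ``sum of volumes'' description in the theorem statement (and contrasts with the single-polytope answer in the unitary case). Strict positivity of $\mathfrak{c}_{SO}(k,\beta)$ is then obtained by exhibiting an interior point in at least one chamber and verifying that no cancellation occurs in the leading order. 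The degenerate case $(k,\beta)=(1,1)$ corresponds to $D=0$, where the polytopal description collapses and the asymptotic regime is inactive; this case I would handle separately by direct computation, namely expanding $|P_{SO(2N)}(\theta;g)|^{2}$ via the rank-one $SO(2N)$-character expansion and applying Schur orthogonality to obtain $\mathbb{E}_{g\in SO(2N)}|P_{SO(2N)}(\theta;g)|^{2}=2N+2$ independently of $\theta$, so that $\mom_{SO(2N)}(1,1)=2(N+1)$.
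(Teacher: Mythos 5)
Your overall strategy --- reduce $\mom_{SO(2N)}(k,\beta)$ to a lattice-point count in a union of rational polytopes indexed by sign choices, extract the leading order from their volumes, and handle $(1,1)$ separately --- matches the paper in spirit, and your dimension count $D=k\beta(2k\beta-1)-k$ and the ``sum of volumes over chambers'' description of $\mathfrak{c}_{SO}(k,\beta)$ are both correct. However, there are three genuine problems.

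First, the route you take to the combinatorial representation is materially different from the paper's and has a gap. You propose to expand $P_{SO(2N)}(\theta;g)^{\beta}$ into $SO(2N)$-characters, multiply $2k$ such sums, take the Haar expectation by orthogonality (producing ``orthogonal Littlewood--Richardson coefficients,'' i.e.\ tensor product multiplicities for $SO(2N)$), and then evaluate those multiplicities via the branching chain $SO(2n)\downarrow SO(2n-1)\downarrow\cdots$. But branching rules count \emph{restriction} multiplicities, not \emph{tensor product} multiplicities; passing from Newell--Littlewood-type coefficients to Gelfand--Tsetlin pattern counts is not a direct application of branching, and you give no argument for it. The paper sidesteps this entirely by invoking Bump--Gamburd (Propositions~\ref{BumpGamburdSymplectic} and \ref{BumpGamburdOrthogonal}), which express the full Haar average $\mathbb{E}_{g\in SO(2N)}\bigl[\prod_{j=1}^{2k\beta}\det(I-x_jg)\bigr]$ directly as a \emph{single} orthogonal Schur polynomial $o^{(4k\beta)}_{\langle N^{2k\beta}\rangle}$, whose combinatorial definition as a sum over orthogonal Gelfand--Tsetlin patterns (Definition~\ref{def:orthoschur}) then yields the pattern count immediately after integrating out the $\theta_j$.

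Second, your polynomiality argument is incomplete. Ehrhart theory for a \emph{rational} polytope gives only quasi-polynomiality in the dilation parameter; the constraint hyperplanes \eqref{eq:ortho_gf_const1}--\eqref{eq:ortho_gf_const3} have coefficients in $\{\pm1,\pm2\}$, so the vertices of the polytope need not be integral and one cannot conclude polynomiality this way without further work. The paper instead proves polynomiality of $\mom_{SO(2N)}(k,\beta)$ separately (Proposition~\ref{polystrucsympl}) via the Conrey et al.\ finite-sum representation (Proposition~\ref{combsumorthog}) and l'H\^opital, without any appeal to Ehrhart; the lattice-point theorem (Theorem~\ref{LatticePointCountTheorem}) is used only to identify the leading coefficient as a volume.

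Third, and most concretely, your treatment of the $(1,1)$ case is wrong. You assert that Schur orthogonality gives $\mathbb{E}_{g\in SO(2N)}|P_{SO(2N)}(\theta;g)|^{2}=2N+2$ independently of $\theta$. This is false: unlike $U(N)$, Haar measure on $SO(2N)$ is not invariant under multiplication by $e^{\i\theta}I$, and the expectation does depend on $\theta$. For example, for $N=1$ one computes directly
\begin{align*}
\mathbb{E}_{g\in SO(2)}\bigl[|P_{SO(2)}(\theta;g)|^{2}\bigr]=4+2\cos 2\theta,
\end{align*}
which only averages to $4=2(N+1)$ after the $\theta$-integration. The actual reason $(k,\beta)=(1,1)$ is exceptional (see Proposition~\ref{orthog_k1b1_prop}) is that in the chamber with $\Sgn(\lambda_2^{(3)})=+1$ the constraint \eqref{orthog_k1b1} reduces to the tautology $\Sgn(\lambda_1^{(1)})\lambda_1^{(1)}=\Sgn(\lambda_1^{(1)})\lambda_1^{(1)}$ and therefore imposes no reduction, so that chamber contributes $2N+1$ lattice points (growing linearly in $N$, not $O(N^{D})=O(1)$), while the other chamber forces $\lambda_1^{(1)}=0$ and contributes $1$. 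Your picture of the ``polytopal description collapsing'' at $D=0$ would predict a bounded count and misses why the answer actually grows linearly.
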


We remark that in the case of the unitary group, the power of $N$ appearing in the corresponding asymptotic formula is $k^2\beta^2-k+1$.

\subsection{Strategy of proof}

In order to prove our main results we combine the approaches that were developed in~\cite{baikea18} and~\cite{asskea19} (see also~\cite{krrr15}) for treating the simpler case of the unitary group. We first adapt an argument presented in~\cite{baikea18} to prove that $\mom_{G(N)}(k,\beta)$ is a polynomial in $N$. Then, in order to obtain the leading order term and an expression for its coefficient, we develop the combinatorial approach of~\cite{asskea19} to this setting.

The outline of the proof is as follows. We first obtain an expression for $\mom_{G(N)}(k,\beta)$ in terms of certain combinatorial objects, namely Gelfand-Tsetlin patterns, satisfying some (quite involved) constraints. We do this by making use of formulae due to Bump and Gamburd~\cite{bumgam06} that express averages of products of characteristic polynomials over the classical compact groups in terms of certain associated characters. The next step can be seen as taking a discrete to continuous limit, which gives the leading order coefficient as the volume of an explicit polytope, see Sections~\ref{sec:lattice}, ~\ref{sec:sympl_asympt}, and ~\ref{sec:ortho_asympts} for more precise statements.

There are certain important, not entirely technical, differences to the unitary group setting. In particular, the combinatorial objects we work with, namely the symplectic and orthogonal Gelfand-Tsetlin patterns, are more complicated than their unitary counterparts. For example, in order to apply the results required for the discrete to continuous limit in the orthogonal case, we first need to perform a decomposition of the corresponding patterns. The most significant difference however is the complexity of the constraints involved in the orthogonal and symplectic settings.  For the case of the unitary group, the constraints only depend on a single level of the pattern, whereas for the cases considered in this paper they involve several levels. 

This complication has the following consequences. Firstly, from the discrete to continuous limit argument it is not immediately clear that the leading order coefficient is actually strictly positive (which is straightforward in the unitary case). We manage to overcome this problem by a careful analysis of the different types of constraints. This is one of the more challenging parts of the paper, and the argument is supplemented by a number of diagrams. Secondly, the intricacies of the constraints prevents us, at least at present, from obtaining a more explicit expression for the leading order coefficient as was done in~\cite{asskea19} (such an expression has been used to connect this coefficient to Painlev\'e equations for $k=2$, see~\cite{krrr15} and~\cite{basgerub18}). However we do not believe that this is an intrinsic limitation of our approach, since, as we show in Section~\ref{sec:ks_asympts} for example, whenever such a leading order coefficient in an allied problem has been computed explicitly by different methods, it can fact also be reproduced by calculating volumes of Gelfand-Tsetlin polytopes.

%+++++++++++++++++++++++%
%     Preliminaries     %
%+++++++++++++++++++++++%

\section{Preliminaries}\label{sec:prelims}  

%+++++++++++++++++++++++%
%    G-T Pattern Defs   %
%+++++++++++++++++++++++%

\subsection{Symplectic and orthogonal Gelfand-Tsetlin patterns and Schur polynomials}\label{sec:gfdefs}

We will now give some background on symplectic and orthogonal Schur polynomials (which  are in fact Laurent polynomials). These can be defined as the characters of irreducible representations of the corresponding classical compact groups. From this perspective, making use of the Weyl character formula, one obtains well-known explicit expressions in terms of ratios of determinants (which we also record below). For our purposes however, we shall need some equivalent (see \cite{Proctor}) combinatorial definitions in terms of sums over objects called Gelfand-Tsetlin patterns. We mainly follow the recent exposition in Section 2 of \cite{ayyfis19}.

\begin{definition}[Signature]\label{def:signature}
A signature $\lambda$ of length $M$ is a sequence of $M$ non-increasing integers $(\lambda_1\ge \lambda_2 \ge \dots \ge \lambda_M)$. We denote the set of all such signatures by $\mathsf{S}_M$. We also denote the set of the signatures with non-negative entries by $\mathsf{S}_M^+$. For $\lambda=(\lambda_1,\dots,\lambda_M) \in \mathsf{S}_M^+$ we define $\lambda^-\coloneqq(\lambda_1,\dots,\lambda_{M-1},-\lambda_M)$. If $\lambda_1=\cdots=\lambda_M=n$ then we also write $\lambda=\langle n^M\rangle$.
\end{definition}

\begin{definition}[Interlacing]\label{def:interlacing}
We say that signatures $\lambda \in \mathsf{S}_M$ and $\nu \in \mathsf{S}_{M+1}$ interlace, and write $\lambda\prec \nu$, if:
\begin{align}\label{interlacing_eqs}
\nu_1\ge \lambda_1 \ge \nu_2 \ge \cdots \ge \nu_M \ge \lambda_M \ge \nu_{M+1}.
\end{align}
Similarly, we say that $\lambda \in \mathsf{S}_M$ and $\nu \in \mathsf{S}_M$ interlace, and still write $\lambda \prec \nu$ if:
\begin{align}
\nu_1\ge \lambda_1 \ge \nu_2 \ge \cdots \ge \nu_M \ge \lambda_M.
\end{align}
\end{definition}

We now define the notion of a half pattern, see Figure~\ref{fig:halfpattern} for an example. Symplectic and orthogonal Gelfand-Tsetlin patterns will be half patterns with additional properties.

\begin{definition}[Half patterns]\label{def:halfpat}
Let $n$ be a positive integer. A half (Gelfand-Tsetlin) pattern of length $n$ is given by a sequence of interlacing signatures $\left(\lambda^{(i)}\right)_{i=1}^{n}$ such that $\lambda^{(2i-1)},\lambda^{(2i)}\in \mathsf{S}_{i}$ and the interlacing is as follows:
\begin{align*}
\lambda^{(1)}\prec \lambda^{(2)}\prec \cdots \prec\lambda^{(n-1)}\prec \lambda^{(n)}.
\end{align*}
We call the first entries on the odd rows, namely $\lambda_{i}^{(2i-1)}$, the odd-starters.
\end{definition}

We arrive to the definition of a symplectic Gelfand-Tsetlin pattern, see Figure~\ref{fig:symplpattern} for an illustration.

\begin{definition}[Symplectic patterns]\label{def:symplpat}
Let $n$ be a positive integer. A $(2n)$-symplectic Gelfand-Tsetlin pattern $P=\left(\lambda^{(i)}\right)_{i=1}^{2n}$ is a half pattern of length $2n$ all of whose entries are non-negative integers. For fixed complex numbers $(x_1,\dots,x_n)$ we associate to the pattern $P$ a weight $w_{sp}(P)$ (dependence on $x_1,\dots,x_n$ is suppressed from the notation and will be clear from context in what follows) given by:
\begin{align*}
w_{sp}(P)=\prod_{i=1}^{n}x_i^{\sum_{j=1}^{i}\lambda_j^{(2i)}-2\sum_{j=1}^{i}\lambda_j^{(2i-1)}+\sum_{j=1}^{i-1}\lambda_j^{(2i-2)}},
\end{align*}
with $\lambda^{(0)}\equiv 0$. For $\nu \in \mathsf{S}_M^+$, we write $SP_{\nu}$ for the set of all $(2M)$-symplectic Gelfand-Tsetlin patterns with top row $\lambda^{(2M)}=\nu$. 
\end{definition}

We now give the combinatorial definition of the symplectic Schur polynomial as a sum of weights over symplectic patterns.
\begin{definition}[Symplectic Schur polynomial]\label{CombinatorialFormulaSymplectic}
Let $\nu \in \mathsf{S}_M^+$. We define the symplectic Schur polynomial by:
\begin{align}
sp^{(2M)}_{\nu}\left(x_1,\dots,x_M\right)=\sum_{P \in SP_{\nu}}^{}w_{sp}(P).
\end{align}
\end{definition}
It can be shown (see \cite{Proctor}) that this combinatorial definition coincides with the following determinantal form given by the Weyl character formula:
\begin{align*}
sp^{(2M)}_{\nu}\left(x_1,\dots,x_M\right)=\frac{\det\left(x_i^{\nu_j+M-j+1}-x_i^{-(\nu_j+M-j+1)}\right)_{i,j=1}^M}{\det\left(x_i^{M-j+1}-x_i^{-(M-j+1)}\right)_{i,j=1}^M}.
\end{align*}

We move on to the definition of orthogonal patterns. This is slightly more involved than the symplectic case since some of the elements are now permitted to be negative.  We will use the notation
\begin{align*}
\textnormal{sgn}(x)=
\begin{cases} +1,& x\ge 0 \\
-1,&  x<0.
\end{cases}
\end{align*}

\begin{definition}[Orthogonal patterns]\label{def:orthopat}
Let $n$ be a positive integer. A $(2n-1)$-orthogonal Gelfand-Tsetlin pattern $P=\left(\lambda^{(i)}\right)_{i=1}^{2n-1}$ is a half pattern of length $2n-1$ all of whose entries are either all integers or all half-integers\footnote{It transpires that for our application the entries of $(2n-1)$-orthogonal Gelfand-Tsetlin patterns are always all integers.} and which moreover satisfy:
\begin{itemize}
\item All entries except odd-starters are non-negative.
\item The odd-starters satisfy $|\lambda_i^{(2i-1)}|\le \min \{\lambda_{i-1}^{(2i-2)},\lambda_i^{(2i)}\}$ for $i=2,\dots,n-1$ and moreover \hbox{$|\lambda_1^{(1)}|\le \lambda_1^{(2)}$} and $|\lambda_n^{(2n-1)}|\le \lambda_{n-1}^{(2n-2)}$.
\end{itemize}
For fixed complex numbers $(x_1,\dots,x_n)$ we associate to the pattern $P$ a weight $w_{o}(P)$ given by:
\begin{align*}
w_{o}(P)=\prod_{i=1}^{n}x_i^{\textnormal{sgn}(\lambda_i^{(2i-1)})\textnormal{sgn}(\lambda_{i-1}^{(2i-3)})\left[\sum_{j=1}^{i}|\lambda_j^{(2i-1)}|-2\sum_{j=1}^{i-1}|\lambda_j^{(2i-2)}|+\sum_{j=1}^{i-1}|\lambda_j^{(2i-3)}|\right]},
\end{align*}
with $\lambda^{(0)},\lambda^{(-1)}\equiv 0$. For $\nu \in \mathsf{S}_M$, we write $OP_{\nu}$ for the set of all $(2M-1)$-orthogonal Gelfand-Tsetlin patterns with top row $\lambda^{(2M-1)}=\nu$. 
\end{definition}

See Figure~\ref{fig:orthopattern} for an example of an orthogonal Gelfand-Tsetlin pattern. 

As in the symplectic case, we have the following combinatorial definition of the orthogonal Schur polynomial as a sum of weights over orthogonal patterns.

\begin{definition}[Orthogonal Schur polynomial]\label{def:orthoschur}
Let $\nu \in \mathsf{S}_M^+$. We define the orthogonal Schur polynomial by:
\begin{align}
o^{(2M)}_{\nu}\left(x_1,\dots,x_M\right)=\sum_{P \in OP_{\nu}\cup OP_{\nu^-}}^{}w_{o}(P).
\end{align}
\end{definition}
Again, it can be shown (see \cite{Proctor}) that this combinatorial definition coincides with the following determinantal expression given by the Weyl character formula:
\begin{align*}
o^{(2M)}_{\nu}\left(x_1,\dots,x_M\right)=\frac{2\det\left(x_i^{\nu_j+M-j}+x_i^{-(\nu_j+M-j)}\right)_{i,j=1}^M}{\det\left(x_i^{M-j}+x_i^{-(M-j)}\right)_{i,j=1}^M}.
\end{align*}

%We give a final definition:
%\begin{definition}
%Let $\lambda \in \mathsf{S}_M^+$. We define
%\begin{align}
%so^{(2M)}_{\lambda}\left(x_1,\dots,x_M\right)&=\sum_{P \in OP_{\lambda}}^{}w_{o}(P),\\
%so^{(2M)}_{\lambda^-}\left(x_1,\dots,x_M\right)&=\sum_{P \in OP_{\lambda^-}}^{}w_{o}(P).
%\end{align}
%\end{definition}
%
%As before, we have equivalent, see equations (2.13) and (2.14) in \cite{biszyg19}, expressions in terms of determinants:
%\begin{align*}
%so^{(2M)}_{\lambda}\left(x_1,\dots,x_M\right)&=\frac{\det\left(x_i^{\lambda_j+M-j}+x_i^{-(\lambda_j+M-j)}\right)_{i,j=1}^M+\det\left(x_i^{\lambda_j+M-j}-x_i^{-(\lambda_j+M-j)}\right)_{i,j=1}^M}{\det\left(x_i^{M-j}+x_i^{-(M-j)}\right)_{i,j=1}^M},\\
%so^{(2M)}_{\lambda^-}\left(x_1,\dots,x_M\right)&=\frac{\det\left(x_i^{\lambda_j+M-j}+x_i^{-(\lambda_j+M-j)}\right)_{i,j=1}^M-\det\left(x_i^{\lambda_j+M-j}-x_i^{-(\lambda_j+M-j)}\right)_{i,j=1}^M}{\det\left(x_i^{M-j}+x_i^{-(M-j)}\right)_{i,j=1}^M}.
%\end{align*}
%Observe that, by adding $so^{(2M)}_{\lambda}$ and $so^{(2M)}_{\lambda^-}$ we get $o^{(2M)}_{\lambda}$ and clearly the combinatorial definitions are consistent as well.

%+++++++++++++++++++++++%
%   G-T Pattern Figs    %
%+++++++++++++++++++++++%
%------------------%
% Half pattern     %
%------------------%
\begin{figure}[!htb]
\centering
\begin{tikzpicture}
\node at (0,0) {$\lambda^{(1)}_{1}$};
\node at (1,1) {$\lambda^{(2)}_{1}$};
\node at (0,2) {$\lambda^{(3)}_{2}$};
\node at (2,2) {$\lambda^{(3)}_{1}$};
\node at (1,3) {$\lambda^{(4)}_{2}$};
\node at (3,3) {$\lambda^{(4)}_{1}$};

\draw (0.5,0.5) node[rotate=45] {$\leq$};
\draw (0.5,1.5) node[rotate=-45] {$\leq$};
\draw (1.5,1.5) node[rotate=45] {$\leq$};
\draw (1.5,2.5) node[rotate=-45] {$\leq$};
\draw (2.5,2.5) node[rotate=45] {$\leq$};
\end{tikzpicture}

\caption{A half pattern of length 4, $(\lambda^{(i)})_{i=1}^4$, with the interlacing explicitly shown.}\label{fig:halfpattern}
\end{figure}
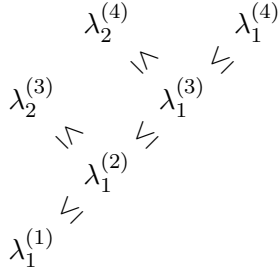

%------------------------------------------%
% Sympl and Orthog pattern examples        %
%------------------------------------------%
\begin{figure}[!htb]
\centering
\begin{subfigure}[t]{.48\textwidth}
\centering
\begin{tikzpicture}
\node at (0,0) {$1$};
\node at (1,1) {$2$};
\node at (0,2) {$1$};
\node at (2,2) {$2$};
\node at (1,3) {$2$};
\node at (3,3) {$3$};
\draw (-0.5,-0.5) -- (3,-0.5);
\node at (1,-1) {$w_{sp}(P)=x_2$};
\end{tikzpicture}
\caption{An example of a (4)-symplectic Gelfand-Tsetlin pattern $P$, with its corresponding weight $w_{sp}(P)$ below for some complex numbers $x_1, x_2$ as appearing in Definition~\ref{def:symplpat}.}\label{fig:symplpattern}
\end{subfigure}\hfill
\begin{subfigure}[t]{.48\textwidth}
\centering
\begin{tikzpicture}
\node at (0,0) {$-1$};
\node at (1,1) {$1$};
\node at (0,2) {$0$};
\node at (2,2) {$2$};
\node at (1,3) {$2$};
\node at (3,3) {$2$};
\node at (0,4) {$-2$};
\node at (2,4) {$2$};
\node at (4,4) {$4$};
\draw (-0.5,-0.5) -- (3,-0.5);
\node at (1.5,-1) {$w_{o}(P)=(x_1x_2x_3^2)^{-1}$};
\end{tikzpicture}
\caption{An example of a (5)-orthogonal Gelfand-Tsetlin pattern $P$, with its corresponding weight $w_{o}(P)$ below for some complex numbers $x_1, x_2, x_3$ as appearing in Definition~\ref{def:orthopat}.}\label{fig:orthopattern}
\end{subfigure}
\caption{Figures giving examples of symplectic and orthogonal Gelfand-Tsetlin patterns.}
\end{figure}
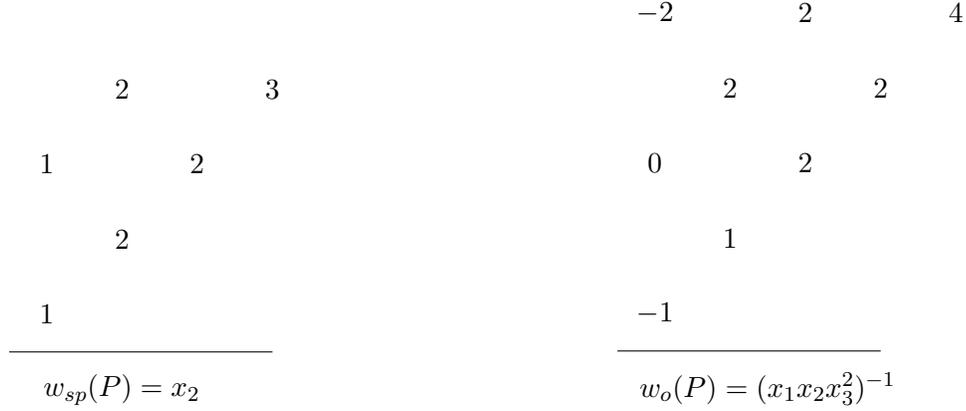

%+++++++++++++++++++++++%
%       B-G Defs        %
%+++++++++++++++++++++++%

\subsection{Averages of products of characteristic polynomials as Schur polynomials}\label{sec:bgdefs} 

We have the following results due to Bump and Gamburd, see Sections 5 and 6 in \cite{bumgam06} (note that \cite{bumgam06} uses the equivalent definition of Schur polynomials in terms of determinants). These relate products of characteristic polynomials averaged (with respect to Haar measure) over the classical compact groups with Schur  polynomials.

\begin{proposition}\label{BumpGamburdSymplectic} Let $M$ be a positive integer and $x_1,\dots,x_M$ be complex numbers. Then,
\begin{align} 
\mathbb{E}_{g \in Sp(2N)}\left[\prod_{j=1}^{M}\det \left(I-x_jg\right)\right]&=(x_1\cdots x_M)^N sp^{(2M)}_{\langle N^M\rangle}\left(x_1,\dots,x_M\right).
\end{align}
\end{proposition}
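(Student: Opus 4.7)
The plan is to follow the strategy of Bump and Gamburd~\cite{bumgam06}, combining the Weyl integration formula for $Sp(2N)$ with Andr\'eief's (continuous Cauchy--Binet) identity, and then matching the outcome against the Weyl character formula for the symplectic Schur polynomial recorded immediately after Definition~\ref{CombinatorialFormulaSymplectic}.

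First, I would parametrise $g\in Sp(2N)$ by its eigenangles, writing the eigenvalues as $u_n^{\pm 1}=e^{\pm\i\phi_n}$ for $n=1,\dots,N$, so that
\[
\prod_{j=1}^M\det(I-x_j g)=\prod_{j=1}^{M}\prod_{n=1}^{N}(1-x_j u_n)(1-x_j u_n^{-1}).
\]
By the Weyl integration formula, the Haar average equals
\[
\frac{1}{Z_N}\int_{[0,\pi]^N}\prod_{j,n}(1-x_j u_n)(1-x_j u_n^{-1})\,\bigl|\Delta_{Sp}(\phi)\bigr|^{2}\,d\phi_1\cdots d\phi_N,
\]
where $\Delta_{Sp}(\phi)=\det\bigl(u_n^{k}-u_n^{-k}\bigr)_{n,k=1}^{N}$ is the symplectic Weyl denominator and $Z_N$ is the associated normalising constant.

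Second, I would establish an auxiliary determinantal identity (essentially the symplectic analogue of the Cauchy / dual-Cauchy identity) which rewrites
\[
\prod_{j=1}^{M}\prod_{n=1}^{N}(1-x_j u_n)(1-x_j u_n^{-1})
\]
as a ratio whose numerator is an $(N+M)\times(N+M)$ determinant in the combined variables $\{u_n\}\cup\{x_i\}$ and whose denominator is $\det\bigl(x_i^{M-j+1}-x_i^{-(M-j+1)}\bigr)_{i,j=1}^M$. Expanding the numerator along the rows indexed by $x_i$ produces the prefactor $(x_1\cdots x_M)^N$ together with an $N\times N$ minor in the $u_n$-variables whose rows have entries shifted upwards to $u_n^{N+M-k+1}-u_n^{-(N+M-k+1)}$, the shift by $N$ reflecting the top row $\langle N^M\rangle$.

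Third, inserting this identity back into the Haar integral, the factor $|\Delta_{Sp}(\phi)|^{2}$ combines with the shifted $N\times N$ minor to produce a product of two $N\times N$ determinants in the $u_n$. Applying Andr\'eief's identity performs the $\phi$-integration term by term and collapses the result to an $M\times M$ determinant in the $x_i$ of exactly the form appearing in the Weyl character formula for $sp^{(2M)}_{\langle N^M\rangle}$; dividing by $\det\bigl(x_i^{M-j+1}-x_i^{-(M-j+1)}\bigr)_{i,j=1}^{M}$ then identifies the answer as $(x_1\cdots x_M)^N\,sp^{(2M)}_{\langle N^M\rangle}(x_1,\ldots,x_M)$, as claimed.

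The main obstacle I anticipate is the second step: producing the determinantal identity cleanly and tracking signs, exponents, and normalisations carefully, so that the shifts match the signature $\langle N^M\rangle$ and the prefactor $(x_1\cdots x_M)^N$ appears with the correct power. The remaining ingredients---the Weyl integration formula and Andr\'eief's identity---are standard, and the computation of $Z_N$ drops out because it is itself evaluated by the same mechanism when one specialises to $x_j=0$.
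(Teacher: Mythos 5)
The paper does not prove this Proposition: it is quoted verbatim from Bump and Gamburd \cite{bumgam06}, Sections 5--6, and Section~\ref{sec:bgdefs} simply records it. So there is no in-paper proof for you to have matched or diverged from, and the comparison must be against the cited source. Your overall strategy --- Weyl integration, a Cauchy-type determinantal identity, Andr\'eief (de Bruijn), and the Weyl character formula for $sp^{(2M)}_{\langle N^M\rangle}$ --- is indeed in the spirit of the Bump--Gamburd argument and is a legitimate route to the result. Your steps 1, 3 and 4 are standard and correctly identified.

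However, step 2, which you correctly flag as the crux, is misstated in a way that would trip you up when filling in the details. The prefactor $(x_1\cdots x_M)^N$ does not come from Laplace-expanding a big determinant along the $x_i$-rows; it appears at the very start via the elementary termwise identity $(1-x u)(1-x u^{-1}) = x\bigl(x+x^{-1}-u-u^{-1}\bigr)$, which gives
\[
\prod_{j=1}^{M}\prod_{n=1}^{N}(1-x_j u_n)(1-x_j u_n^{-1})
=(x_1\cdots x_M)^N\prod_{j=1}^{M}\prod_{n=1}^{N}\bigl(x_j+x_j^{-1}-u_n-u_n^{-1}\bigr).
\]
Only the remaining double product is a resultant-type object, and the correct alternant identity for it has as its denominator the product of \emph{both} symplectic Weyl denominators, $\det\bigl(x_i^{M-j+1}-x_i^{-(M-j+1)}\bigr)_{i,j=1}^M$ \emph{and} $\det\bigl(u_n^{N-j+1}-u_n^{-(N-j+1)}\bigr)_{n,j=1}^N$; the latter then cancels against one factor of $\Delta_{Sp}(\phi)$ from the Weyl density. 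Moreover, Laplace expansion of the $(N+M)\times(N+M)$ alternant along the $M$ rows indexed by the $x_i$ produces a \emph{sum over $M$-element column subsets} of products of complementary minors, not a single $N\times N$ minor; it is only after integrating against $\Delta_{Sp}(u)$ and invoking the orthogonality of $u^a-u^{-a}$ on the circle (equivalently, Andr\'eief/de Bruijn) that the sum collapses and one is left with the single $M\times M$ determinant $\det\bigl(x_i^{N+M-j+1}-x_i^{-(N+M-j+1)}\bigr)_{i,j=1}^M$, exactly matching the numerator of the Weyl character formula for $sp^{(2M)}_{\langle N^M\rangle}$. With these corrections, your outline becomes a valid (and essentially the standard) proof; as written, the attribution of the prefactor and the structure of the determinant expansion are off, and you should also cite the paper's practice of simply referring to \cite{bumgam06} rather than reproving this.
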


\begin{proposition}\label{BumpGamburdOrthogonal} Let $M$ be a positive integer and $x_1,\dots,x_M$ be complex numbers. Then,
\begin{align}
\mathbb{E}_{g \in SO(2N)}\left[\prod_{j=1}^{M}\det \left(I-x_jg\right)\right]&=(x_1\cdots x_M)^N o^{(2M)}_{\langle N^M\rangle}\left(x_1,\dots,x_M\right).
\end{align}
\end{proposition}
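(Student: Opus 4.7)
The plan is to mirror the derivation by Bump and Gamburd, replacing the combinatorial definition of $o^{(2M)}_\nu$ by the equivalent Weyl character (determinantal) expression recorded immediately after Definition~\ref{def:orthoschur}, and then computing the $SO(2N)$ average via the Weyl integration formula. Parametrising the eigenvalues of $g\in SO(2N)$ as $e^{\pm \i \theta_1},\ldots,e^{\pm \i \theta_N}$ and setting $z_k = e^{\i \theta_k}$, the average becomes
\begin{align*}
\frac{1}{N!\,(2\pi)^N}\int_{[0,2\pi)^N}\prod_{j=1}^M\prod_{k=1}^N(1-x_jz_k)(1-x_jz_k^{-1})\cdot\Delta_D(\theta)^2\,d^N\theta,
\end{align*}
where $\Delta_D(\theta)=\prod_{j<k}(2\cos\theta_j-2\cos\theta_k)$ is, up to a constant, the type $D_N$ Weyl denominator.

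Using the factorisation $(1-x_jz_k)(1-x_jz_k^{-1})=-x_j(z_k+z_k^{-1}-x_j-x_j^{-1})$ I would first pull out the prefactor $(x_1\cdots x_M)^N$ (the signs $(-1)^{MN}$ being absorbed by the Weyl character formula). Representing $\Delta_D(\theta)$ as a determinant in the variables $z_k+z_k^{-1}$ and expanding the product $\prod_j(z_k+z_k^{-1}-x_j-x_j^{-1})$ as a cofactor inside a larger $(N+M)\times(N+M)$ matrix, the Andreief (Cauchy--Binet) identity then collapses the $N$-fold integral to a single $M\times M$ determinant whose entries are elementary contour integrals in one variable picking out the constant Fourier coefficient. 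Evaluating these by residues produces a determinant in the monomials $x_i^{\pm(N+M-j)}$, which, together with the Vandermonde-type factor in the $x_i$'s that survives the reduction, matches the numerator and denominator of
\begin{align*}
o^{(2M)}_{\langle N^M\rangle}(x_1,\ldots,x_M)=\frac{2\det\!\left(x_i^{N+M-j}+x_i^{-(N+M-j)}\right)_{i,j=1}^M}{\det\!\left(x_i^{M-j}+x_i^{-(M-j)}\right)_{i,j=1}^M},
\end{align*}
completing the identification.

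The main obstacle I expect is the careful bookkeeping of the type $D_N$ Weyl denominator: the $D_N$ denominator has a mild asymmetry --- reflected in the factor of $2$ in the numerator of the orthogonal Weyl character formula and in the constant-last-column phenomenon of its denominator --- that is absent in the type $C_N$ case relevant to Proposition~\ref{BumpGamburdSymplectic}. Correctly pairing the basis functions on the two sides of the Andreief step so as to reproduce exactly this factor of $2$, rather than a spurious power of $2$ or a miscount of signs, is the one technical point that requires care. Once that identification is made, the remainder of the argument is a direct residue computation entirely parallel to the symplectic case, and the same template (with the obvious substitution of the $C_N$ Weyl denominator and the $sp^{(2M)}_{\langle N^M\rangle}$ character formula) yields Proposition~\ref{BumpGamburdSymplectic} as well.
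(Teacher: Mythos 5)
The paper itself offers no proof of Proposition~\ref{BumpGamburdOrthogonal}: it is recorded as a known result and attributed directly to Sections 5 and 6 of \cite{bumgam06}, so there is no in-text argument against which your attempt can be checked. A more substantive observation is that what you describe is not actually Bump and Gamburd's derivation. Their argument is a symmetric-function one: they expand $\prod_j\det(I-x_jg)$ over all signatures via the dual Cauchy identity, so that the group average is reduced to computing $\int_{SO(2N)} s_\lambda$ for Schur polynomials evaluated on the eigenvalues, and that integral is evaluated by Littlewood-type branching identities rather than by any explicit $N$-fold eigenvalue integral. Your plan is instead the Weyl-integration-plus-Andreief strategy familiar from Conrey--Farmer--Keating--Rubinstein--Snaith and Baik--Rains. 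Both routes are legitimate, and they buy different things: the Bump--Gamburd route dispenses with sign bookkeeping and collapses to a one-line computation once the Cauchy and Littlewood identities are in hand, whereas the determinantal route is self-contained Laurent-polynomial algebra but demands more care at the reduction step.

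On that step your sketch is thinner than the argument it would need to be. After absorbing the factors $z_k+z_k^{-1}-x_j-x_j^{-1}$ into the Vandermonde you are left integrating a product of an $N\times N$ determinant (the remaining copy of the $D_N$ denominator) against an $(N+M)\times(N+M)$ determinant in the combined variables; the Andreief/Cauchy--Binet identity as usually stated applies to two determinants of the \emph{same} size, so what is actually needed is the partial de~Bruijn--Andreief lemma obtained by Laplace-expanding the larger determinant along the $M$ rows in the $x$-variables and repackaging the resulting sum of $N\times N$ Gram determinants as a single $M\times M$ determinant. You have correctly anticipated the other delicate point, the $D_N$ denominator asymmetry that produces the factor $2$ and the constant last column in the orthogonal Weyl character formula. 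Both issues can be resolved, and the determinantal approach has been carried through in the literature, but they constitute the core of the proof rather than peripheral bookkeeping, so a complete write-up would have to spell them out rather than refer to ``the Andreief identity'' as a black box.
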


%\begin{proposition}Let $M$ be a positive integer and $x_1,\cdots,x_M$ be complex numbers. Then,
%\begin{align}
%\mathbb{E}_{g \in O^-(2N)}\left[\prod_{j=1}^{M}\det \left(I+x_jg\right)\right]&=(x_1\cdots x_M)^N so^{(2M)}_{(N,\dots,N)^-}\left(x_1,\dots,x_M\right).
%\end{align}
%\end{proposition}
%
%\begin{proposition}Let $M$ be a positive integer and $x_1,\cdots,x_M$ be complex numbers. Then,
%\begin{align}
%\mathbb{E}_{g \in O(2N)}\left[\prod_{j=1}^{M}\det \left(I+x_jg\right)\right]&=(x_1\cdots x_M)^N so^{(2M)}_{(N,\dots,N)}\left(x_1,\dots,x_M\right).
%\end{align}
%\end{proposition}

In our applications below we will be taking particular choices of the complex numbers $x_1,\dots,x_M$ lying on the unit circle in the complex plane for some even integer $M$.

%+++++++++++++++++++++++%
%    Lattice asympts    %
%+++++++++++++++++++++++%

\subsection{Asymptotics of the number of lattice points in convex sets}\label{sec:lattice} 

We have the following theorem on the number of lattice points in convex regions of Euclidean space, see for example Section 2 in \cite{LatticePointCount}. 
\begin{theorem}\label{LatticePointCountTheorem}
Assume $\mathcal{S} \subset \mathbb{R}^{L}$ is a convex region contained in a closed ball of radius $\rho$. Then,
\begin{align}
\#\left(\mathcal{S} \cap \mathbb{Z}^L \right)=\textnormal{vol}_L\left(\mathcal{S}\right)+O_L\left(\rho^{L-1}\right),
\end{align}
where the implicit constant in the error term depends only on $L$.
\end{theorem}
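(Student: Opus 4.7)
The plan is to compare the lattice-point count to the volume by a standard tile-counting argument. For each $n\in\mathbb{Z}^L$, let $Q_n\coloneqq n+[-\tfrac12,\tfrac12)^L$ be the half-open unit cube centred at $n$. Since $\{Q_n\}_{n\in\mathbb{Z}^L}$ partitions $\mathbb{R}^L$, one has $\textnormal{vol}_L(\mathcal{S})=\sum_n\textnormal{vol}_L(Q_n\cap\mathcal{S})$ and $\#(\mathcal{S}\cap\mathbb{Z}^L)=\sum_n\mathbf{1}\{n\in\mathcal{S}\}$, so
\[
\#(\mathcal{S}\cap\mathbb{Z}^L)-\textnormal{vol}_L(\mathcal{S})=\sum_{n\in\mathbb{Z}^L}\Bigl(\mathbf{1}\{n\in\mathcal{S}\}-\textnormal{vol}_L(Q_n\cap\mathcal{S})\Bigr).
\]
Each summand vanishes when $Q_n\subseteq\mathcal{S}$ or $Q_n\cap\mathcal{S}=\emptyset$, so only those $n$ with $Q_n\cap\partial\mathcal{S}\neq\emptyset$ contribute, and each such term is of modulus at most $1$.

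It therefore suffices to bound $N(\mathcal{S})\coloneqq\#\{n\in\mathbb{Z}^L:Q_n\cap\partial\mathcal{S}\neq\emptyset\}$ by $O_L(\rho^{L-1})$. Any such $Q_n$ is contained in the closed $\sqrt{L}$-tube $T\coloneqq\{x\in\mathbb{R}^L:\textnormal{dist}(x,\partial\mathcal{S})\le\sqrt{L}\}$; since the cubes $Q_n$ are pairwise disjoint and of unit volume, $N(\mathcal{S})\le\textnormal{vol}_L(T)$. The problem has thus been reduced to a convex-geometric tube estimate.

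This is where convexity of $\mathcal{S}$ becomes essential: without it, $\partial\mathcal{S}$ could have arbitrarily large Hausdorff measure relative to $\rho^{L-1}$. For convex $\mathcal{S}\subseteq B(0,\rho)$, surface area is monotone under inclusion — one way to see this is that the nearest-point projection onto $\mathcal{S}$ is a $1$-Lipschitz retraction of $\mathbb{R}^L$ onto $\mathcal{S}$ — and so $\mathcal{H}^{L-1}(\partial\mathcal{S})\le \mathcal{H}^{L-1}(\partial B(0,\rho))=L\omega_L\rho^{L-1}$. The Steiner tube formula then controls $\textnormal{vol}_L(T)$ by a polynomial in the tube radius $\sqrt{L}$ whose coefficients are quermass integrals of $\mathcal{S}$; these are likewise monotone under inclusion and hence dominated by the corresponding quantities for $B(0,\rho)$, giving $\textnormal{vol}_L(T)=O_L(\rho^{L-1})$ as required. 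The main obstacle is this convex-geometric input: the tile-counting reduction is transparent, but the uniform control over the thickness of $\partial\mathcal{S}$ genuinely relies on convexity and the classical monotonicity of intrinsic volumes.
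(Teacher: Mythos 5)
The paper does not give its own proof of this theorem: it is quoted as a known result and attributed to Section~2 of Schmidt's paper (reference \cite{LatticePointCount}), so there is no in-paper argument to compare against. Taking your proof on its own terms: the tile-counting reduction is clean and correct. You observe that the discrepancy is supported on cubes meeting $\partial\mathcal{S}$, that these cubes pack disjointly into the $\sqrt{L}$-tube $T$ around $\partial\mathcal{S}$, and hence $N(\mathcal{S})\le\mathsf{vol}_L(T)$. The surface-area monotonicity argument via the $1$-Lipschitz nearest-point retraction is also fine (one checks, as you implicitly do, that the retraction maps $\partial B(0,\rho)$ \emph{onto} $\partial\mathcal{S}$, which follows from the supporting-hyperplane description of the metric projection).

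The one step that is not quite right as written is the appeal to ``the Steiner tube formula'' to control $\mathsf{vol}_L(T)$. Steiner's formula is a statement about the \emph{outer} parallel body $\mathcal{S}_r=\mathcal{S}+\overline{B}(0,r)$; it gives $\mathsf{vol}_L(\mathcal{S}_r)-\mathsf{vol}_L(\mathcal{S})$ as a polynomial in $r$ with quermassintegral coefficients, which together with monotonicity handles the outer half $T\setminus\mathcal{S}\subset\mathcal{S}_{\sqrt{L}}\setminus\mathcal{S}$. But $T$ also has an inner half $T\cap\mathcal{S}=\mathcal{S}\setminus\mathcal{S}_{-\sqrt{L}}$ (here $\mathcal{S}_{-r}$ denotes the inner parallel body), and there is no Steiner-type polynomial expansion for inner parallel bodies of general convex bodies, so your cited formula does not apply to this part of $T$. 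The gap is easily closed with the same tools: either (i) use the coarea formula for the $1$-Lipschitz function $x\mapsto\operatorname{dist}(x,\partial\mathcal{S})$ to write $\mathsf{vol}_L(\mathcal{S})-\mathsf{vol}_L(\mathcal{S}_{-r})=\int_0^r\mathcal{H}^{L-1}(\partial\mathcal{S}_{-t})\,dt$ and then invoke your surface-area monotonicity ($\mathcal{S}_{-t}\subseteq\mathcal{S}\subseteq B(0,\rho)$) to bound each slice by $L\omega_L\rho^{L-1}$; or (ii) bypass $T$ altogether via the sandwich $\mathsf{vol}_L(\mathcal{S}_{-\sqrt{L}})\le\#\{n:Q_n\subseteq\mathcal{S}\}\le\min\bigl(\#(\mathcal{S}\cap\mathbb{Z}^L),\,\mathsf{vol}_L(\mathcal{S})\bigr)\le\max(\,\cdot\,)\le\#\{n:Q_n\cap\mathcal{S}\neq\emptyset\}\le\mathsf{vol}_L(\mathcal{S}_{\sqrt{L}})$, and bound $\mathsf{vol}_L(\mathcal{S}_{\sqrt{L}})-\mathsf{vol}_L(\mathcal{S}_{-\sqrt{L}})$ by splitting at $\mathsf{vol}_L(\mathcal{S})$ and applying Steiner plus the coarea bound to the two halves. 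With either patch your argument is complete and is the standard, self-contained route to this estimate. (One should also note that the bound $O_L(\rho^{L-1})$ implicitly requires $\rho$ bounded below, say $\rho\ge 1$; otherwise a singleton $\mathcal{S}=\{x_0\}$ with $x_0\in\mathbb{Z}^L$ violates it for $L\ge 2$. This is harmless in the paper's application, where $\rho\asymp N\to\infty$.)
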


We will prove our main results on the asymptotics of the moments of the moments by applying the theorem above with some judicious choices (different for each group) of the convex set $\mathcal{S}$.

%+++++++++++++++++++++++%
%      CFKRS defs       %
%+++++++++++++++++++++++%

\subsection{Averages of products of characteristic polynomials as combinatorial sums}\label{sec:permsum} 

Instead of expressing the averages of products of characteristic polynomials over the various matrix groups in terms of their Schur polynomials, one can instead view them as combinatorial sums.  These descriptions follow from work of Conrey et al.~\cite{cfkrs1} and will be used when determining the polynomial structure of the moments of moments. 

\begin{proposition}\label{combsumsympl} 
Let $M$ be a positive integer and $x_1,\dots,x_M$ be complex numbers. Then,
\begin{align*} 
\mathbb{E}_{g \in Sp(2N)}\left[\prod_{j=1}^{M}\det \left(I-x_jg\right)\right]&=(x_1\cdots x_M)^N\sum_{\varepsilon_j\in\{-1,1\}}\frac{\prod_{j=1}^Mx_j^{\varepsilon_j N}}{\prod_{1\leq i\leq j\leq M}(1-x_i^{-\varepsilon_i}x_j^{-\varepsilon_j})}
\end{align*}
\end{proposition}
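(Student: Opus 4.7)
By Proposition \ref{BumpGamburdSymplectic}, the claim reduces to the character identity
\begin{equation*}
sp^{(2M)}_{\langle N^M \rangle}(x_1,\ldots,x_M) = \sum_{\vec\varepsilon \in \{\pm 1\}^M} \frac{\prod_{j=1}^M x_j^{\varepsilon_j N}}{\prod_{1 \le i \le j \le M}(1 - x_i^{-\varepsilon_i} x_j^{-\varepsilon_j})},
\end{equation*}
which I would prove by invoking the sum form of the Weyl character formula for type $C_M$ and then performing a Cauchy--Vandermonde cancellation over $S_M$.

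The first step is to write
\begin{equation*}
sp^{(2M)}_\lambda(x) = \sum_{w \in W(C_M)} \frac{w(e^\lambda)}{\prod_{\alpha \in \Delta^+}(1 - e^{-w(\alpha)})},
\end{equation*}
where $W(C_M) = S_M \ltimes \{\pm 1\}^M$ is the group of signed permutations and $\Delta^+ = \{\epsilon_i \pm \epsilon_j : i<j\} \cup \{2\epsilon_i\}$; I parametrize $w = (\sigma, \vec\varepsilon)$ by $w(\epsilon_i) = \varepsilon_i \epsilon_{\sigma(i)}$. For the constant partition $\lambda = N \sum_i \epsilon_i$ the numerator $w(e^\lambda) = \prod_k x_k^{\delta_k N}$ depends only on the relabeled signs $\delta_k := \varepsilon_{\sigma^{-1}(k)}$. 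Tracking the $w$-image of each positive root: the factors from $\epsilon_i + \epsilon_j$ ($i<j$) and from $2\epsilon_i$ are invariant under the $S_M$-part of $W$ after relabeling, and together they give $\prod_{1 \le i \le j \le M}(1 - x_i^{-\delta_i} x_j^{-\delta_j})^{-1}$ (noting that the $i=j$ entries are exactly the contributions of the $2\epsilon_i$ roots). The entire $\sigma$-dependence is therefore concentrated in the short-root contributions, which in the variables $y_p := x_p^{\delta_p}$ take the form $\prod_{i<j}(1 - y_{\sigma(i)}^{-1} y_{\sigma(j)})^{-1}$.

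The second step is to collapse the $S_M$-sum, for each fixed $\vec\delta$, using the identity
\begin{equation*}
\sum_{\sigma \in S_M} \prod_{1 \le i<j \le M}\frac{1}{1 - y_{\sigma(i)}^{-1} y_{\sigma(j)}} = 1,
\end{equation*}
which follows by rewriting $(1 - y_a^{-1} y_b)^{-1} = y_a/(y_a - y_b)$ and applying the Vandermonde identity $\sum_{\sigma} \mathrm{sgn}(\sigma) \prod_i y_{\sigma(i)}^{M-i} = \prod_{i<j}(y_i - y_j)$; equivalently, it is the Weyl character formula for $s_{\langle 0^M \rangle} \equiv 1$ in type $A_{M-1}$. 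Substituting back and relabeling $\vec\delta \mapsto \vec\varepsilon$ in the outer sum yields the asserted identity, and multiplying by $(x_1 \cdots x_M)^N$ recovers the statement of the proposition via Proposition \ref{BumpGamburdSymplectic}. The main technical step is the bookkeeping in step one: tracking how a signed permutation acts on each of the three classes of positive roots and confirming that the $\sigma$-dependence really does separate cleanly so that the $S_M$-cancellation applies. Beyond this, the argument is routine manipulation of symmetric functions and parallels the derivation of the analogous formula in Conrey et al.~\cite{cfkrs1}.
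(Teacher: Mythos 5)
Your proposal is correct, and it offers a genuine alternative to what the paper does: the paper does not actually prove Proposition~\ref{combsumsympl} but simply cites Conrey--Farmer--Keating--Rubinstein--Snaith~\cite{cfkrs1}, where the corresponding identity is derived by an entirely different technique (essentially contour-integral / residue manipulations of symmetric-function sums, not representation theory). You instead start from the Bump--Gamburd formula (Proposition~\ref{BumpGamburdSymplectic}), write the symplectic character in the ``sum over Weyl group'' (localized) form of the Weyl character formula for $C_M$, and then observe that for the rectangular highest weight $\langle N^M\rangle$ the numerator and the contributions from the $\{e_i + e_j\}$ and $\{2e_i\}$ roots depend on the signed permutation $w=(\sigma,\vec\varepsilon)$ only through the relabeled signs $\delta_k=\varepsilon_{\sigma^{-1}(k)}$, so the $\sigma$-sum factors out and collapses to $1$ by the type-$A$ Vandermonde/denominator identity. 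I checked the bookkeeping: with $w(e_i)=\varepsilon_i e_{\sigma(i)}$ one gets $e^{-w(e_i-e_j)}=y_{\sigma(i)}^{-1}y_{\sigma(j)}$ where $y_p=x_p^{\delta_p}$, and the cancellation $\sum_\sigma\prod_{i<j}(1-y_{\sigma(i)}^{-1}y_{\sigma(j)})^{-1}=1$ indeed reduces to the Vandermonde determinant as you say; the map $(\sigma,\vec\varepsilon)\mapsto(\sigma,\vec\delta)$ is a bijection, so summing over $\sigma$ and $\delta$ independently is licit. The one thing I would soften is your closing claim that the argument ``parallels the derivation in Conrey et al.'' — their route is not via the Weyl group; your argument is the more structural/representation-theoretic one, and it is more self-contained given the ingredients already in this paper, whereas the $\cite{cfkrs1}$ derivation is independent of the Bump--Gamburd identity and thus gives a different consistency check.
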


\begin{proposition}\label{combsumorthog}
Let $M$ be a positive integer and $x_1,\dots,x_M$ be complex numbers. Then,
\begin{align*}
\mathbb{E}_{g \in SO(2N)}\left[\prod_{j=1}^{M}\det \left(I-x_jg\right)\right]&=(x_1\cdots x_M)^N\sum_{\varepsilon_j\in\{-1,1\}}\frac{\prod_{j=1}^Mx_j^{\varepsilon_j N}}{\prod_{1\leq i< j\leq M}(1-x_i^{-\varepsilon_i}x_j^{-\varepsilon_j})}
\end{align*}
\end{proposition}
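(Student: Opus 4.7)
The plan is to deduce the claim directly from Proposition~\ref{BumpGamburdOrthogonal} together with the determinantal (Weyl character) formula for the orthogonal Schur polynomial recorded in Section~\ref{sec:gfdefs}. Concretely, Proposition~\ref{BumpGamburdOrthogonal} reduces matters to proving the character identity
$$o^{(2M)}_{\langle N^M\rangle}(x_1,\dots,x_M)=\sum_{\varepsilon\in\{-1,1\}^M}\frac{\prod_{j=1}^M x_j^{\varepsilon_j N}}{\prod_{1\le i<j\le M}(1-x_i^{-\varepsilon_i}x_j^{-\varepsilon_j})},$$
after which multiplication by $(x_1\cdots x_M)^N$ yields the stated formula.

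Starting from the ratio-of-determinants expression for $o^{(2M)}_{\langle N^M\rangle}$ that follows Definition~\ref{def:orthoschur}, I would expand the numerator determinant using the row multilinearity of $\det$. Writing $y_i=x_i^{\varepsilon_i}$ and noting that row $i$ of the numerator matrix splits as $(x_i^{N+M-j})_j+(x_i^{-(N+M-j)})_j$, multilinearity gives
$$\det\!\left(x_i^{N+M-j}+x_i^{-(N+M-j)}\right)_{i,j=1}^M=\sum_{\varepsilon}\det\!\left(y_i^{N+M-j}\right)_{i,j=1}^M=\sum_{\varepsilon}\prod_{i=1}^M y_i^N\prod_{1\le i<j\le M}(y_i-y_j),$$
where the last equality pulls $y_i^N$ out of row $i$ and evaluates the resulting Vandermonde. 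In parallel, the denominator determinant factors as $2\prod_{i<j}\bigl((x_i+x_i^{-1})-(x_j+x_j^{-1})\bigr)$, as one can check either by specialising the same row-multilinearity argument to $N=0$ or by rewriting each entry as a polynomial in $u_i:=x_i+x_i^{-1}$ and reducing to a Vandermonde in the $u_i$.

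The crucial algebraic observation is the identity
$$(y_i-y_j)(1-y_i^{-1}y_j^{-1})=(y_i+y_i^{-1})-(y_j+y_j^{-1})=(x_i+x_i^{-1})-(x_j+x_j^{-1}),$$
whose right-hand side is independent of the sign vector $\varepsilon$ since $y+y^{-1}$ is invariant under $y\mapsto y^{-1}$; this is precisely the factorisation of the $D_M$ Weyl denominator into its two families of positive roots $e_i\pm e_j$. Substituting
$$\prod_{i<j}(y_i-y_j)=\frac{\prod_{i<j}\bigl((x_i+x_i^{-1})-(x_j+x_j^{-1})\bigr)}{\prod_{i<j}(1-y_i^{-1}y_j^{-1})}$$
term by term in the expanded numerator, the $\varepsilon$-independent product cancels with the factored denominator (together with the two explicit $2$'s), leaving exactly the claimed sum over $\varepsilon$. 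I do not anticipate any serious obstacle: the proof is a direct algebraic manipulation and the only conceptually interesting step is the $\varepsilon$-invariance above, which is built into the orthogonal Weyl denominator. The companion symplectic identity (Proposition~\ref{combsumsympl}) is obtained by the same method applied instead to $\det(x_i^{a_j}-x_i^{-a_j})$, with signed coefficients $\prod_i\varepsilon_i$ appearing in the row-multilinearity expansion, the $C_M$ Weyl denominator $\prod_i(x_i-x_i^{-1})\prod_{i<j}(x_i-x_j)(1-x_i^{-1}x_j^{-1})$, and the supplementary identity $1-y^{-2}=y^{-1}(y-y^{-1})$ generating the additional diagonal $i=j$ factors in the denominator.
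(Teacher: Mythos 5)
Your derivation is correct. The paper, by contrast, does not prove this statement at all: Section~\ref{sec:permsum} simply attributes both Propositions~\ref{combsumsympl} and~\ref{combsumorthog} to Conrey--Farmer--Keating--Rubinstein--Snaith~\cite{cfkrs1}, whose original derivation proceeds via multiple contour integrals and residue calculations rather than via characters. Your route — combine the Bump--Gamburd identity (Proposition~\ref{BumpGamburdOrthogonal}) with the determinantal Weyl-type formula for $o^{(2M)}_{\nu}$, expand the numerator by row multilinearity into the $2^M$ terms indexed by $\underline{\varepsilon}$, and then use the $D_M$ root-system factorisation $(y_i-y_j)(1-y_i^{-1}y_j^{-1})=(x_i+x_i^{-1})-(x_j+x_j^{-1})$ to cancel the $\varepsilon$-independent Weyl denominator — is self-contained within the ingredients the paper has already set up in Section~\ref{sec:prelims}, which is arguably a cleaner fit with the rest of the paper's machinery. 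The computation of each expanded term as $\prod_i y_i^N\prod_{i<j}(y_i-y_j)$ and the bookkeeping of the two explicit $2$'s is all in order.

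One minor caveat: of the two methods you offer for the denominator factorisation $\det\bigl(x_i^{M-j}+x_i^{-(M-j)}\bigr)_{i,j}=2\prod_{i<j}\bigl((x_i+x_i^{-1})-(x_j+x_j^{-1})\bigr)$, the first (``specialising the same row-multilinearity argument to $N=0$'') is circular: multilinearity at $N=0$ only yields $\sum_{\underline{\varepsilon}}\prod_{i<j}(y_i-y_j)$, and identifying this with $2\prod_{i<j}(u_i-u_j)$ is equivalent to the $N=0$ instance of the identity you are trying to prove. The second method — writing $x^k+x^{-k}$ as a monic degree-$k$ polynomial in $u=x+x^{-1}$ (the Chebyshev relation), performing column reduction, and reading off the Vandermonde in the $u_i$ together with the explicit factor of $2$ from the $j=M$ column — is the non-circular one and should be the one you rely on. With that method fixed as the justification, the argument is complete, and the same template (with the $C_M$ Weyl denominator $\prod_i(x_i-x_i^{-1})\prod_{i<j}(x_i-x_j)(1-x_i^{-1}x_j^{-1})$ and the supplementary sign $\prod_i\varepsilon_i$ arising when rows of $\det(x_i^{a_j}-x_i^{-a_j})$ are split) does indeed deliver Proposition~\ref{combsumsympl} as you indicate.
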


%\begin{proposition}Let $M$ be a positive integer and $x_1,\dots,x_M$ be complex numbers. Then,
%\begin{align}
%\mathbb{E}_{g \in O^-(2N)}\left[(-1)^M\prod_{j=1}^{M}\det \left(I+x_jg\right)\right]&=(x_1\cdots x_M)^N\sum_{\varepsilon_j\in\{-1,1\}}\frac{\prod_{j=1}^M\varepsilon_jx_j^{\varepsilon_j N}}{\prod_{1\leq i< j\leq M}(1-x_i^{-\varepsilon_i}x_j^{-\varepsilon_j})}
%\end{align}
%\end{proposition}

Once more, $M$ will be an even integer and we will be picking the complex numbers $x_1,\dots,x_M$ in a particular way, always lying on the unit circle in the complex plane.

%+++++++++++++++++++++++%
%     Polynomiality     %
%+++++++++++++++++++++++%

\section{Polynomial structure}\label{sec:poly} 

In this section we prove the following proposition.  This, together with results stated in Sections~\ref{sec:sympl} and ~\ref{sec:orthog} will prove Theorem \ref{MainTheoremSymplectic} and \ref{MainTheoremOrthogonal}.

\begin{proposition}\label{polystrucsympl}
Let $G(N)=Sp(2N)$, or $G(N)=SO(2N)$, and $k, \beta\in \mathbb{N}$.  Then $\mom_{G(N)}(k,\beta)$ is a polynomial function of $N$. 
\end{proposition}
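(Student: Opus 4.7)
My plan is to adapt the argument of Bailey--Keating~\cite{baikea18} for the unitary case to the symplectic and orthogonal settings, replacing the CFKRS formula for $U(N)$ with Propositions~\ref{combsumsympl} and \ref{combsumorthog}. The overall strategy is to write $\mom_{G(N)}(k,\beta)$ as a $k$-fold contour integral of a rational expression in which $N$ appears only through an explicit exponential factor, and then show that performing the integration (via a residue calculation) extracts a polynomial in $N$.

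First, I would apply Fubini's theorem to interchange the expectation with the $k$ integrals over $\theta_1,\ldots,\theta_k$. Using \eqref{ComplexConjObs} to write
\begin{align*}
|P_{G(N)}(\theta_\ell;g)|^{2\beta}=\prod_{j=1}^{\beta}\det\!\left(I-e^{-\i\theta_\ell}g\right)\prod_{j=1}^{\beta}\det\!\left(I-e^{\i\theta_\ell}g\right),
\end{align*}
the moment of moments becomes the $k$-fold angular integral of the expectation of a product of $M=2k\beta$ characteristic polynomials. I would then apply Proposition~\ref{combsumsympl} (respectively \ref{combsumorthog}) with variables $x_j$ chosen to be $e^{-\i\theta_\ell}$ for $j$ in the $\ell$-th block of size $\beta$ and $e^{\i\theta_\ell}$ for $j$ in the next block, so that $(x_1\cdots x_M)^N\equiv 1$ and the only surviving $N$-dependence sits inside $\prod_{j=1}^{M}x_j^{\varepsilon_j N}$, which becomes $\exp\!\left(\i N\sum_{\ell=1}^{k}m_\ell(\varepsilon)\theta_\ell\right)$ for integers $m_\ell(\varepsilon)$ determined by the sign choices $\varepsilon\in\{\pm 1\}^M$.

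The main obstacle is that, with this choice of $x_j$'s, many factors $(1-x_i^{-\varepsilon_i}x_j^{-\varepsilon_j})$ in the denominator vanish identically: within each block of $\beta$ equal $x_j$'s there are repeated arguments, and across blocks one has $x_ix_j=1$ for complex conjugate pairs. I would handle this in the CFKRS spirit by first introducing shifts $x_j\mapsto x_je^{\alpha_j}$ with small generic parameters $\alpha_j$, rewriting the resulting expression as a multiple contour integral over small circles around $\alpha_j=0$, and then taking the residue at $\alpha=0$. The appearance of repeated poles is what generates polynomial-in-$N$ factors: each differentiation with respect to an $\alpha_j$ that acts on the exponential $\exp\!\left(\i N\sum_\ell m_\ell(\varepsilon)\theta_\ell+N\sum_j\varepsilon_j\alpha_j\right)$ brings down a factor of $N$, so the residue at $\alpha=0$ is a polynomial in $N$ whose degree is bounded by the total order of the pole, which is itself controlled by $k$, $\beta$.

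Finally, I would perform the $k$ integrations over $\theta_1,\ldots,\theta_k$. Since the $\alpha$-residue has turned the integrand into a finite sum of terms of the form (polynomial in $N$)~$\times$~(trigonometric polynomial in the $\theta_\ell$'s)~$\times\,\exp\!\left(\i N\sum_\ell m_\ell(\varepsilon)\theta_\ell\right)$, each $\theta_\ell$-integral simply selects the Fourier coefficient of index $-Nm_\ell(\varepsilon)$ of a fixed (i.e.\ $N$-independent) trigonometric polynomial. For $N$ large enough, only finitely many sign vectors $\varepsilon$ contribute, and for each of them the selected coefficient is itself a polynomial (in fact a monomial) in $N$ coming from the residue step. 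Summing these finitely many polynomial contributions gives that $\mom_{G(N)}(k,\beta)$ agrees with a polynomial in $N$ for all sufficiently large $N$; since $\mom_{G(N)}(k,\beta)$ is well-defined for every positive integer $N$ and agrees with this polynomial on a cofinite set, it equals it for all $N\in\mathbb{N}$. The technical heart of the argument, and what I expect to be the most delicate step, is the bookkeeping of the multi-variable residue at $\alpha=0$ when the repeated-pole structure differs from the unitary case (symplectic has the extra diagonal factors $i=j$, while orthogonal does not), and verifying that for both groups the degree bound matches the exponent announced in Theorems~\ref{MainTheoremSymplectic} and \ref{MainTheoremOrthogonal}.
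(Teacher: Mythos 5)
Your proposal follows essentially the same route as the paper: both adapt the Bailey--Keating strategy, apply Fubini's theorem, invoke Propositions~\ref{combsumsympl} and \ref{combsumorthog} with the block choice of $x_j$'s, resolve the apparent singularities in the denominator by a limiting argument, and then extract the constant Fourier coefficient. Where the paper observes that the poles cancel across the sum over $\underline{\varepsilon}$ and applies l'H\^opital's rule a finite number of times, you make the same computation explicit via CFKRS shifts $\alpha_j$ and a multivariate residue at $\alpha=0$; these are two renderings of the same calculation, so the core of the argument matches.

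One caveat on your final step: the inference that $\mom_{G(N)}(k,\beta)$ is a polynomial for every $N\in\mathbb{N}$ because it agrees with one on a cofinite subset of $\mathbb{N}$ does not hold by itself --- a function on $\mathbb{N}$ can agree with a polynomial on a cofinite set without equalling it. The paper sidesteps this by asserting that after l'H\^opital the resulting expression is already a (Laurent) polynomial in $e^{\i\theta_1},\dots,e^{\i\theta_k}$ whose coefficients are polynomials in $N$, so that the constant term is a polynomial in $N$ for \emph{every} $N$, not merely for $N$ large. To make your version airtight you would need to show that, once summed over $\underline{\varepsilon}$ (which is essential, since the individual $\underline{\varepsilon}$-terms have residual $\theta$-poles that only cancel in aggregate), the $N$-dependent exponentials $e^{\i N m_\ell(\underline{\varepsilon})\theta_\ell}$ combine into an expression of $N$-independent frequency support, rather than relying on the cofinite-agreement step.
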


\begin{proof}
We make use of the expressions for averages through the different matrix groups due to Conrey et al.~\cite{cfkrs1} that were introduced in section~\ref{sec:permsum}.  The argument follows that for the moments of the moments of the characteristic polynomials of unitary matrices, presented in~\cite{baikea18}.

We begin with the symplectic case. We apply Fubini's Theorem to obtain:
\begin{align}
\mom_{Sp(2N)}(k,\beta)=\frac{1}{(2\pi)^k}\int_0^{2\pi}\cdots\int_0^{2\pi}\mathbb{E}_{g \in Sp(2N)}\left[\prod_{j=1}^{2k\beta}\det \left(I-x_jg\right)\right]d\theta_1\cdots d\theta_k,
\end{align}
where, by recalling observation (\ref{ComplexConjObs}): \[\underline{x}=(\underbrace{e^{-\ii\theta_1},\dots,e^{-\ii\theta_1}}_\beta,\underbrace{e^{\ii\theta_1},\dots,e^{\ii\theta_1}}_\beta,\underbrace{e^{-\ii\theta_2},\dots,e^{-\ii\theta_2}}_\beta,\underbrace{e^{\ii\theta_2},\dots,e^{\ii\theta_2}}_\beta,\dots,\underbrace{e^{-\ii\theta_k},\dots,e^{-\ii\theta_k}}_\beta,\underbrace{e^{\ii\theta_k},\dots,e^{\ii\theta_k}}_\beta).\]
Then, by Proposition~\ref{combsumsympl}, we can write the moments of moments in the following form. 

\begin{equation*}
\mom_{Sp(2N)}(k,\beta)=\frac{1}{(2\pi)^k}\int_0^{2\pi}\cdots\int_0^{2\pi}\sum_{\varepsilon_j\in\{-1,1\}}\frac{\prod_{j=1}^{2k\beta}x_j^{\varepsilon_j N}}{\prod_{1\leq i\leq j\leq 2k\beta}(1-x_i^{-\varepsilon_i}x_j^{-\varepsilon_j})}
d\theta_1\cdots d\theta_k.
\end{equation*}

Above, each summand appears to have a pole of finite order (when $x_i^{\varepsilon_i}=x_j^{-\varepsilon_j}$), but these cancel with zeros in the numerator when the sum is considered as a whole.  This is clearly the case since the average of a product of polynomials is bounded~\cite{cfkrs1}.  Following this calculation, one may compute the resulting function by applying  l'H\^opital's rule a finite number of times, which results in a polynomial function in the variables $e^{\ii\theta_1},\dots,e^{\ii\theta_k}$, and whose coefficients are themselves polynomials in $N$.  Finally, after performing the integration over the $\theta_1,\dots,\theta_k$, only the constant term of said polynomial survives, which as noted is a polynomial in $N$. This concludes the proof of Proposition~\ref{polystrucsympl}.  The argument for the orthogonal case is completely analogous via Proposition~\ref{combsumorthog}.

\end{proof}

%+++++++++++++++++++++++%
%  Symplectic section   %
%+++++++++++++++++++++++%

\section{Results for the symplectic group $Sp(2N)$}\label{sec:sympl}  

Here we give the proof of the leading order behaviour and coefficient of $\mom_{Sp(2N)}(k,\beta)$ as described in Theorem~\ref{MainTheoremSymplectic}.  The argument is split in to stages.  Firstly, we give an expression for the moments of moments using symplectic Gelfand-Tsetlin patterns with constraints.  Secondly, we observe that part of the pattern is determined, and hence only the `free' part plays a role.  Finally, by essentially passing from a discrete to a continuous setting and using the results presented in Section~\ref{sec:lattice}, we arrive at the result.  

\subsection{A combinatorial representation}
We begin with a combinatorial representation for $\mom_{Sp(2N)}(k,\beta)$.
\begin{proposition}\label{CombRepSymp1}
Let $k,\beta \in \mathbb{N}$. Then, $\mom_{Sp(2N)}(k,\beta)$ is equal to the number of $(4k\beta)$-symplectic Gelfand-Tsetlin patterns $P=\left(\lambda^{(i)}\right)_{i=1}^{4k \beta}$ with top row $\lambda^{(4k\beta)}=\langle N^{2k\beta}\rangle$, which moreover satisfy the following $k$ constraints for $i=1,\dots,k$:
\begin{align}
\sum_{j=(2i-2)\beta+1}^{(2i-1)\beta}\left[\sum_{l=1}^{j}\lambda_l^{(2j)}-2\sum_{l=1}^{j}\lambda_l^{(2j-1)}+\sum_{l=1}^{j-1}\lambda_l^{(2j-2)}\right]=\sum_{j=(2i-1)\beta+1}^{2i\beta}\left[\sum_{l=1}^{j}\lambda_l^{(2j)}-2\sum_{l=1}^{j}\lambda_l^{(2j-1)}+\sum_{l=1}^{j-1}\lambda_l^{(2j-2)}\right].
\end{align}
We denote the set of such patterns by $GT_{Sp}(N;k;\beta)$.
\end{proposition}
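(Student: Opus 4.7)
The plan is to unpack $\mom_{Sp(2N)}(k,\beta)$ via Fubini, substitute the Bump--Gamburd identity (Proposition~\ref{BumpGamburdSymplectic}), expand the resulting symplectic Schur polynomial through its combinatorial definition as a weighted sum over symplectic Gelfand--Tsetlin patterns (Definition~\ref{CombinatorialFormulaSymplectic}), and then read off the constraints as the conditions for the $\theta$-integrals to produce a nonzero contribution.

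First I would apply Fubini to bring the $\theta$-integrals outside the matrix expectation, obtaining
\[
\mom_{Sp(2N)}(k,\beta)=\frac{1}{(2\pi)^k}\int_{0}^{2\pi}\!\!\cdots\!\int_{0}^{2\pi}\mathbb{E}_{g\in Sp(2N)}\!\left[\prod_{j=1}^{2k\beta}\det(I-x_j g)\right]d\theta_1\cdots d\theta_k,
\]
where the vector $\underline{x}$ is precisely the one given in Section~\ref{sec:poly}, with each $\theta_i$ appearing in $\beta$ copies of $e^{-\i\theta_i}$ followed by $\beta$ copies of $e^{\i\theta_i}$. Proposition~\ref{BumpGamburdSymplectic} then rewrites the inner expectation as $(x_1\cdots x_{2k\beta})^N\, sp^{(4k\beta)}_{\langle N^{2k\beta}\rangle}(x_1,\ldots,x_{2k\beta})$, and the prefactor is identically $1$ because the $x_j$'s pair into conjugate pairs so their product equals $1$.

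Next I would expand the symplectic Schur polynomial using Definition~\ref{CombinatorialFormulaSymplectic}, so that
\[
sp^{(4k\beta)}_{\langle N^{2k\beta}\rangle}(x_1,\ldots,x_{2k\beta})=\sum_{P\in SP_{\langle N^{2k\beta}\rangle}} \prod_{j=1}^{2k\beta} x_j^{E_j(P)},
\]
with $E_j(P)\coloneqq \sum_{l=1}^{j}\lambda_l^{(2j)}-2\sum_{l=1}^{j}\lambda_l^{(2j-1)}+\sum_{l=1}^{j-1}\lambda_l^{(2j-2)}$. Substituting the specific $x_j$'s groups the exponent into a product over $i=1,\dots,k$: the $\beta$ indices $j\in\{(2i-2)\beta+1,\ldots,(2i-1)\beta\}$ contribute $e^{-\i\theta_i}$ raised to $E_j(P)$, while the $\beta$ indices $j\in\{(2i-1)\beta+1,\ldots,2i\beta\}$ contribute $e^{\i\theta_i}$ raised to $E_j(P)$. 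Collecting exponents, the weight of $P$ becomes $\prod_{i=1}^{k} e^{\i\theta_i D_i(P)}$ with
\[
D_i(P)=\sum_{j=(2i-1)\beta+1}^{2i\beta}E_j(P)-\sum_{j=(2i-2)\beta+1}^{(2i-1)\beta}E_j(P).
\]

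Finally I would swap sum and integral (the sum over patterns is finite once $\lambda^{(4k\beta)}$ is fixed) and apply the orthogonality relation $\frac{1}{2\pi}\int_0^{2\pi} e^{\i\theta m}d\theta=\mathbb{1}_{m=0}$ in each coordinate. Only those $P\in SP_{\langle N^{2k\beta}\rangle}$ with $D_i(P)=0$ for all $i=1,\ldots,k$ survive, each contributing exactly $1$, and the vanishing of $D_i(P)$ is verbatim the constraint in the statement. Hence $\mom_{Sp(2N)}(k,\beta)=\#\,GT_{Sp}(N;k;\beta)$, as claimed.

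The main bookkeeping obstacle is simply lining up the indices of the $x_j$'s with the exponents $E_j(P)$ so that the $k$ resulting linear constraints on the entries of $P$ come out in exactly the grouped form stated; no deep estimate is involved, but the index arithmetic of the $(2i-2)\beta$, $(2i-1)\beta$, $2i\beta$ blocks must be tracked carefully to confirm that the $\beta$ negative-phase factors and $\beta$ positive-phase factors associated with each $\theta_i$ pair up as written.
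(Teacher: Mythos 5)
Your proposal is correct and takes essentially the same route as the paper: Fubini, then Proposition~\ref{BumpGamburdSymplectic} to convert the inner expectation into a symplectic Schur polynomial, then Definition~\ref{CombinatorialFormulaSymplectic} to expand as a sum over patterns, and finally orthogonality of $e^{\i m\theta}$ to kill all patterns except those satisfying the $k$ linear constraints. The only cosmetic difference is that you make explicit the observation that $(x_1\cdots x_{2k\beta})^N = 1$ since the $x_j$ come in conjugate pairs, which the paper leaves implicit; otherwise the bookkeeping with $E_j$ and $D_i$ matches the paper's displayed expansion exactly.
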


\begin{proof}
As in Proposition \ref{polystrucsympl}, by an application of Fubini's Theorem we have:
\begin{align}
\mom_{Sp(2N)}(k,\beta)=\frac{1}{(2\pi)^k}\int_0^{2\pi}\cdots\int_0^{2\pi}\mathbb{E}_{g \in Sp(2N)}\left[\prod_{j=1}^{2k\beta}\det \left(I-x_jg\right)\right]d\theta_1\cdots d\theta_k,\label{mom_sympl_fubini}
\end{align}
with (using (\ref{ComplexConjObs})) \[\underline{x}=(\underbrace{e^{-\ii\theta_1},\dots,e^{-\ii\theta_1}}_\beta,\underbrace{e^{\ii\theta_1},\dots,e^{\ii\theta_1}}_\beta,\underbrace{e^{-\ii\theta_2},\dots,e^{-\ii\theta_2}}_\beta,\underbrace{e^{\ii\theta_2},\dots,e^{\ii\theta_2}}_\beta,\dots,\underbrace{e^{-\ii\theta_k},\dots,e^{-\ii\theta_k}}_\beta,\underbrace{e^{\ii\theta_k},\dots,e^{\ii\theta_k}}_\beta).\]

Now, we make use of Proposition \ref{BumpGamburdSymplectic} along with Definition~\ref{CombinatorialFormulaSymplectic} to rewrite the integrand in~\eqref{mom_sympl_fubini} as follows, where the signature determining the set $SP_\nu$ is $\nu=\langle N^{2k\beta} \rangle \in \mathsf{S}_{2k\beta}^+$.
\begin{align*}
\mathbb{E}&_{g \in Sp(2N)}\left[\prod_{j=1}^{2k\beta}\det \left(I-x_jg\right)\right]=\\
&\sum_{P \in SP_{\langle N^{2k\beta} \rangle}}^{} \prod_{j=1}^{\beta}e^{-\ii \theta_1 \left[\sum_{l=1}^{j}\lambda_l^{(2j)}-2\sum_{l=1}^{j}\lambda_l^{(2j-1)}+\sum_{l=1}^{j-1}\lambda_l^{(2j-2)}\right]}\prod_{j=\beta+1}^{2\beta}e^{\ii \theta_1 \left[\sum_{l=1}^{j}\lambda_l^{(2j)}-2\sum_{l=1}^{j}\lambda_l^{(2j-1)}+\sum_{l=1}^{j-1}\lambda_l^{(2j-2)}\right]}\\
&\times \prod_{j=2\beta+1}^{3\beta}e^{-\ii \theta_2 \left[\sum_{l=1}^{j}\lambda_l^{(2j)}-2\sum_{l=1}^{j}\lambda_l^{(2j-1)}+\sum_{l=1}^{j-1}\lambda_l^{(2j-2)}\right]}\prod_{j=3\beta+1}^{4\beta}e^{\ii \theta_2 \left[\sum_{l=1}^{j}\lambda_l^{(2j)}-2\sum_{l=1}^{j}\lambda_l^{(2j-1)}+\sum_{l=1}^{j-1}\lambda_l^{(2j-2)}\right]} \times\cdots\\
&\times\prod_{j=(2k-2)\beta+1}^{(2k-1)\beta}e^{-\ii \theta_k \left[\sum_{l=1}^{j}\lambda_l^{(2j)}-2\sum_{l=1}^{j}\lambda_l^{(2j-1)}+\sum_{l=1}^{j-1}\lambda_l^{(2j-2)}\right]}\prod_{j=(2k-1)\beta+1}^{2k\beta}e^{\ii \theta_k \left[\sum_{l=1}^{j}\lambda_l^{(2j)}-2\sum_{l=1}^{j}\lambda_l^{(2j-1)}+\sum_{l=1}^{j-1}\lambda_l^{(2j-2)}\right]} .
\end{align*}
Finally, by making use of the fact that,
\begin{align*}
\frac{1}{2\pi}\int_{0}^{2\pi} e^{\ii s \theta}d\theta=\delta_{s=0},
\end{align*}
the statement of the proposition readily follows.
\end{proof}

We now make the simple observation that the form of the top signature $\langle N^{2k\beta} \rangle$ essentially fixes the top right triangle of a pattern in $GT_{Sp}(N;k;\beta)$, see Figure~\ref{fig:symplfixed}. In order to formalize the argument, it is convenient to have the following definition:

\begin{definition}\label{def:sympl_int_array}
Consider the following set of integer arrays \hbox{$\left(y^{(i)}\right)_{i=1}^{4k\beta-1} \in \mathbb{Z}^{k\beta(2k\beta+1)}$}, which we denote by $\mathfrak{I}_{Sp}(N;k;\beta)$, and which additionally satisfy the following conditions,
\begin{enumerate}
\item for all $1\leq i\leq 2k\beta$, $y^{(i)},y^{(4k\beta-i)} \in \mathsf{S}_{\lfloor\frac{i+1}{2} \rfloor}^+$,
\item both $\left(y^{(i)}\right)_{i=1}^{2k\beta}$ and $\left(y^{(4k\beta-i)}\right)_{i=1}^{2k\beta}$ form $(2k\beta)$-symplectic Gelfand-Tsetlin patterns,
\item $0\le y_j^{(i)} \le N$ for any valid $i, j$,
\item the rows $\left(y^{(i)}\right)_{i=1}^{4k\beta-1}$ fulfil the following constraints: 

In the case $k$ is even, let $i=1,\dots,\frac{k}{2}$ (with $y^{(0)},y^{(4k\beta)}\equiv 0$).  Then,
\begin{align}\label{eq:sympl_gf_const1}
\sum_{j=(2i-2)\beta+1}^{(2i-1)\beta}\left[\sum_{l=1}^{j}y_l^{(2j)}-2\sum_{l=1}^{j}y_l^{(2j-1)}+\sum_{l=1}^{j-1}y_l^{(2j-2)}\right]=\sum_{j=(2i-1)\beta+1}^{2i\beta}\left[\sum_{l=1}^{j}y_l^{(2j)}-2\sum_{l=1}^{j}y_l^{(2j-1)}+\sum_{l=1}^{j-1}y_l^{(2j-2)}\right],
\end{align}
and
\begin{align}\label{eq:sympl_gf_const2}
\sum_{j=(2i-2)\beta+1}^{(2i-1)\beta}\left[\sum_{l=1}^{j}y_l^{(4k\beta-2j)}-2\sum_{l=1}^{j}\right.&\left.y_l^{(4k\beta-2j+1)}+\sum_{l=1}^{j-1}y_l^{(4k\beta-2j+2)}\right]\\
&\qquad=\sum_{j=(2i-1)\beta+1}^{2i\beta}\left[\sum_{l=1}^{j}y_l^{(4k\beta-2j)}-2\sum_{l=1}^{j}y_l^{(4k\beta-2j+1)}+\sum_{l=1}^{j-1}y_l^{(4k\beta-2j+2)}\right].\nonumber
\end{align}
While, when $k$ is odd we have the same constraints as above for $i=1,\dots, \frac{k-1}{2}$ along with:
\begin{align}\label{eq:sympl_gf_const3}
\sum_{j=(k-1)\beta+1}^{k\beta}\left[\sum_{l=1}^{j}y_l^{(2j)}-2\sum_{l=1}^{j}y_l^{(2j-1)}+\right.&\left.\sum_{l=1}^{j-1}y_l^{(2j-2)}\right]\\
  &=\sum_{j=(k-1)\beta+1}^{k\beta}\left[\sum_{l=1}^{j}y_l^{(4k\beta-2j)}-2\sum_{l=1}^{j}y_l^{(4k\beta-2j+1)}+\sum_{l=1}^{j-1}y_l^{(4k\beta-2j+2)}\right].\nonumber
\end{align}
Observe that, for both $k$ odd and even there are a total of $k$ constraints. 
\end{enumerate}
\end{definition}

We claim that there is a natural bijection, essentially a relabelling of the coordinates, between $GT_{Sp}(N;k;\beta)$ and $\mathfrak{I}_{Sp}(N;k;\beta)$:
\begin{align}\label{eq:symbij}
\mathfrak{B}_{Sp}:GT_{Sp}(N;k;\beta) \longrightarrow \mathfrak{I}_{Sp}(N;k;\beta).
\end{align}
This can be seen as follows, and for additional clarity see Figure~\ref{fig:symplbij}. Let $(\lambda^{(i)})_{i=1}^{4k\beta} \in GT_{Sp}(N;k;\beta)$. Observe that, by the interlacing $\lambda^{(4k\beta-1)}\prec \langle N^{2k\beta}\rangle=\lambda^{(4k\beta)}$, we have a single free coordinate:
\begin{align*}
\lambda_{1}^{(4k\beta-1)},\dots,\lambda_{2k\beta-1}^{(4k\beta-1)}\equiv N,\\
0 \le \lambda_{2k\beta}^{(4k\beta-1)}\le N.
\end{align*} 
We thus relabel $y_1^{(4k\beta-1)}=\lambda_{2k\beta}^{(4k\beta-1)}$. Secondly, again due to the interlacing $\lambda^{(4k\beta-2)}\prec \lambda^{(4k\beta-1)}$, we have:
\begin{align*}
\lambda_{1}^{(4k\beta-2)},\dots,\lambda_{2k\beta-2}^{(4k\beta-2)}\equiv N
\end{align*}
and moreover,
\begin{align*}
y_1^{(4k\beta-1)}=\lambda_{2k\beta}^{(4k\beta-1)}\le \lambda_{2k\beta-1}^{(4k\beta-2)}\le N.
\end{align*}
We write $y_1^{(4k\beta-2)}=\lambda_{2k\beta-1}^{(4k\beta-2)}$. We continue relabelling in this fashion up to (and including) $\lambda^{(2k\beta+1)}$ (after which no coordinates are necessarily fixed to equal $N$) and finally, we put $(y^{(i)})_{i=1}^{2k\beta}\equiv(\lambda^{(i)})_{i=1}^{2k\beta}$. Clearly, the map $\mathfrak{B}_{Sp}$ described above is invertible. Thus, by making use of Proposition \ref{CombRepSymp1} we obtain the following.
\begin{proposition}\label{CombRepSymp2}
Let $k,\beta \in \mathbb{N}$. Then,
\begin{align*}
\mom_{Sp(2N)}(k,\beta)=\#\mathfrak{I}_{Sp}(N;k;\beta).
\end{align*}
\end{proposition}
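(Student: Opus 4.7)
My plan is to combine Proposition~\ref{CombRepSymp1} with the bijection $\mathfrak{B}_{Sp} : GT_{Sp}(N;k;\beta) \to \mathfrak{I}_{Sp}(N;k;\beta)$ sketched in the paragraph preceding the statement. The task thus reduces to verifying that this relabelling is a well-defined bijection whose image satisfies all four conditions of Definition~\ref{def:sympl_int_array}; the equality of cardinalities then yields the desired identity.

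First, I would formalize the inductive analysis of the ``frozen upper-left triangle'' of a pattern $(\lambda^{(r)})_{r=1}^{4k\beta} \in GT_{Sp}(N;k;\beta)$. Starting from $\lambda^{(4k\beta)} = \langle N^{2k\beta} \rangle$ and descending one row at a time, the interlacing $\lambda^{(r-1)} \prec \lambda^{(r)}$ forces any entry of $\lambda^{(r-1)}$ sandwiched between two $N$'s of $\lambda^{(r)}$ to itself equal $N$. A straightforward induction shows that for $i = 1, \ldots, 2k\beta - 1$, row $\lambda^{(4k\beta - i)}$ has its initial entries pinned to $N$ and exactly $\lfloor(i+1)/2\rfloor$ free trailing entries lying in $[0,N]$, while for $r \leq 2k\beta$ no entry is forced. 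Setting $y^{(r)} = \lambda^{(r)}$ for $r \leq 2k\beta$ and letting $y^{(4k\beta - i)}$ be the tuple of the free trailing entries of $\lambda^{(4k\beta-i)}$ yields a well-defined map whose inverse simply prepends the correct number of $N$'s (depending only on the row index) to each top-half row.

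Next I would verify conditions (1)--(3) of Definition~\ref{def:sympl_int_array}. Condition~(1) is immediate from the inductive count of free entries. For~(2), the bottom half $(y^{(r)})_{r=1}^{2k\beta}$ inherits the symplectic half-pattern structure by restriction from $\lambda$, whereas the reversed top half $(y^{(4k\beta-i)})_{i=1}^{2k\beta}$ inherits it because the interlacings $\lambda^{(r-1)} \prec \lambda^{(r)}$, restricted to the non-frozen trailing portions of consecutive rows, reproduce exactly the chain of inequalities required in a $(2k\beta)$-symplectic half pattern. Condition~(3) is automatic: non-negativity transfers from $\lambda$, and every free entry is bounded above by an adjacent frozen $N$-entry.

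The main technical point is condition~(4). For a row $r$ in the top half, let $f_r$ denote the number of frozen entries; a direct count yields $f_{2j} = 2(j-k\beta)$ and $f_{2j-1} = 2(j-k\beta) - 1$ for $j \geq k\beta + 1$ (with $f_r = 0$ for $r \leq 2k\beta$), giving the telescoping identity $f_{2j} - 2f_{2j-1} + f_{2j-2} = 0$ for every $j \geq k\beta + 1$. Consequently, the expression $\sum_{l=1}^{j}\lambda^{(2j)}_l - 2\sum_{l=1}^{j}\lambda^{(2j-1)}_l + \sum_{l=1}^{j-1}\lambda^{(2j-2)}_l$ appearing in each summand equals $(f_{2j} - 2f_{2j-1} + f_{2j-2})N$ plus the analogous combination in the $y$ variables, and the $N$-contribution vanishes identically. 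Splitting the $k$ original constraints of Proposition~\ref{CombRepSymp1} according to whether their $j$-ranges lie entirely in the bottom half (these map directly to \eqref{eq:sympl_gf_const1}), entirely in the top half (mapping to \eqref{eq:sympl_gf_const2} under the reindexing $i \mapsto k - i + 1$ and change of summation variable $j_0 = 2k\beta - j + 1$), or, when $k$ is odd and $i_0 = (k+1)/2$, straddle both halves (yielding \eqref{eq:sympl_gf_const3}) completes the verification. The index bookkeeping in the top-half case is the most laborious step, but it introduces no new ingredient beyond the telescoping identity above.
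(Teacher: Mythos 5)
Your proposal is correct and follows the same approach as the paper: it establishes the bijection $\mathfrak{B}_{Sp}$ by exhibiting the frozen triangle forced by $\lambda^{(4k\beta)}=\langle N^{2k\beta}\rangle$ and relabelling the remaining free coordinates, then translates the $k$ constraints from Proposition~\ref{CombRepSymp1} into those of Definition~\ref{def:sympl_int_array}. Your proof is in fact somewhat more explicit than the paper's terse argument, in particular in verifying condition~(4) via the telescoping identity $f_{2j}-2f_{2j-1}+f_{2j-2}=0$ and the reindexing $j_0 = 2k\beta - j + 1$, $i\mapsto k-i+1$, all of which check out.
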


%+++++++++++++++++++++++%
%     Sympl figures     %
%+++++++++++++++++++++++%

%--------------------%
% Sympl pattern      %
%--------------------%
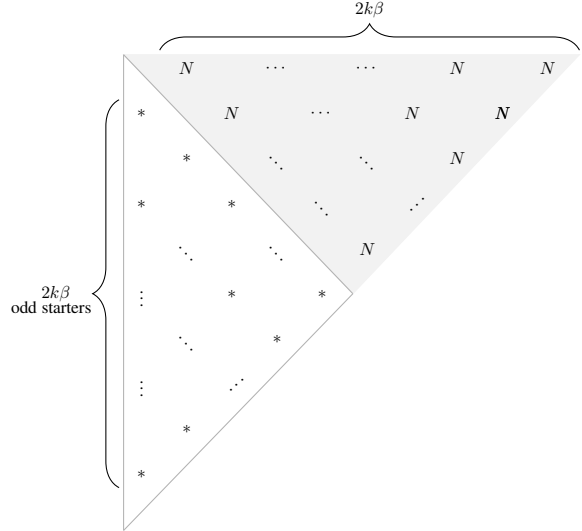
\begin{figure}[!htb]
\centering
\begin{tikzpicture}[scale=0.6, every node/.style={scale=0.6}]

%----- Fixed part -------%
\draw[decorate, decoration={brace, amplitude=10pt}] (0.4,9.4) -- (9.7,9.4) node [midway, yshift=.9cm] {$2k\beta$};

\path [draw=gray!10, fill=gray!10] (-0.4,9.3) -- (9.7,9.3) -- (4.68,4) -- cycle;
\node at (1,9) {$N$};
\node at (3,9) {$\cdots$};
\node at (5,9) {$\cdots$};
\node at (7,9) {$N$};
\node at (9,9) {$N$};
\node at (8,8) {$N$};
\node at (2,8) {$N$};
\node at (4,8) {$\cdots$};
\node at (6,8) {$N$};
\node at (8,8) {$N$};
\node at (7,7) {$N$};
\draw (6,6) node[rotate=80] {$\ddots$};
\draw (5,7) node[rotate=-10] {$\ddots$};
\draw (3,7) node[rotate=-10] {$\ddots$};
\draw (4,6) node[rotate=-10] {$\ddots$};
\draw (3,5) node[rotate=-10] {$\ddots$};
\node at (5,5) {$N$};

%----- Non-fixed part -------%
\draw[thin, color=gray!60] (-0.4,9.3) -- (-0.4,-1.25);
\draw[thin, color=gray!60] (4.68,4) -- (-0.4,-1.25);
\draw[thin, color=gray!60] (4.68,4) -- (-0.4,9.3);

\draw[decorate, decoration={brace, amplitude=10pt}]  (-0.5,-0.3) -- (-0.5,8.3) node [midway, xshift=-1.4cm] {$2k\beta$};
\node at (-2, 3.7) {odd-starters};
\node at (0,8) {$\ast$};
\node at (1,7) {$\ast$};
\node at (0,6) {$\ast$};
\node at (2,6) {$\ast$};
\node at (4,4) {$\ast$};
\draw (1,5) node[rotate=-10] {$\ddots$};
\node at (0,4) {$\vdots$};
\node at (0,2) {$\vdots$};
\draw (1,3) node[rotate=-10] {$\ddots$};
\draw (2,2) node[rotate=80] {$\ddots$};
\node at (2,4) {$\ast$};
\node at (3,3) {$\ast$};
\node at (1,1) {$\ast$};
\node at (0,0) {$\ast$};
\end{tikzpicture}

\caption{Figure depicting the fixed region of a $(4k\beta)$-symplectic Gelfand-Tsetlin pattern $P\in SP_{\langle N^{2k\beta}\rangle}$.  The shaded area represents the fixed region, whilst the unshaded region shows which elements have some freedom in the values that they can take.}\label{fig:symplfixed}
\end{figure}

%---------------------%
% Sympl pattern       %
%---------------------%
\begin{figure}[!htb]
\centering
\begin{tikzpicture}[scale=0.6, every node/.style={scale=0.6}]

%----- Fixed part -------%
\path [draw=gray!10, fill=gray!10] (-0.2,9.3) -- (9.7,9.3) -- (4.88,4) -- cycle;
\node at (1,9) {$N$};
\node at (3,9) {$\cdots$};
\node at (5,9) {$\cdots$};
\node at (7,9) {$N$};
\node at (9,9) {$N$};
\node at (8,8) {$N$};
\node at (2,8) {$N$};
\node at (4,8) {$\cdots$};
\node at (6,8) {$N$};
\node at (8,8) {$N$};
\node at (7,7) {$N$};
\draw (6,6) node[rotate=80] {$\ddots$};
\draw (5,7) node[rotate=-10] {$\ddots$};
\draw (3,7) node[rotate=-10] {$\ddots$};
\draw (4,6) node[rotate=-10] {$\ddots$};
\draw (3,5) node[rotate=-10] {$\ddots$};
\node at (5,5) {$N$};

%------ Non-Fixed part --------%
\node at (0,8) {$\lambda_{2k\beta}^{(4k\beta-1)}$};
\node at (1,7) {$\lambda_{2k\beta-1}^{(4k\beta-2)}$};
\node at (0,6) {$\lambda_{2k\beta-1}^{(4k\beta-3)}$};
\node at (2,6) {$\lambda_{2k\beta-2}^{(4k\beta-3)}$};
\draw (3,5) node[rotate=-10] {$\ddots$};
\node at (4,4) {$\lambda_{1}^{(2k\beta)}$};
\draw (1,5) node[rotate=-10] {$\ddots$};
\node at (0,4.5) {$\vdots$};
\node at (0,2.5) {$\vdots$};
\draw (1,3) node[rotate=-10] {$\ddots$};
\draw (2,2) node[rotate=80] {$\ddots$};
\node at (.7,3.8) {$\dots$};
\node at (2,4) {$\lambda_{2}^{(2k\beta)}$};
\node at (3,3) {$\lambda_{1}^{(2k\beta-1)}$};
\node at (1,1) {$\lambda_{1}^{(2)}$};
\node at (0,0) {$\lambda_{1}^{(1)}$};

%----- Relabelled -------%
%\path [draw=gray!5, fill=gray!5] (-0.4,9.3) -- (9.7,9.3) -- (4.68,4) -- cycle;
\node at (13,9) {};
\node at (15,9) {};
\node at (17,9) {};
%\node at (19,9) {};
%\node at (21,9) {};
%\node at (20,8) {};
\node at (14,8) {};
\node at (16,8) {};
\node at (18,8) {};
%\node at (20,8) {};
%\node at (19,7) {};
\node at (18,6) {};
\draw (17,7) node[rotate=-10] {};
\draw (15,7) node[rotate=-10] {};
\draw (16,6) node[rotate=-10] {};
\draw (15,5) node[rotate=-10] {};
\node at (17,5) {};

%----- Non-fixed part -------%
%\draw[thin, dotted] (4.68,4) -- (-0.4,9.3);

\node at (12,8) {$y_1^{(4k\beta-1)}$};
\node at (13,7) {$y_1^{(4k\beta-2)}$};
\node at (12,6) {$y_2^{(4k\beta-3)}$};
\node at (14,6) {$y_1^{(4k\beta-3)}$};s
\node at (16,4) {$y_{1}^{(2k\beta)}$};
\draw (13,5) node[rotate=-10] {$\ddots$};
\draw (15,5) node[rotate=-10] {$\ddots$};
\node at (12,4.5) {$\vdots$};
\node at (12,2.5) {$\vdots$};
\draw (13,3) node[rotate=-10] {$\ddots$};
\draw (14,2) node[rotate=80] {$\ddots$};
\node at (12.7,3.8) {$\dots$};
\node at (14,4) {$y_{2}^{(2k\beta)}$};
\node at (15,3) {$y_{1}^{(2k\beta-1)}$};
\node at (13,1) {$y_{1}^{(2)}$};
\node at (12,0) {$y_{1}^{(1)}$};

%--- Bij -----%
\node at (9,5) {\huge{$\xrightarrow[]{\mathfrak{B}_{Sp}}$}};

\end{tikzpicture}

\caption{Representation of the relabelling of the coordinates given by the bijection $\mathfrak{B}_{Sp}:GT_{Sp}(N;k;\beta) \longrightarrow \mathfrak{I}_{Sp}(N;k;\beta)$ }\label{fig:symplbij}
\end{figure}
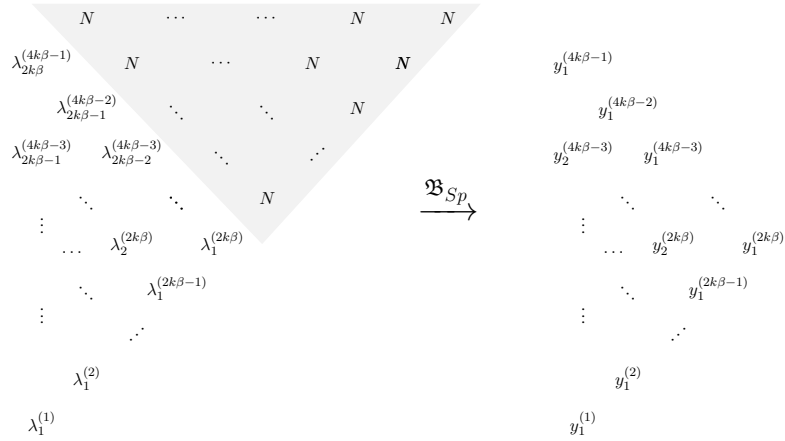

%-------------------%
% Sympl pattern     %
%-------------------%

\begin{figure}[!htb]
\centering
\begin{tikzpicture}[scale=0.7, every node/.style={scale=0.7}]

\path [fill=gray!20, rounded corners] (0.5,7) -- (7.5,7) -- (4,3.5) -- cycle;

\path [pattern=north west lines, pattern color=gray!40] (0.52,6.88) -- (1.813,5.56) -- (0.52,5.56) -- cycle;
\path [pattern=north east lines, pattern color=gray!40] (0.52,5.76) -- (1.62,5.76) -- (2.9,4.45) -- (0.52,4.45) -- cycle;
\path [pattern=north west lines, pattern color=gray!40] (0.52,4.65) -- (2.71,4.65) -- (3.99,3.34) -- (0.52,3.34) -- cycle;
\path [pattern=north east lines, pattern color=gray!40] (0.52,2.23) -- (0.52,3.54) -- (3.99,3.54) -- (2.71,2.23) -- cycle;
\path [pattern=north west lines, pattern color=gray!40] (0.52,2.43) -- (2.9,2.43) -- (1.62,1.12) -- (0.52,1.12) -- cycle;
\path [pattern=north east lines, pattern color=gray!40] (0.52,0) -- (1.81,1.32) -- (0.52,1.32) -- cycle;

% Hack to get a nice point
\draw[rotate around={-45:(3.82,3.56)},fill=white, color=white] (3.65,3.48) rectangle (4,3.63); 
\draw[rotate around={45:(3.82,3.27)},fill=white, color=white] (3.65,3.2) rectangle (4,3.35); 

\draw[|-|] (-0.8,0) -- (-0.8,7);
\draw[decorate, decoration={brace, amplitude=4pt}]  (0.35,5.76) -- (0.35,6.88) node [midway, xshift=-.7cm] {\tiny{$4\beta-1$}};
\draw[decorate, decoration={brace, amplitude=4pt}]  (0.35,4.65) -- (0.35,5.66) node [midway, xshift=-.5cm] {\tiny{$4\beta$}};
\draw[decorate, decoration={brace, amplitude=4pt}]  (0.35,3.54) -- (0.35,4.55) node [midway, xshift=-.5cm] {\tiny{$4\beta$}};
\draw[decorate, decoration={brace, amplitude=4pt}]  (0.35,2.43) -- (0.35,3.44) node [midway, xshift=-.5cm] {\tiny{$4\beta$}};
\draw[decorate, decoration={brace, amplitude=4pt}]  (0.35,1.32) -- (0.35,2.33) node [midway, xshift=-.5cm] {\tiny{$4\beta$}};
\draw[decorate, decoration={brace, amplitude=4pt}]  (0.35,0) -- (0.35,1.22) node [midway, xshift=-.5cm] {\tiny{$4\beta$}};
\draw[decorate, decoration={brace, amplitude=4pt}]  (0.5,7.05) -- (7.5,7.05) node [midway, yshift=.45cm] {\small{Fixed}};
\node at (-1.2,3.45) {\small{$24\beta$}};

\node[mark size=1pt, color=black] at (0.56, 6.65) {$\bigcdot$};
\node[mark size=1pt, color=black] at (0.56, 5.62) {$\bigcdot$};
\node[mark size=1pt, color=black] at (0.56, 4.5) {$\bigcdot$};
\node[mark size=1pt, color=black] at (0.56,  3.4) {$\bigcdot$};
\node[mark size=1pt, color=black] at (0.56, 2.3) {$\bigcdot$};
\node[mark size=1pt, color=black] at (0.56, 1.15) {$\bigcdot$};

\end{tikzpicture}

\caption{Visual representations of how the index sets $\mathcal{S}_{(k,\beta)}^{Sp}$, and hence the diagram given by $\mathcal{V}_{(k,\beta)}^{Sp}$, for general integer $\beta$ and $k=6$ are constructed. A pair $(i, j)$ in index set $\mathcal{S}_{(k,\beta)}^{Sp}$ represents any non-fixed element $i$ in row $j$ of the continuous pattern $\mathcal{V}_{(k,\beta)}^{Sp}$ above, except for the elements depicted by $\bullet$. These are not included in $\mathcal{S}_{(k,\beta)}^{Sp}$, since these are chosen to be fixed by the linear equations.  The overlap in the pattern shows the $5$ rows $x^{(4\beta)},\dots,x^{(20\beta)}$ where the constraints overlap.}\label{fig:indexset}
\end{figure}
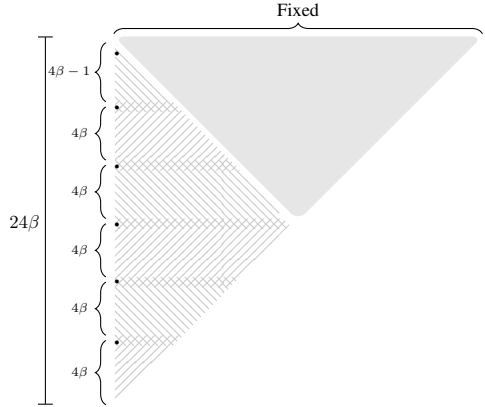

\subsection{Asymptotics and the leading order coefficient}\label{sec:sympl_asympt}

To conclude the proof, we require some final definitions and notation, which will also be useful for the orthogonal case in Section~\ref{sec:orthog}. We consider the continuous Weyl chamber:
\begin{align*}
\mathsf{W}_N=\{ x=(x_1,\dots,x_N)\in \mathbb{R}^N:x_1\ge \dots \ge x_N\}
\end{align*}
and also let $\mathsf{W}_N^+=\mathsf{W}_N\cap \mathbb{R}^N_+$. We say that $y \in \mathsf{W}_N$ and $x \in \mathsf{W}_{N+1}$ interlace if exactly the inequalities~\eqref{interlacing_eqs} (from the discrete setting) are satisfied and we also write $y\prec x$ (similarly for $y \in \mathsf{W}_N$ and $x\in \mathsf{W}_N$). The definitions of continuous half-patterns and continuous symplectic and orthogonal Gelfand-Tsetlin patterns are completely analogous to the discrete setting (we simply replace $\mathsf{S}_i$ by $\mathsf{W}_i$).

We consider the following index set, which encodes a subset of the elements in the patterns in $\mathfrak{I}_{Sp}(N;k;\beta)$ resulting from applying the relabelling,

\begin{align*}
\mathcal{S}^{Sp}_{(k,\beta)} \coloneqq \bigg\{ (m,n):\ &1 \le m \le \bigg\lfloor \frac{n+1}{2} \bigg\rfloor\text{ and }1 \le n \le 2k\beta; \\
&\text{or }1 \le m \le \bigg\lfloor \frac{4k\beta-n+1}{2}\bigg\rfloor\text{ and }2k\beta+1\le n < 4k\beta-1;\\
 &n\neq 4\beta, 8\beta, \dots, 4(k-1)\beta \bigg\}\\
 \cup \bigg\{ (m,4n\beta):& \ 1\le m \le 2n\beta-1 \text{ and }1\le n\le \left\lfloor \tfrac{k}{2} \right\rfloor;\\
 &\text{or }1\le m \le 2(k-n)\beta-1 \text{ and }\left\lfloor\tfrac{k}{2}\right\rfloor+1\le n < k \}\bigg\}.
\end{align*}
Thus, the pair $(m,n)$ appears in $\mathcal{S}^{Sp}_{(k,\beta)}$ if and only if $y_m^{(n)}\in \mathfrak{I}_{Sp}(N;k;\beta)$, except for some particular choices of pairs $(m,n)$, which we remove.  The $k$ missing pairs are precisely the encodings of $y_{2\beta}^{(4\beta)}$, $y_{4\beta}^{(8\beta)}$, $\dots$, $y_{4\beta}^{(4(k-2)\beta)}$, $y_{2\beta}^{(4(k-1)\beta}$, and $y_{1}^{(4k\beta-1)}$; see Figure~\ref{fig:indexset} for a visual representation. 

Observe that $\mathcal{S}^{Sp}_{(k,\beta)}$ has exactly $k\beta (2k\beta+1)-k$ elements. Now define 
\begin{align}\label{sympl_set_free}
\mathcal{V}_{(k,\beta)}^{Sp}\coloneqq \{ x_{m}^{(n)}\in \mathbb{R} : (m,n)\in \mathcal{S}^{Sp}_{(k,\beta)}, 0\le x_m^{(n)} \le 1\}\subset \mathbb{R}^{k\beta(2k\beta+1)-k},
\end{align}
alongside elements defined as follows, 
\begin{align}
x_{\frac{n}{2}}^{(n)}&\quad\text{for }n= 4\beta, 8\beta, \dots, 4\left\lfloor\tfrac{k}{2}\right\rfloor\beta,\label{extra_coord_sympl1}\\
x_{\frac{4k\beta-n}{2}}^{(n)}&\quad\text{for }n=4(\left\lfloor\tfrac{k}{2}\right\rfloor+1)\beta, \dots, 4(k-1)\beta, \label{extra_coord_sympl2}\\
x_{1}^{(4k\beta-1)},&\quad\label{extra_coord_sympl3}
\end{align}
which are determined by the linear equations~\eqref{eq:sympl_gf_const1}--\eqref{eq:sympl_gf_const3} (we simply solve for the relevant term) so that:
\begin{itemize}
	\item $0 \le x_{m}^{(n)} \le 1$, for all $x_m^{(n)}$ described by \eqref{sympl_set_free}--\eqref{extra_coord_sympl3},
	\item $x^{(n)},x^{(4k\beta-n)} \in \mathsf{W}_{\lfloor\frac{n+1}{2}\rfloor}^+$, for all $n=1,\dots,2k\beta$,
	\item both $(x^{(n)})_{n=1}^{2k\beta}$ and $(x^{(4k\beta-n)})_{n=1}^{2k\beta}$ form continuous $(2k\beta)$-symplectic Gelfand-Tsetlin patterns.
\end{itemize}
We call the index set corresponding to the `determined' elements 
\[\mathcal{T}^{Sp}_{(k,\beta)}\coloneqq \{(m,n) : y_m^{(n)}\in \mathfrak{I}_{Sp}(N;k;\beta)\} \backslash \mathcal{S}^{Sp}_{(k,\beta)}.\]

Observe that, $\mathcal{V}_{(k,\beta)}^{Sp}$ is convex as an intersection of hyperplanes. Moreover, $\mathcal{V}_{(k,\beta)}^{Sp}$ is contained in the cube $[0,1]^{k\beta(2k\beta+1)-k}$ and hence in a closed ball of radius $\sqrt{k\beta(2k\beta+1)-k}$.

\begin{proof}[Proof of Theorem \ref{MainTheoremSymplectic}]
The proof of the aspect of the theorem pertaining to the polynomial structure of the moments of moments was given in Proposition \ref{polystrucsympl}. For the leading order coefficient term we observe that:
\begin{align*}
\#\mathfrak{I}_{Sp}(N;k;\beta)=\# \left(\mathbb{Z}^{k\beta(2k\beta+1)-k}\cap\left(N\mathcal{V}_{(k,\beta)}^{Sp}\right)\right),
\end{align*}
where for a set $\mathcal{A}$, we write $N \mathcal{A}=\{Nx: x \in \mathcal{A} \}$ for is its dilate by a factor of $N$. Thus, from Proposition \ref{CombRepSymp2} and Theorem \ref{LatticePointCountTheorem} with $\mathcal{S}=N\mathcal{V}_{(k,\beta)}^{Sp}$, we obtain:
\begin{align*}
\mom_{Sp(2N)}(k,\beta)=\#\mathfrak{I}_{Sp}(N;k;\beta)&=\# \left(\mathbb{Z}^{k\beta(2k\beta+1)-k}\cap\left(N\mathcal{V}_{(k,\beta)}^{Sp}\right)\right)\\
&=\mathsf{vol}\left(N\mathcal{V}_{(k,\beta)}^{Sp}\right)+O_{k,\beta}\left(N^{k\beta(2k\beta+1)-k-1}\right).
\end{align*}
Since,
\begin{align*}
\mathsf{vol}\left(N\mathcal{V}_{(k,\beta)}^{Sp}\right)=N^{k\beta(2k\beta+1)-k}\mathsf{vol}\left(\mathcal{V}_{(k,\beta)}^{Sp}\right)
\end{align*}
we have $\mathfrak{c}_{Sp}(k,\beta)=\mathsf{vol}\left(\mathcal{V}_{(k,\beta)}^{Sp}\right)$. It then suffices to prove that $\mathsf{vol}\left(\mathcal{V}_{(k,\beta)}^{Sp}\right)>0$ which is the content of Lemma \ref{PositivityOfVolumeSymplectic} below.
\end{proof}

Proving the strict positivity of the constant $\mathfrak{c}_{Sp}(k,\beta)$ is important, because otherwise we simply have a bound for $\mom_{Sp(2N)}(k,\beta)$. This task is also one of the more complicated parts of this paper. A crucial role is played by a number of figures which elucidate the argument.

\begin{lemma}\label{PositivityOfVolumeSymplectic}
Let $k,\beta \in \mathbb{N}$. Then,
\begin{align}
\mathfrak{c}_{Sp}(k,\beta)=\mathsf{vol}\left(\mathcal{V}_{(k,\beta)}^{Sp}\right)>0.
\end{align} 
\end{lemma}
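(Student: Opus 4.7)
Observe that $\mathcal V^{Sp}_{(k,\beta)}$ is the intersection of a convex polytope $\mathcal P$ (defined by the interlacing inequalities and the entry bounds $0\le x_m^{(n)}\le 1$) with the affine subspace $K\subset\mathbb R^{k\beta(2k\beta+1)}$ cut out by the $k$ linear equations \eqref{eq:sympl_gf_const1}--\eqref{eq:sympl_gf_const3}. Since strict interlacing and the strict bounds $x_m^{(n)}\in(0,1)$ are open conditions, it will suffice to exhibit a single configuration $x^{\ast}\in K$ lying in the relative interior of $\mathcal P$: a small $(k\beta(2k\beta+1)-k)$-dimensional ball around $x^{\ast}$ inside $K$ will then be contained in $\mathcal V^{Sp}_{(k,\beta)}$ and have positive volume.

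The plan is to start from the constant configuration $x_m^{(n)}\equiv\tfrac{1}{2}$, which a direct substitution shows satisfies every constraint: the weight
\[
w(j)\coloneqq\sum_{l=1}^{j}x_l^{(2j)}-2\sum_{l=1}^{j}x_l^{(2j-1)}+\sum_{l=1}^{j-1}x_l^{(2j-2)}
\]
(and its upper-half analogue in \eqref{eq:sympl_gf_const2}) evaluates to $-\tfrac{1}{2}$ independently of $j$, so any two $\beta$-fold partial sums agree. This constant pattern lies on the boundary of $\mathcal P$, so I will perturb it along a direction $\epsilon\in\ker C$ producing strict interlacing and take $x^{\ast}=\tfrac{1}{2}+t\epsilon$ for small $t>0$. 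My proposed perturbation is
\[
\epsilon_m^{(n)}\coloneqq\mu(n)-2m,\qquad \mu(n)\coloneqq\min(n,\,4k\beta-n).
\]

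The key verification will be that $\epsilon\in\ker C$ and that $\epsilon$ has strict interlacing. For the former, plugging $\epsilon$ into the weight gives the linear-in-$j$ expression $w(j)=2A+j(B-2A)$ with $A=1,\,B=2$, which collapses to $w(j)=2$; using the identities $\mu(4k\beta-2j)=2j,\ \mu(4k\beta-2j+1)=2j-1,\ \mu(4k\beta-2j+2)=2j-2$, the upper-half weight produces the same constant $2$. Hence every sum appearing in \eqref{eq:sympl_gf_const1}--\eqref{eq:sympl_gf_const3} equals $2\beta$, showing $\epsilon\in\ker C$ and thus $x^{\ast}\in K$. Strict interlacing of $\epsilon$ will follow from direct inequalities: within a row $\epsilon_m^{(n)}-\epsilon_{m+1}^{(n)}=2$, between adjacent rows $\epsilon_1^{(n+1)}-\epsilon_1^{(n)}=\pm 1$ with the correct sign on the lower and upper halves, and diagonally $\epsilon_m^{(n)}-\epsilon_{m+1}^{(n+1)}=1$, including at the central level $n=2k\beta$ where $\mu(2k\beta)=2k\beta$ but $\mu(2k\beta\pm 1)=2k\beta-1$. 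The uniform bound $|\epsilon_m^{(n)}|\le 2k\beta$ will finally give $x^{\ast,\,(n)}_m\in(0,1)$ for every $t<1/(4k\beta)$.

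\textbf{The hard part} will be producing a perturbation that lies simultaneously in the codimension-$k$ subspace $\ker C$ and has strict interlacing, rather than being trapped on the boundary of $\mathcal P$. The linear ansatz $A\mu(n)-Bm$ is a two-parameter family whose image under $C$ is controlled by the single scalar quantity $w(j)=2A+j(B-2A)$: the resonance $B=2A$ makes $w$ constant in $j$ and annihilates $C$, while $B>A$ separately ensures strict cross-diagonal interlacing. The reflection-symmetric tent $\mu(n)=\min(n,4k\beta-n)$ is what will allow the same ansatz to handle the lower-half constraints \eqref{eq:sympl_gf_const1}, the upper-half constraints \eqref{eq:sympl_gf_const2}, and the mixed odd-$k$ constraint \eqref{eq:sympl_gf_const3} in a single uniform manner.
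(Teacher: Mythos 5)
Your proof is correct and takes a genuinely different and arguably cleaner route than the paper's. The paper's argument first passes to the strictly--interlacing subset $\tilde{\mathcal V}^{Sp}_{(k,\beta)}$, then classifies the constraints into five geometric types (Types 1--5, Figures \ref{fig:constr1}--\ref{fig:constr10}), constructs a configuration separately for each type via reflection symmetry and a ``splitting'' trick, and finally runs a careful compatibility analysis for overlapping constraints, tracking how the interlacing forces certain coordinates to grow by bounded constant multiples of $\epsilon$. Your argument sidesteps this entire case analysis: you observe that each constraint compares two sums of $\beta$ values of the weight $w(j)$, so any configuration making $w(j)$ \emph{independent of $j$} satisfies every constraint automatically; the constant configuration $x\equiv\tfrac12$ already does this (with $w(j)=-\tfrac12$) but lies on the boundary of the polytope, and the single explicit perturbation $\epsilon_m^{(n)}=\mu(n)-2m$, $\mu(n)=\min(n,4k\beta-n)$, simultaneously keeps $w(j)$ constant (by the resonance $B=2A$ in your $A\mu(n)-Bm$ ansatz) and produces uniform strict interlacing on both the bottom and top half-patterns, including at the shared central row $n=2k\beta$. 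Taking $x^{\ast}=\tfrac12+t\epsilon$ with $0<t<\tfrac1{4k\beta}$ therefore lands in the relative interior of the polytope restricted to $\ker C$, and a small cube in the free coordinates lies inside $\mathcal V^{Sp}_{(k,\beta)}$. This is more compact and avoids the overlap-compatibility bookkeeping; what the paper's approach buys in exchange is explicit geometric information about the constraint regions (the Type 1--5 decomposition is reused as a visual aid and is structurally similar to what is needed in the orthogonal case), whereas your single-ansatz argument is somewhat opaque about \emph{why} the constraints have the shape they do, though it establishes positivity just as rigorously. One cosmetic remark: your diagonal difference $\epsilon_m^{(n)}-\epsilon_{m+1}^{(n+1)}=1$ is the correct quantity only on the lower half $n<2k\beta$; on the upper half the relevant interlacing diagonal is $\epsilon_m^{(n+1)}-\epsilon_{m+1}^{(n)}$, which also equals $1$ with your $\mu$, so the conclusion is unaffected.
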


\begin{proof}
We consider the following subset $\tilde{\mathcal{V}}_{(k,\beta)}^{Sp}\subset \mathcal{V}_{(k,\beta)}^{Sp}$ defined as for $\mathcal{V}_{(k,\beta)}^{Sp}$, but additional we require both that $0 <x_m^{(n)}<1$ and the interlacing is strict:
\begin{align*}
x_m^{(n+1)}>x_m^{(n)}>x_{m+1}^{(n+1)},
\end{align*}
the above holding also for $x_m^{(n)}$ for $(m,n)\in\mathcal{T}^{Sp}_{(k,\beta)}$ as given in~\eqref{extra_coord_sympl1}--\eqref{extra_coord_sympl3}. Now, we claim that if there exists at least one element in $\tilde{\mathcal{V}}_{(k,\beta)}^{Sp}$ then $\mathsf{vol}\left(\tilde{\mathcal{V}}_{(k,\beta)}^{Sp}\right)>0$ since $\tilde{\mathcal{V}}_{(k,\beta)}^{Sp}$ contains a small cube around this element (this clearly implies the statement of the lemma). This can easily be seen as follows. Take a continuous pattern $P = (z_m^{(n)})_{(m,n)\in \mathcal{S}^{Sp}_{(k,\beta)} } \in \tilde{\mathcal{V}}_{(k,\beta)}^{Sp}$ and let $\mathsf{d}$ be the minimal distance between any two elements $z_m^{(n)}$ of $P$, or between $z_m^{(n)}$ and $0$ or $1$ (including those $z_m^{(n)}$ corresponding to the points described in~\cref{extra_coord_sympl1,extra_coord_sympl2,extra_coord_sympl3}). We observe that if we change each of the coordinates $(z_m^{(n)})_{(m,n)\in \mathcal{S}^{Sp}_{(k,\beta)} }$ by at most some positive $\epsilon$, then there exists some constant $C_{k,\beta}$ such that the extra values
given by  $z_m^{(n)}$ for $(m,n)\in\mathcal{T}^{Sp}_{(k,\beta)}$ change by at most $C_{k,\beta}\times \epsilon$. Thus, if $\epsilon=\epsilon(\mathsf{d})$ is small enough we get that $(z_m^{(n)})_{(m,n)\in \mathcal{S}^{Sp}_{(k,\beta)} }+[-\epsilon,\epsilon]^{k\beta(2k\beta+1)-k} \subset \tilde{\mathcal{V}}_{(k,\beta)}^{Sp}$.

It then suffices to exhibit such an element. We observe that the constraints described in \eqref{eq:sympl_gf_const1}--\eqref{eq:sympl_gf_const3} essentially fall in to four distinct categories, hereafter types $1$, $2$, $3$, and $4$.  These can be visualised as in Figures~\ref{fig:constr1}, ~\ref{fig:constr2}, ~\ref{fig:constr3}, and ~\ref{fig:constr4}.  In each diagram, the shaded triangular region shows the part of the pattern $P\in GT_{Sp}(N;k;\beta)$ which was fixed to be $N$, and the numbers shown to the left of the pattern are the `row coefficient'.  One can reconstruct the particular constraint described in each figure by first multiplying each row sum by its row coefficient, and the summing the resulting expressions for the top half of the pattern, and equating it with the sum for the bottom half of the pattern (the `symmetry line' is given by the row with row coefficient $0$).   For example, Figure~\ref{fig:constr1} shows the following constraint, ($k=1, \beta=3$ in \eqref{eq:sympl_gf_const3}), 

\begin{align*}
\sum_{j=1}^{3}\left[\sum_{l=1}^{j}y_l^{(2j)}-2\sum_{l=1}^{j}y_l^{(2j-1)}+\sum_{l=1}^{j-1}y_l^{(2j-2)}\right]&=\sum_{j=1}^{3}\left[\sum_{l=1}^{j}y_l^{(12-2j)}-2\sum_{l=1}^{j}y_l^{(13-2j)}+\sum_{l=1}^{j-1}y_l^{(14-2j)}\right]\\
  \intertext{or, equivalently,}
2\sum_{j=1}^{5}(-1)^jr^{(j)}&=2\sum_{j=7}^{11}(-1)^jr^{(j)},
\end{align*}
where $r^{(j)}$ is the sum of the elements in row $j$. 

We will first show that it is possible to exhibit an element with strict interlacing and positive distances from $0$ and $1$ for each of the four types of constraints. We will then argue that these constructions are compatible and yield an element of $\tilde{\mathcal{V}}_{(k,\beta)}^{Sp}$; this fact is not entirely trivial since two consecutive constraints (e.g. $i=1,2$ in \eqref{eq:sympl_gf_const1}) overlap in a single row, see Figures~\ref{fig:constr5} and ~\ref{fig:constr6}, and clearly interlacing still plays a role.

The first two types of constraints, types $1$ and $2$ are shown in Figures~\ref{fig:constr1} and ~\ref{fig:constr2}.  Type $1$ only occurs for $k=1$ and Figure~\ref{fig:constr1} shows an example for $k=1$ and $\beta=3$.  In this case, only \eqref{eq:sympl_gf_const3} is relevant.  The row sum for the ($2k\beta$)th row appears on both sides of ~\eqref{eq:sympl_gf_const3}, and so this contribution is cancelled out.  All the remaining row sums have a coefficient of either $+2$ or $-2$ in~\eqref{eq:sympl_gf_const3}, and precisely which coefficient corresponds to which row can be seen on the left in Figure~\ref{fig:constr1}.  Similarly, type $2$ is the generalisation of type $1$ but for $k>1$, odd.  For these larger values of odd $k$, the shape of the constraint changes from triangular to pentagonal, but always occurs in the centre portion of the overall pattern. Figure~\ref{fig:constr2} shows the type $2$ for $k=3$ and $\beta=2$.  For both said constraints, it is easy to exhibit such an element by symmetry: simply pick the lower half-pattern to have strict interlacing and coordinates a positive distance away from $0$ and $1$ and reflect in the symmetry line (c.f. the row with factor $0$ in either figure).

Constraints of types $3$ and $4$ are shown in Figures~\ref{fig:constr3} and ~\ref{fig:constr4}.  Type $3$ occurs for $k\geq 2$ and corresponds to \cref{eq:sympl_gf_const1,eq:sympl_gf_const2} for $i=1$ - henceforth we say that a `lower' type $3$ pattern comes from setting $i=1$ in \eqref{eq:sympl_gf_const1}; whereas an `upper' type $3$ pattern is the analogous object using \eqref{eq:sympl_gf_const2}.  The shape of type $3$ is always triangular and covers the lowermost and uppermost portion of the overall pattern (c.f. the top and bottom patterned triangles in Figure~\ref{fig:indexset}). Figure~\ref{fig:constr3} shows type $3$ for $k=2$, $\beta=2$, and in particular the lower version, corresponding to $i=1$ in~\eqref{eq:sympl_gf_const1}.  Note now that all rows have coefficients that are either $\pm2$, except for the top (resp. for the upper version, bottom) row which gets a coefficient of $1$.  Type $4$ occurs for $k\geq 4$ and represents $i>1$ in  \cref{eq:sympl_gf_const1,eq:sympl_gf_const2}; the terms `lower' and `upper' are used just as for type $3$.  Type $4$ constraints are trapezoidal, and an example of the lower type is drawn in Figure~\ref{fig:constr4} for $k=4, \beta=2$.  Here (as for the general case) the row coefficients are once again symmetrical around the `overlap' row. For type $3$ and type $4$ constraints, exhibiting an element is more complicated than type $1$ and $2$, and we proceed as follows. 

In case of a constraint of type $3$, we split the configuration as in Figure~\ref{fig:constr7}. This results in a type $1$ constraint and a new constraint, hereafter referred to as type $5$.  In Figure~\ref{fig:constr7}, the top diagram gives an example of this splitting for a general form of a lower type $3$, and the particular form of the resulting type $5$ constraint is shown in the bottom diagram.  For the constraint of type $1$ resulting from the splitting, we will again use symmetry. However, the constraint of type $5$ requires a separate argument. 

Take $\epsilon>0$ to be very small according to $k$ and $\beta$. We pick the lower half-pattern of constraint type $1$, see Figure~\ref{fig:constr7}, so that the distances between any two nearest coordinates, and between the closest coordinate to $0$ (and respectively $1$), is strictly positive and at most $\epsilon$. We then use reflection through the middle row (the row with $0$ as its row coefficient) for the upper half-pattern. We then proceed to the constraint of type $5$. We again pick the coordinates, except the largest one (see circled element in Figure~\ref{fig:constr7}) to be at a strictly positive distance of at most $\epsilon$ to its neighbour coordinates, and to the edge of the upper half-pattern of the constraint of type $1$. Then, the total sum corresponding to constraint type $5$ excluding the largest coordinate, which we have yet to pick, is negative and at most $c_{k,\beta} \times \epsilon$ in absolute value, for some constant $c_{k,\beta}$ depending only on $k$ and $\beta$. We can then pick the largest coordinate so that this weighted sum over all coordinates is zero as long as $c_{k,\beta} \times \epsilon<1$. 

In order to deal with a constraint of type $4$ we split it into a constraint of type $2$ and type $5$, see Figure~\ref{fig:constr8}. There, the general `lower' type $4$ constraint is shown, along with the method of splitting.   One may use exactly the same method described above for type $3$ constraints.

Finally, we need to argue that using the procedures above is compatible with putting constraints together.  For example, type $3$ and type $4$ constraints overlap, see~\Cref{fig:constr5,fig:constr9}, and two type $4$ constraints also may overlap, see~\Cref{fig:constr6,fig:constr10}. With a mixture of type $3$ and type $4$ (the case for a mixture of two type $4$s is analogous), if we use the algorithm above to satisfy the constraint of type $3$, then the interlacing forces the coordinates at the edges of the next constraint of type $4$ to be `large', of the order of $c_{k,\beta}\times \epsilon$ for the constant $c_{k,\beta}$ described above. This then forces the largest coordinate of the constraint of type $5$ coming from the splitting of the constraint of type $4$ to be $\tilde{c}_{k,\beta}\times \epsilon$ for some (possibly much) larger constant $\tilde{c}_{k,\beta}$. However, we note that this does not present any real problems since we only need to apply this procedure a finite number of times and thus as long as we pick $\epsilon$ small enough so that $c_{k,\beta}^* \times \epsilon <1$ for some finite and fixed constant $c_{k,\beta}^*$, the result is as claimed.
\end{proof}

%++++++++++++++++++++++++++++++++++%
%     Sympl constraint figures     %
%++++++++++++++++++++++++++++++++++%

%------- Constraints of type 1 and 2 ------%

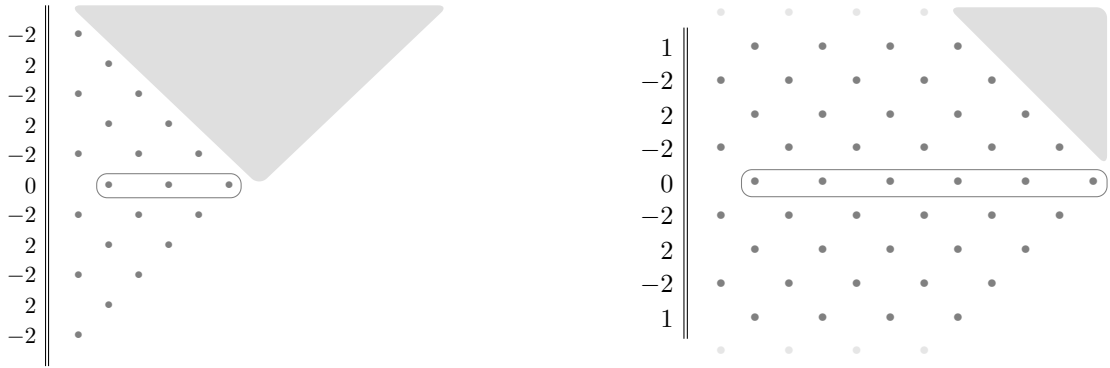
\begin{figure}[!htb]
\centering
\begin{subfigure}[t]{.48\textwidth}
  \centering
  \begin{tikzpicture}[scale=0.8, every node/.style={scale=0.8}]
\path [fill=gray!25, rounded corners] (0.35,3) -- (3.5,0) -- (6.65,3) -- cycle;

\draw[thin] (0,-3) -- (0,3);
\draw[thin] (-0.05,-3) -- (-.05,3);
\node at (-.3,0) {$0$};
\node at (-.45,0.5) {$-2$};
\node at (-.3,1) {$2$};
\node at (-.45,1.5) {$-2$};
\node at (-.3,2) {$2$};
\node at (-.45,2.5) {$-2$};
\node at (-.45,-0.5) {$-2$};
\node at (-.3,-1) {$2$};
\node at (-.45,-1.5) {$-2$};
\node at (-.3,-2) {$2$};
\node at (-.45,-2.5) {$-2$};

\node[mark size=2pt, color=gray] at (1,0) {$\bbigcdot$};
\node[mark size=2pt, color=gray] at (2,0) {$\bbigcdot$};
\node[mark size=2pt, color=gray] at (3,0) {$\bbigcdot$};
\node[mark size=2pt, color=gray] at (0.5,0.5) {$\bbigcdot$};
\node[mark size=2pt, color=gray] at (1.5,0.5) {$\bbigcdot$};
\node[mark size=2pt, color=gray] at (2.5,0.5) {$\bbigcdot$};
\node[mark size=2pt, color=gray] at (0.5,-0.5) {$\bbigcdot$};
\node[mark size=2pt, color=gray] at (1.5,-0.5) {$\bbigcdot$};
\node[mark size=2pt, color=gray] at (2.5,-0.5) {$\bbigcdot$};
\node[mark size=2pt, color=gray] at (1,1) {$\bbigcdot$};
\node[mark size=2pt, color=gray] at (2,1) {$\bbigcdot$};
\node[mark size=2pt, color=gray] at (1,-1) {$\bbigcdot$};
\node[mark size=2pt, color=gray] at (2,-1) {$\bbigcdot$};
\node[mark size=2pt, color=gray] at (0.5,1.5) {$\bbigcdot$};
\node[mark size=2pt, color=gray] at (1.5,1.5) {$\bbigcdot$};
\node[mark size=2pt, color=gray] at (0.5,-1.5) {$\bbigcdot$};
\node[mark size=2pt, color=gray] at (1.5,-1.5) {$\bbigcdot$};
\node[mark size=2pt, color=gray] at (1,2) {$\bbigcdot$};
\node[mark size=2pt, color=gray] at (0.5,2.5) {$\bbigcdot$};
\node[mark size=2pt, color=gray] at (1,-2) {$\bbigcdot$};
\node[mark size=2pt, color=gray] at (0.5,-2.5) {$\bbigcdot$};

\draw[rounded corners, color=gray] (0.8,-0.2) rectangle (3.2,0.2) {};

\end{tikzpicture}
\caption{Example of constraint type $1$.  This occurs exclusively for $k=1$, and is drawn for $k=1$, $\beta=3$.  The circled coordinates are those which feature in the `overlap' of the constraint~\eqref{eq:sympl_gf_const3}. The grey shaded area shows which elements are fixed to be $N$. The numbers on the left show the coefficient that appears against any given row sum in~\eqref{eq:sympl_gf_const3}.}\label{fig:constr1}
\end{subfigure}\hfill
\begin{subfigure}[t]{.48\textwidth}
  \centering
\begin{tikzpicture}[scale=0.9, every node/.style={scale=0.9}]

%\path [rounded corners, fill=gray!10, postaction={path fading=north, fading angle=0, fill=gray}] (3.85,2.6) -- (6.2,0.25) -- (6.2,2.6) -- cycle;
\path [rounded corners, fill=gray!25] (3.85,2.6) -- (6.2,0.25) -- (6.2,2.6) -- cycle;

\draw[thin] (0,-2.3) -- (0,2.3);
\draw[thin] (-0.05,-2.3) -- (-.05,2.3);
\node at (-.3,0) {$0$};
\node at (-.45,0.5) {$-2$};
\node at (-.3,1) {$2$};
\node at (-.45,1.5) {$-2$};
\node at (-.3,2) {$1$};
\node at (-.45,-0.5) {$-2$};
\node at (-.3,-1) {$2$};
\node at (-.45,-1.5) {$-2$};
\node at (-.3,-2) {$1$};

\node[mark size=2pt, color=gray] at (1,0) {$\bbigcdot$};
\node[mark size=2pt, color=gray] at (2,0) {$\bbigcdot$};
\node[mark size=2pt, color=gray] at (3,0) {$\bbigcdot$};
\node[mark size=2pt, color=gray] at (4,0) {$\bbigcdot$};
\node[mark size=2pt, color=gray] at (5,0) {$\bbigcdot$};
\node[mark size=2pt, color=gray] at (6,0) {$\bbigcdot$};
\node[mark size=2pt, color=gray] at (0.5,0.5) {$\bbigcdot$};
\node[mark size=2pt, color=gray] at (1.5,0.5) {$\bbigcdot$};
\node[mark size=2pt, color=gray] at (2.5,0.5) {$\bbigcdot$};
\node[mark size=2pt, color=gray] at (3.5,0.5) {$\bbigcdot$};
\node[mark size=2pt, color=gray] at (4.5,0.5) {$\bbigcdot$};
\node[mark size=2pt, color=gray] at (5.5,0.5) {$\bbigcdot$};
\node[mark size=2pt, color=gray] at (0.5,-0.5) {$\bbigcdot$};
\node[mark size=2pt, color=gray] at (1.5,-0.5) {$\bbigcdot$};
\node[mark size=2pt, color=gray] at (2.5,-0.5) {$\bbigcdot$};
\node[mark size=2pt, color=gray] at (3.5,-0.5) {$\bbigcdot$};
\node[mark size=2pt, color=gray] at (4.5,-0.5) {$\bbigcdot$};
\node[mark size=2pt, color=gray] at (5.5,-0.5) {$\bbigcdot$};
\node[mark size=2pt, color=gray] at (1,1) {$\bbigcdot$};
\node[mark size=2pt, color=gray] at (2,1) {$\bbigcdot$};
\node[mark size=2pt, color=gray] at (3,1) {$\bbigcdot$};
\node[mark size=2pt, color=gray] at (4,1) {$\bbigcdot$};
\node[mark size=2pt, color=gray] at (5,1) {$\bbigcdot$};
\node[mark size=2pt, color=gray] at (1,-1) {$\bbigcdot$};
\node[mark size=2pt, color=gray] at (2,-1) {$\bbigcdot$};
\node[mark size=2pt, color=gray] at (3,-1) {$\bbigcdot$};
\node[mark size=2pt, color=gray] at (4,-1) {$\bbigcdot$};
\node[mark size=2pt, color=gray] at (5,-1) {$\bbigcdot$};
\node[mark size=2pt, color=gray] at (0.5,1.5) {$\bbigcdot$};
\node[mark size=2pt, color=gray] at (1.5,1.5) {$\bbigcdot$};
\node[mark size=2pt, color=gray] at (2.5,1.5) {$\bbigcdot$};
\node[mark size=2pt, color=gray] at (3.5,1.5) {$\bbigcdot$};
\node[mark size=2pt, color=gray] at (4.5,1.5) {$\bbigcdot$};
\node[mark size=2pt, color=gray] at (0.5,-1.5) {$\bbigcdot$};
\node[mark size=2pt, color=gray] at (1.5,-1.5) {$\bbigcdot$};
\node[mark size=2pt, color=gray] at (2.5,-1.5) {$\bbigcdot$};
\node[mark size=2pt, color=gray] at (3.5,-1.5) {$\bbigcdot$};
\node[mark size=2pt, color=gray] at (4.5,-1.5) {$\bbigcdot$};
\node[mark size=2pt, color=gray] at (1,2) {$\bbigcdot$};
\node[mark size=2pt, color=gray] at (2,2) {$\bbigcdot$};
\node[mark size=2pt, color=gray] at (3,2) {$\bbigcdot$};
\node[mark size=2pt, color=gray] at (4,2) {$\bbigcdot$};
\node[mark size=2pt, color=gray] at (1,-2) {$\bbigcdot$};
\node[mark size=2pt, color=gray] at (2,-2) {$\bbigcdot$};
\node[mark size=2pt, color=gray] at (3,-2) {$\bbigcdot$};
\node[mark size=2pt, color=gray] at (4,-2) {$\bbigcdot$};

\node[mark size=2pt, color=gray!20] at (0.5,-2.5) {$\bbigcdot$};
\node[mark size=2pt, color=gray!20] at (1.5,-2.5) {$\bbigcdot$};
\node[mark size=2pt, color=gray!20] at (2.5,-2.5) {$\bbigcdot$};
\node[mark size=2pt, color=gray!20] at (3.5,-2.5) {$\bbigcdot$};
\node[mark size=2pt, color=gray!20] at (0.5,2.5) {$\bbigcdot$};
\node[mark size=2pt, color=gray!20] at (1.5,2.5) {$\bbigcdot$};
\node[mark size=2pt, color=gray!20] at (2.5,2.5) {$\bbigcdot$};
\node[mark size=2pt, color=gray!20] at (3.5,2.5) {$\bbigcdot$};

\draw[rounded corners, color=gray] (0.8,-0.2) rectangle (6.2,0.2) {};

\end{tikzpicture}

\caption{Example of constraint type $2$.  This occurs for $k>1$, $k$ odd, and is drawn for $k=3$, $\beta=2$.  The circled coordinates are those which feature in the `overlap' of constraint~\eqref{eq:sympl_gf_const3} (i.e. those in row $2k\beta$). The grey shaded area shows the lower part of the section which is fixed to be $N$, and the number on the left show the coefficient that appears against any given row sum in~\eqref{eq:sympl_gf_const3}. }\label{fig:constr2}

\end{subfigure}
\caption{Figures showing constraints of type $1$ and $2$ for the symplectic case.}
\label{fig:type1and2}
\end{figure}

%----- Constraints of type 3 and 4 ------%

\begin{figure}[!htb]
\centering
\begin{subfigure}[t]{0.48\textwidth}
\centering
\begin{tikzpicture}
\draw[thin] (0,-0.2) -- (0,3.7);
\draw[thin] (-0.05,-0.2) -- (-.05,3.7);
\node at (-.45,0) {$-2$};
\node at (-.3,0.5) {$2$};
\node at (-.45,1) {$-2$};
\node at (-.3,1.5) {$0$};
\node at (-.45,2) {$-2$};
\node at (-.3,2.5) {$2$};
\node at (-.45,3) {$-2$};
\node at (-.3,3.5) {$1$};

\node[mark size=2pt, color=gray] at (0.5,0) {$\bbigcdot$};
\node[mark size=2pt, color=gray] at (1,0.5) {$\bbigcdot$};
\node[mark size=2pt, color=gray] at (0.5,1) {$\bbigcdot$};
\node[mark size=2pt, color=gray] at (1.5,1) {$\bbigcdot$};
\node[mark size=2pt, color=gray] at (1,1.5) {$\bbigcdot$};
\node[mark size=2pt, color=gray] at (2,1.5) {$\bbigcdot$};
\node[mark size=2pt, color=gray] at (0.5,2) {$\bbigcdot$};
\node[mark size=2pt, color=gray] at (1.5,2) {$\bbigcdot$};
\node[mark size=2pt, color=gray] at (2.5,2) {$\bbigcdot$};
\node[mark size=2pt, color=gray] at (1,2.5) {$\bbigcdot$};
\node[mark size=2pt, color=gray] at (2,2.5) {$\bbigcdot$};
\node[mark size=2pt, color=gray] at (3,2.5) {$\bbigcdot$};
\node[mark size=2pt, color=gray] at (0.5,3) {$\bbigcdot$};
\node[mark size=2pt, color=gray] at (1.5,3) {$\bbigcdot$};
\node[mark size=2pt, color=gray] at (2.5,3) {$\bbigcdot$};
\node[mark size=2pt, color=gray] at (3.5,3) {$\bbigcdot$};
\node[mark size=2pt, color=gray] at (1,3.5) {$\bbigcdot$};
\node[mark size=2pt, color=gray] at (2,3.5) {$\bbigcdot$};
\node[mark size=2pt, color=gray] at (3,3.5) {$\bbigcdot$};
\node[mark size=2pt, color=gray] at (4,3.5) {$\bbigcdot$};

\draw[rounded corners, color=gray] (0.8,1.3) rectangle (2.2,1.7) {};

\end{tikzpicture}

\caption{Example of constraint type $3$.  This occurs for $k\ge2$, and is partly drawn for $k=2$, $\beta=2$.  The figure depicts the first constraint  (i.e. $i=1$ in \eqref{eq:sympl_gf_const1}) and the boxed elements are those which appear in the `overlap' of said constraint. Note that by reflecting this diagram in the $x$-plane, one gets a figure for the last constraint, i.e. $i=1$ in \eqref{eq:sympl_gf_const2}. The numbers on the left are the coefficients that appear against the relevant row in \eqref{eq:sympl_gf_const1}, with $i=1$. }\label{fig:constr3}
\end{subfigure}\hfill
\begin{subfigure}[t]{0.48\textwidth}
\centering
\begin{tikzpicture}[scale=0.9, every node/.style={scale=0.9}]
\draw[thin] (0,-0.7) -- (0,3.7);
\draw[thin] (-0.05,-0.7) -- (-.05,3.7);
\node at (-.3,-0.5) {};
\node at (-.45,0) {$-2$};
\node at (-.45,0.5) {$+2$};
\node at (-.45,1) {$-2$};
\node at (-.3,1.5) {$0$};
\node at (-.45,2) {$-2$};
\node at (-.3,2.5) {$2$};
\node at (-.45,3) {$-2$};
\node at (-.3,3.5) {$1$};
\node[mark size=2pt, color=gray] at (1,-0.5) {$\bbigcdot$};
\node[mark size=2pt, color=gray] at (2,-0.5) {$\bbigcdot$};
\node[mark size=2pt, color=gray] at (3,-0.5) {$\bbigcdot$};
\node[mark size=2pt, color=gray] at (4,-0.5) {$\bbigcdot$};
\node[mark size=2pt, color=gray] at (0.5,0) {$\bbigcdot$};
\node[mark size=2pt, color=gray] at (1.5,0) {$\bbigcdot$};
\node[mark size=2pt, color=gray] at (2.5,0) {$\bbigcdot$};
\node[mark size=2pt, color=gray] at (3.5,0) {$\bbigcdot$};
\node[mark size=2pt, color=gray] at (4.5,0) {$\bbigcdot$};
\node[mark size=2pt, color=gray] at (1,0.5) {$\bbigcdot$};
\node[mark size=2pt, color=gray] at (2,0.5) {$\bbigcdot$};
\node[mark size=2pt, color=gray] at (3,0.5) {$\bbigcdot$};
\node[mark size=2pt, color=gray] at (4,0.5) {$\bbigcdot$};
\node[mark size=2pt, color=gray] at (5,0.5) {$\bbigcdot$};
\node[mark size=2pt, color=gray] at (0.5,1) {$\bbigcdot$};
\node[mark size=2pt, color=gray] at (1.5,1) {$\bbigcdot$};
\node[mark size=2pt, color=gray] at (2.5,1) {$\bbigcdot$};
\node[mark size=2pt, color=gray] at (3.5,1) {$\bbigcdot$};
\node[mark size=2pt, color=gray] at (4.5,1) {$\bbigcdot$};
\node[mark size=2pt, color=gray] at (5.5,1) {$\bbigcdot$};
\node[mark size=2pt, color=gray] at (1,1.5) {$\bbigcdot$};
\node[mark size=2pt, color=gray] at (2,1.5) {$\bbigcdot$};
\node[mark size=2pt, color=gray] at (3,1.5) {$\bbigcdot$};
\node[mark size=2pt, color=gray] at (4,1.5) {$\bbigcdot$};
\node[mark size=2pt, color=gray] at (5,1.5) {$\bbigcdot$};
\node[mark size=2pt, color=gray] at (6,1.5) {$\bbigcdot$};
\node[mark size=2pt, color=gray] at (0.5,2) {$\bbigcdot$};
\node[mark size=2pt, color=gray] at (1.5,2) {$\bbigcdot$};
\node[mark size=2pt, color=gray] at (2.5,2) {$\bbigcdot$};
\node[mark size=2pt, color=gray] at (3.5,2) {$\bbigcdot$};
\node[mark size=2pt, color=gray] at (4.5,2) {$\bbigcdot$};
\node[mark size=2pt, color=gray] at (5.5,2) {$\bbigcdot$};
\node[mark size=2pt, color=gray] at (6.5,2) {$\bbigcdot$};
\node[mark size=2pt, color=gray] at (1,2.5) {$\bbigcdot$};
\node[mark size=2pt, color=gray] at (2,2.5) {$\bbigcdot$};
\node[mark size=2pt, color=gray] at (3,2.5) {$\bbigcdot$};
\node[mark size=2pt, color=gray] at (4,2.5) {$\bbigcdot$};
\node[mark size=2pt, color=gray] at (5,2.5) {$\bbigcdot$};
\node[mark size=2pt, color=gray] at (6,2.5) {$\bbigcdot$};
\node[mark size=2pt, color=gray] at (7,2.5) {$\bbigcdot$};
\node[mark size=2pt, color=gray] at (0.5,3) {$\bbigcdot$};
\node[mark size=2pt, color=gray] at (1.5,3) {$\bbigcdot$};
\node[mark size=2pt, color=gray] at (2.5,3) {$\bbigcdot$};
\node[mark size=2pt, color=gray] at (3.5,3) {$\bbigcdot$};
\node[mark size=2pt, color=gray] at (4.5,3) {$\bbigcdot$};
\node[mark size=2pt, color=gray] at (5.5,3) {$\bbigcdot$};
\node[mark size=2pt, color=gray] at (6.5,3) {$\bbigcdot$};
\node[mark size=2pt, color=gray] at (7.5,3) {$\bbigcdot$};
\node[mark size=2pt, color=gray] at (1,3.5) {$\bbigcdot$};
\node[mark size=2pt, color=gray] at (2,3.5) {$\bbigcdot$};
\node[mark size=2pt, color=gray] at (3,3.5) {$\bbigcdot$};
\node[mark size=2pt, color=gray] at (4,3.5) {$\bbigcdot$};
\node[mark size=2pt, color=gray] at (5,3.5) {$\bbigcdot$};
\node[mark size=2pt, color=gray] at (6,3.5) {$\bbigcdot$};
\node[mark size=2pt, color=gray] at (7,3.5) {$\bbigcdot$};
\node[mark size=2pt, color=gray] at (8,3.5) {$\bbigcdot$};

\draw[rounded corners, color=gray] (0.8,1.3) rectangle (6.2,1.7) {};

\end{tikzpicture}

\caption{Example of constraint type $4$.  This occurs for $k\geq4$ and is drawn for $k=4, \beta=2$ and depicts the (lower) constraint for $i=2$ in \eqref{eq:sympl_gf_const1}.  The boxed elements are those which feature in the `overlap' of the described constraint, and the numbers on the left give the coefficient of a given row sum in \eqref{eq:sympl_gf_const1}.  Note that the shape and row coefficients of the upper constraint can be seen by reflecting the diagram in the $x$-plane. }\label{fig:constr4}
\end{subfigure}
\caption{Figures showing constraints of type 3 and 4 for the symplectic case.}
\end{figure}

%----- Constraint Type 5 ------%

\begin{figure}[!htb]
\centering
\begin{tikzpicture}

%----- Split up as Type 1 and Type 5 -------%
\node at (11.5,9) {Splitting of Type 3};
\path [fill=gray!10,rounded corners] (7.3,7.5) -- (15.7,7.5) -- (11.5,3.5) -- cycle;

\draw[thin] (7,-0.2) -- (7,7);
\draw[thin] (6.95,-0.2) -- (6.95,7);
\node at (6.55,0) {$-2$};
\node at (6.7,0.5) {$2$};
\node at (6.55,1) {$-2$};
\node at (6.7,1.8) {$\vdots$};
\node at (6.7,3) {$\vdots$};
\node at (6.7,3.5) {$0$};
\node at (6.55,4) {$-2$};
\node at (6.7,4.5) {$2$};
\node at (6.7,5.3) {$\vdots$};
\node at (6.7,6.1) {$\vdots$};
\node at (6.55,6.7) {$-2$};

\draw[thin] (16,3.8) -- (16,7.5);
\draw[thin] (15.95,3.8) -- (15.95,7.5);
\node at (16.55,4.5) {$2$};
\node at (16.4,4) {$-2$};
\node at (16.55,5.3) {$\vdots$};
\node at (16.55,6.1) {$\vdots$};
\node at (16.4,6.7) {$-2$};
\node at (16.55,7.2) {$1$};

\node[mark size=2pt, color=gray] at (7.5,0) {$\bbigcdot$};
\node[mark size=2pt, color=gray] at (8,0.5) {$\bbigcdot$};
\node[mark size=2pt, color=gray] at (7.5,1) {$\bbigcdot$};
\node[mark size=2pt, color=gray] at (8.5,1) {$\bbigcdot$};
\node[mark size=2pt, color=gray] at (7.5,1.8) {\vdots};
\draw (9,1.8) node[rotate=85,color=gray] {$\ddots$};
\node[mark size=2pt, color=gray] at (7.5,3) {\vdots};
\draw (10.2,2.85) node[rotate=85,color=gray] {$\ddots$};
\node[mark size=2pt, color=gray] at (8,3.5) {$\bbigcdot$};
\node[color=gray] at (9.5,3.5) {$\cdots$};
\node[mark size=2pt, color=gray] at (11,3.5) {$\bbigcdot$};
\draw[decorate, decoration={brace, amplitude=10pt}]  (11,3.4) --  (8,3.4) node [midway, yshift=-18pt] {\small{$\beta$}};
\node[mark size=2pt, color=gray] at (7.5,3.95) {$\bbigcdot$};
\node[mark size=2pt, color=gray] at (8.5,3.95) {$\bbigcdot$};
\node[color=gray] at (9.5,4) {$\cdots$};
\node[mark size=2pt, color=gray] at (10.5,3.95) {$\bbigcdot$};
\node[mark size=2pt, color=gray] at (11.5,3.95) {$\bbigcdot$};
\draw[decorate, decoration={brace, amplitude=10pt}]  (7.5,4.1) -- (10.5,4.1) node [midway, yshift=18pt] {\small{$\beta$}};
\node[mark size=2pt, color=gray] at (8,4.5) {$\bbigcdot$};
\node[color=gray] at (9,4.5) {$\cdots$};
\node[mark size=2pt, color=gray] at (10,4.5) {$\bbigcdot$};
\node[mark size=2pt, color=gray] at (11,4.5) {$\bbigcdot$};
\node[mark size=2pt, color=gray] at (12,4.5) {$\bbigcdot$};
\node[mark size=2pt, color=gray] at (7.5,5.3) {\vdots};
\draw (9.6,5.15) node[rotate=-6,color=gray] {$\ddots$};
\draw (10.3,5.2) node[rotate=-6,color=gray] {$\ddots$};
\draw (8.6,6.15) node[rotate=-6,color=gray] {$\ddots$};
\draw (9.3,6.2) node[rotate=-6,color=gray] {$\ddots$};
\draw (12.7,5.2) node[rotate=80,color=gray] {$\ddots$};
\draw (13.7,6.2) node[rotate=80,color=gray] {$\ddots$};
\node[mark size=2pt, color=gray] at (7.5,6.1) {\vdots};
\node[mark size=2pt, color=gray] at (7.5,6.7) {$\bbigcdot$};
\node[mark size=2pt, color=gray] at (8.5,6.7) {$\bbigcdot$};
\node[mark size=2pt, color=gray] at (8,7.17) {$\bbigcdot$};
\node[mark size=2pt, color=gray] at (9,7.17) {$\bbigcdot$};
\node[color=gray] at (10.5,7.2) {$\cdots$};
\node[color=gray] at (12.5,7.2) {$\cdots$};
\node[color=gray] at (9.5,6.7) {$\cdots$};
\node[color=gray] at (11.5,6.7) {$\cdots$};
\node[color=gray] at (13.5,6.7) {$\cdots$};
\node[mark size=2pt, color=gray] at (14.5,6.7) {$\bbigcdot$};
\node[mark size=2pt, color=gray] at (14,7.17) {$\bbigcdot$};
\node[mark size=2pt, color=gray] at (15,7.17) {$\bbigcdot$};
\draw[decorate, decoration={brace, amplitude=10pt}]  (8,7.5) -- (15,7.5) node [midway, yshift=18pt] {\small{$2\beta$}};
\draw[gray!10!black] (15,7.2) circle (5pt);

\draw[decorate, decoration={brace, amplitude=10pt}]  (6.2,-0.2) -- (6.2,7) node [midway, xshift=-25pt] {\small{Type 1}};
\draw[decorate, decoration={brace, amplitude=10pt}] (17,7.5) -- (17,3.8) node [midway, xshift=25pt] {\small{Type 5}};

\node at (11.5,-0.75) {Type 5};
\draw[decorate, decoration={brace, amplitude=10pt}]  (8,-2) -- (15,-2) node [midway, yshift=18pt] {\small{$2\beta$}};
\path [fill=gray!10,rounded corners] (7.3,-2) -- (15.7,-2) -- (11.5,-6) -- cycle;

\draw[thin, <->] (7,-6) -- (7,-2);
\node at (6.65,-4) {$2\beta$};
\node at (7.4,-2.4) {\small{$1$}};
\node at (7.8,-2.9) {\small{$-2$}};
\draw (8.8,-3.75) node[rotate=-6] {$\ddots$};
\node at (9.6,-4.6) {\small{$-2$}};
\node at (10.2,-5.1) {\small{$2$}};
\node at (10.6,-5.6) {\small{$-2$}};

\node at (15.8,-2.4) {\small{$y^{(2\beta)}$}};
\node at (15.5,-2.9) {\small{$y^{(2\beta-1)}$}};
\draw (13.9,-3.9) node[rotate=80] {$\ddots$};
\node at (13.4,-4.6) {\small{$y^{(3)}$}};
\node at (12.9,-5.1) {\small{$y^{(2)}$}};
\node at (12.3,-5.6) {\small{$y^{(1)}$}};

\node[mark size=2pt, color=gray] at (8.5,-2.85) {$\bbigcdot$};
\node[mark size=2pt, color=gray] at (8,-2.33) {$\bbigcdot$};
\node[mark size=2pt, color=gray] at (9,-2.33) {$\bbigcdot$};
\node[color=gray] at (10.5,-2.3) {$\cdots$};
\node[color=gray] at (12.5,-2.3) {$\cdots$};
\node[color=gray] at (9.5,-2.8) {$\cdots$};
\node[color=gray] at (11.5,-2.8) {$\cdots$};
\node[color=gray] at (13.5,-2.8) {$\cdots$};
\node[mark size=2pt, color=gray] at (14.5,-2.85) {$\bbigcdot$};
\node[mark size=2pt, color=gray] at (14,-2.33) {$\bbigcdot$};
\node[mark size=2pt, color=gray] at (15,-2.33) {$\bbigcdot$};
\draw (13,-4.1) node[rotate=80,color=gray] {$\ddots$};
\draw (13.75,-3.4) node[rotate=80,color=gray] {$\ddots$};
\draw (9.15,-3.3) node[rotate=-6,color=gray] {$\ddots$};
\draw (9.9,-4) node[rotate=-6,color=gray] {$\ddots$};
\node[mark size=2pt, color=gray] at (12.5,-4.55) {$\bbigcdot$};
\node[mark size=2pt, color=gray] at (11.5,-4.55) {$\bbigcdot$};
\node[mark size=2pt, color=gray] at (10.5,-4.55) {$\bbigcdot$};
\node[mark size=2pt, color=gray] at (12,-5.05) {$\bbigcdot$};
\node[mark size=2pt, color=gray] at (11,-5.05) {$\bbigcdot$};
\node[mark size=2pt, color=gray] at (11.5,-5.55) {$\bbigcdot$};
\draw[gray!10!black] (15,-2.3) circle (5pt);
\end{tikzpicture}

\caption{Figures giving the construction of a type $5$ constraint, which comes from splitting a type $3$ constraint (see Figure~\ref{fig:constr3}).  This occurs for $k\ge2$, and the version for a lower type $3$ constraint (i.e. $i=1$ in \eqref{eq:sympl_gf_const1}) is drawn in the upper figure to show the situation for general $\beta$, and $k\geq 2$.  The type $3$ constraint is split in to one of type $1$ (the unshaded region) and one of a new type, type $5$ (the shaded region). The bottom figure shows explicitly the constraint of type $5$, which forms a Gelfand-Tsetlin pattern $(y^{(i)})_{i=1}^{2\beta}$, where $y^{(i)}\in \mathsf{W}^+_i$ and $y^{(i)}\prec y^{(i+1)}$. In both diagrams, the circled top right element is the largest, and the numbers on either side show the row sum weightings for $i=1$ in \eqref{eq:sympl_gf_const1}.  The equivalent form for the upper version (i.e. $i=1$ in \eqref{eq:sympl_gf_const2}) can be seen by reflecting the top diagram in the $x$-plane.}\label{fig:constr7}
\end{figure}
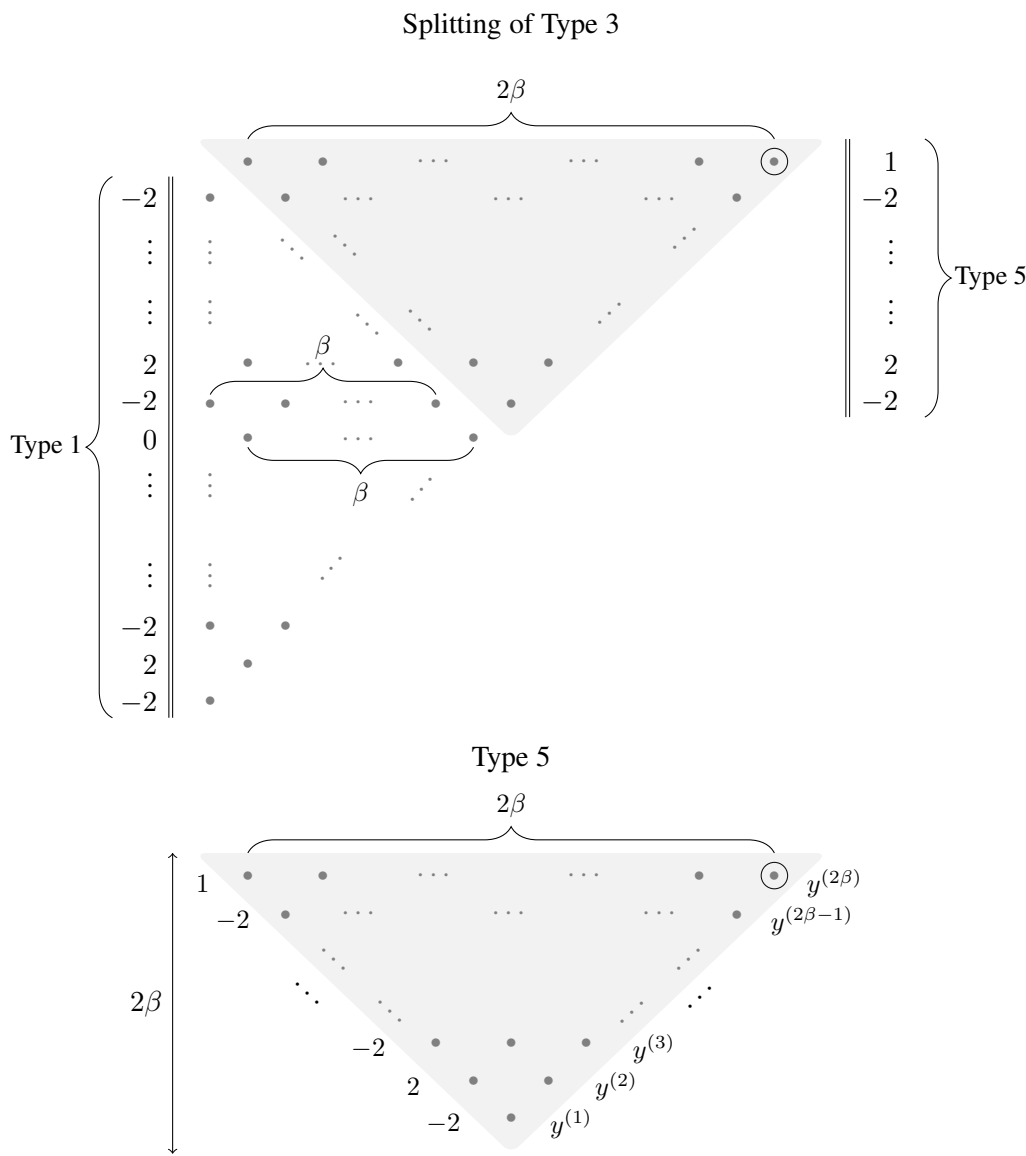

%----- Constraint Type 6 ------%

\begin{figure}[!htb]
\centering
\begin{tikzpicture}[scale=0.75, every node/.style={scale=0.75}]

%----- Split up as Type 2 and Type 5 -------%
\path [fill=gray!10,rounded corners] (7.3,7.5) -- (14.7,7.5) -- (11,4) -- cycle;

\draw[thin] (2,0.3) -- (2,7.5);
\draw[thin] (1.95,0.3) -- (1.95,7.5);
\node at (1.7,0.5) {$1$};
\node at (1.55,1) {$-2$};
\node at (1.7,2) {$\vdots$};
\node at (1.7,2.9) {$\vdots$};
\node at (1.55,3.5) {$-2$};
\node at (1.7,4) {$0$};
\node at (1.55,4.5) {$-2$};
\node at (1.7,5.3) {$\vdots$};
\node at (1.7,6.1) {$\vdots$};
\node at (1.55,6.7) {$-2$};
\node at (1.7,7.2) {$1$};

\draw[thin] (16,4.2) -- (16,7.5);
\draw[thin] (15.95,4.2) -- (15.95,7.5);
\node at (16.4,4.5) {$-2$};
\node at (16.55,5.3) {$\vdots$};
\node at (16.55,6.1) {$\vdots$};
\node at (16.4,6.7) {$-2$};
\node at (16.55,7.2) {$1$};

\node[mark size=2pt, color=gray] at (3.5,0.5) {$\bbigcdot$};
\node[color=gray] at (4.5,0.5) {$\cdots$};
\node[mark size=2pt, color=gray] at (6,0.5) {$\bbigcdot$};
\node[mark size=2pt, color=gray] at (7,0.5) {$\bbigcdot$};
\node[mark size=2pt, color=gray] at (3,1) {$\bbigcdot$};
\node[color=gray] at (4,1) {$\cdots$};
\node[color=gray] at (6.5,1) {$\cdots$};
\node[mark size=2pt, color=gray] at (7.5,1) {$\bbigcdot$};
\node[mark size=2pt, color=gray] at (3,2.5) {\vdots};
\draw (8.2,1.7) node[rotate=80,color=gray] {$\ddots$};
\draw (9.4,2.8) node[rotate=80,color=gray] {$\ddots$};
\node[mark size=2pt, color=gray] at (3,3.5) {$\bbigcdot$};
\node[color=gray] at (4,3.5) {$\cdots$};
\node[color=gray] at (8,3.5) {$\cdots$};
\node[mark size=2pt, color=gray] at (9,3.5) {$\bbigcdot$};
\node[mark size=2pt, color=gray] at (10,3.5) {$\bbigcdot$};
\draw (3.5,3) node[rotate=-6,color=gray] {$\ddots$};
\node[mark size=2pt, color=gray] at (3.5,4) {$\bbigcdot$};
\node[color=gray] at (4.4,4) {$\cdots$};
\node[color=gray] at (8.5,4) {$\cdots$};
\node[mark size=2pt, color=gray] at (9.5,4) {$\bbigcdot$};
\node[mark size=2pt, color=gray] at (10.5,4) {$\bbigcdot$};
\node[mark size=2pt, color=gray] at (3,4.5) {$\bbigcdot$};
\node[color=gray] at (4,4.5) {$\cdots$};
\node[color=gray] at (8,4.5) {$\cdots$};
\node[mark size=2pt, color=gray] at (9,4.5) {$\bbigcdot$};
\node[mark size=2pt, color=gray] at (10,4.5) {$\bbigcdot$};
\node[mark size=2pt, color=gray] at (11,4.5) {$\bbigcdot$};
\node[mark size=2pt, color=gray] at (3,5.8) {\vdots};
\draw (9.4,5.15) node[rotate=-6,color=gray] {$\ddots$};
\draw (10.3,5.2) node[rotate=-6,color=gray] {$\ddots$};
\draw (8.2,6.15) node[rotate=-6,color=gray] {$\ddots$};
\draw (9.3,6.2) node[rotate=-6,color=gray] {$\ddots$};
\draw (11.7,5.2) node[rotate=80,color=gray] {$\ddots$};
\draw (12.7,6.2) node[rotate=80,color=gray] {$\ddots$};
\node[mark size=2pt, color=gray] at (3,6.7) {$\bbigcdot$};
\node[color=gray] at (4, 6.7) {$\cdots$};
\node[color=gray] at (6.3, 6.7) {$\cdots$};
\node[mark size=2pt, color=gray] at (7.5,6.7) {$\bbigcdot$};
\node[mark size=2pt, color=gray] at (8.5,6.7) {$\bbigcdot$};
\node[mark size=2pt, color=gray] at (3.5,7.17) {$\bbigcdot$};
\draw (3.5,6.2) node[rotate=-6,color=gray] {$\ddots$};
\node[color=gray] at (4.5,7.2) {$\cdots$};
\node[mark size=2pt, color=gray] at (6,7.17) {$\bbigcdot$};
\node[mark size=2pt, color=gray] at (7,7.17) {$\bbigcdot$};
\node[mark size=2pt, color=gray] at (8,7.17) {$\bbigcdot$};
\node[mark size=2pt, color=gray] at (9,7.17) {$\bbigcdot$};
\node[color=gray] at (10,7.2) {$\cdots$};
\node[color=gray] at (12,7.2) {$\cdots$};
\node[color=gray] at (9.5,6.7) {$\cdots$};
\node[color=gray] at (11,6.7) {$\cdots$};
\node[color=gray] at (12.5,6.7) {$\cdots$};
\node[mark size=2pt, color=gray] at (13.5,6.7) {$\bbigcdot$};
\node[mark size=2pt, color=gray] at (13,7.17) {$\bbigcdot$};
\node[mark size=2pt, color=gray] at (14,7.17) {$\bbigcdot$};

\node at (8.2,0.5) {$\lambda^{(2(2i-2)\beta)}$};
\node at (9,1) {${\lambda^{(2(2i-2)\beta+1)}}$};
\node at (11.4,3.5) {${\lambda^{(2(2i-1)\beta-1)}}$};
\node at (12,4) {${\lambda^{(2(2i-1)\beta)}}$};
\node at (12.8,4.5) {${\lambda^{(2(2i-1)\beta+1)}}$};
\node at (14.7,6.7) {${\lambda^{(4i\beta-1)}}$};
\node at (15.2,7.2) {${\lambda^{(4i\beta)}}$};

\draw[gray!10!black] (14,7.2) circle (5pt);
\draw[decorate, decoration={brace, amplitude=10pt}]  (1.2,0.3) -- (1.2,7.5) node [midway, xshift=-30pt] {\small{Type 2}};
\draw[decorate, decoration={brace, amplitude=10pt}] (16.8,7.5) -- (16.8,4.2) node [midway, xshift=30pt] {\small{Type 5}};

\end{tikzpicture}

\caption{Figure showing splitting a type $4$ constraint (see Figure~\ref{fig:constr4}) in to a type $2$ and type $5$.  This occurs for $k\ge4$, and the lower constraint for some $1<i\leq\left\lfloor \frac{k}{2}\right\rfloor$ in \eqref{eq:sympl_gf_const1} is drawn in the top figure for general $k\geq 4, \beta$, involving rows $\lambda^{(n)}$ for $n=2(2i-2)\beta,\dots, 4i\beta$.  The type $4$ constraint is split in to one of type $2$ (the unshaded region) and one of type $5$ (the shaded region), see Figure~\ref{fig:constr7}. The circled top right element is the largest, and the numbers on the far left and the far right give the row sum weightings as appearing in \eqref{eq:sympl_gf_const1}.  The equivalent form for the upper version (i.e. $1<i\leq\left\lfloor \frac{k}{2}\right\rfloor$ in \eqref{eq:sympl_gf_const2}) can be seen by reflecting the diagram in the $x$-plane.}\label{fig:constr8}
\end{figure}
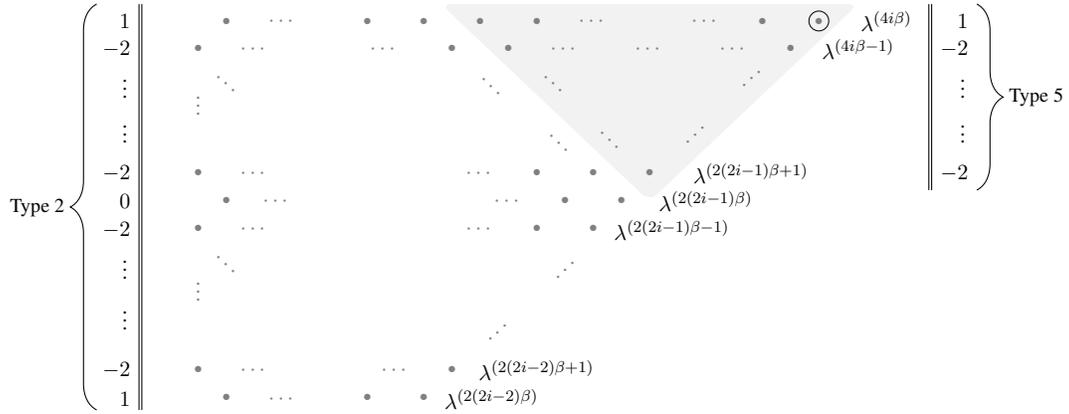

%----- Mixture of Type 3 and Type 4 ------%

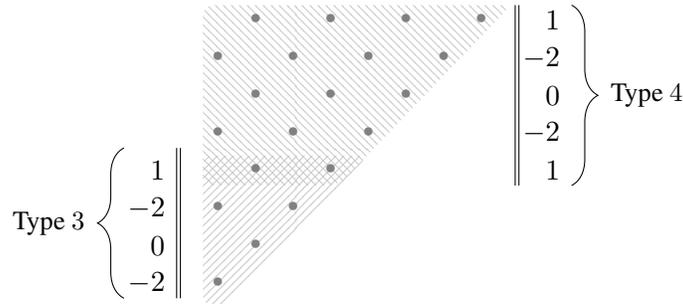
\begin{figure}[!htb]
\begin{tikzpicture}

\path [pattern=north west lines, pattern color=gray!40,rounded corners] (0.3,1.3) -- (0.3,3.7) -- (4.4,3.7) -- (2.1,1.3) -- cycle;
\path [pattern=north east lines, pattern color=gray!40, rounded corners] (0.3, -0.3) -- (0.3,1.7) -- (2.5,1.7) --  (0.6, -0.3)   -- cycle;

\draw[thin] (0,-0.2) -- (0,1.8);
\draw[thin] (-0.05,-0.2) -- (-.05,1.8);
\draw[thin] (4.45, 1.3) -- (4.45,3.7);
\draw[thin] (4.5,1.3) -- (4.5,3.7);
\node at (-.45,0) {$-2$};
\node at (-.3,0.5) {$0$};
\node at (-.45,1) {$-2$};
\node at (-.3,1.5) {$1$};
\node at (4.95,1.5) {$1$};
\node at (4.8,2) {$-2$};
\node at (4.95,2.5) {$0$};
\node at (4.8,3) {$-2$};
\node at (4.95,3.5) {$1$};

\draw[decorate, decoration={brace, amplitude=10pt}]  (5.2,3.7) -- (5.2,1.3) node [midway, xshift=1cm] {\small{Type $4$}};
\draw[decorate, decoration={brace, amplitude=10pt}]  (-0.75,-0.2) -- (-0.75,1.8)  node [midway, xshift=-1cm] {\small{Type $3$}};

\node[mark size=2pt, color=gray] at (0.5,0) {$\bbigcdot$};
\node[mark size=2pt, color=gray] at (1,0.5) {$\bbigcdot$};
\node[mark size=2pt, color=gray] at (0.5,1) {$\bbigcdot$};
\node[mark size=2pt, color=gray] at (1.5,1) {$\bbigcdot$};
\node[mark size=2pt, color=gray] at (1,1.5) {$\bbigcdot$};
\node[mark size=2pt, color=gray] at (2,1.5) {$\bbigcdot$};
\node[mark size=2pt, color=gray] at (0.5,2) {$\bbigcdot$};
\node[mark size=2pt, color=gray] at (1.5,2) {$\bbigcdot$};
\node[mark size=2pt, color=gray] at (2.5,2) {$\bbigcdot$};
\node[mark size=2pt, color=gray] at (1,2.5) {$\bbigcdot$};
\node[mark size=2pt, color=gray] at (2,2.5) {$\bbigcdot$};
\node[mark size=2pt, color=gray] at (3,2.5) {$\bbigcdot$};
\node[mark size=2pt, color=gray] at (0.5,3) {$\bbigcdot$};
\node[mark size=2pt, color=gray] at (1.5,3) {$\bbigcdot$};
\node[mark size=2pt, color=gray] at (2.5,3) {$\bbigcdot$};
\node[mark size=2pt, color=gray] at (3.5,3) {$\bbigcdot$};
\node[mark size=2pt, color=gray] at (1,3.5) {$\bbigcdot$};
\node[mark size=2pt, color=gray] at (2,3.5) {$\bbigcdot$};
\node[mark size=2pt, color=gray] at (3,3.5) {$\bbigcdot$};
\node[mark size=2pt, color=gray] at (4,3.5) {$\bbigcdot$};

%\draw[rounded corners, color=gray] (0.8,0.3) rectangle (1.2,0.7) {};
%\draw[rounded corners, color=gray] (0.8,2.3) rectangle (3.2,2.7) {};

\end{tikzpicture}
\caption{Example of a mixture of type 3 and type 4.  This example shows $k=4$, $\beta=1$, and the interplay between $i=1$ and $i=2$ in \eqref{eq:sympl_gf_const1} is demonstrated through the overlap between the two patterns.  The corresponding diagram for $i=1$ and $i=2$ in \eqref{eq:sympl_gf_const2} is simply the reflection of this diagram in the $x$-plane.}\label{fig:constr5}
\end{figure}

%----- Combination of Splitting 1 ------%

\begin{figure}[!htb]
\centering
\begin{tikzpicture}%[scale=1.2, every node/.style={scale=1.2}]

\path [pattern=north west lines, pattern color=gray!40,rounded corners] (0,0) -- (0,3) -- (3,3) -- cycle;
\path [pattern=north east lines, pattern color=gray!40, rounded corners] (0,2.75) -- (0,5) -- (5,5) -- (2.75,2.75)-- cycle;

\draw[] (0,3) -- (1.5,1.5);
\draw[dashed] (0,1.5) -- (1.5,1.5);
\draw[] (3.85,3.85) -- (2.7,5);
\draw[dashed] (0,3.85) -- (3.85,3.85);
\node[mark size=1.5pt, color=gray] at (2.6,2.83) {$\bbigcdot$};
\node[mark size=1.5pt, color=gray] at (4.6,4.83) {$\bbigcdot$};
\draw[gray!10!black] (4.6,4.85) circle (3pt);
\draw[gray!10!black] (2.6,2.85) circle (3pt);

\draw[decorate, decoration={brace, amplitude=10pt}]  (-0.2,2.7) -- (-0.2,5) node [midway, xshift=-1.2cm] {\small{Type $4$}};
\draw[decorate, decoration={brace, amplitude=10pt}]  (-0.4,0) -- (-0.4,3) node [midway, xshift=-1cm] {\small{Type $3$}};
\draw[decorate, decoration={brace, amplitude=3pt}]  (3.1,3.05) -- (3.1,2.65) node [midway, xshift=0.55cm] {\tiny{overlap}};
\draw[->] (2.78,2.9) -- (3.68,3.8);
\draw[->] (2.78,4.8) -- (3.68,3.9);

\node at (1.9,1.5) {\tiny{mirror}};
\node at (4.25,3.85) {\tiny{mirror}};

\end{tikzpicture}
\caption{Example of combining a split type $3$ and a split type $4$.  The dashed horizontal lines represent the lines of reflection, and the solid diagonal lines show where the splitting of the respective types occurs.  The circled elements are the largest element for each section, and the arrows show the location of elements that, due to the interlacing, are forced the be `large', and also direction of growth.   }\label{fig:constr9}
\end{figure}

%----- Mixture of Type 4s ------%
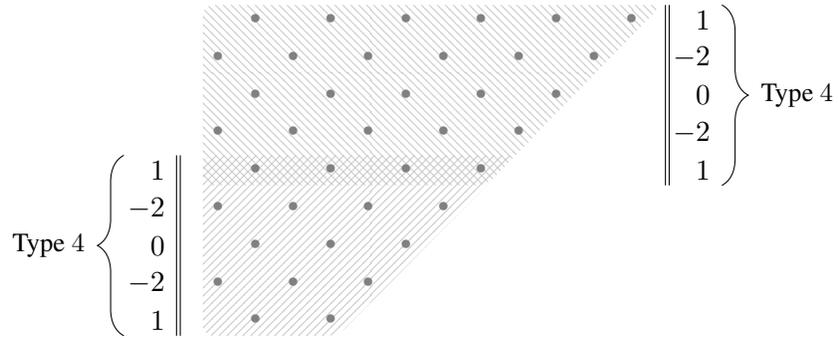
\begin{figure}[!htb]
\begin{tikzpicture}

\path [pattern=north west lines, pattern color=gray!40,rounded corners] (0.3,1.3) -- (0.3,3.7) -- (6.4,3.7) -- (4.07,1.3) -- cycle;
\path [pattern=north east lines, pattern color=gray!40, rounded corners] (0.3, -0.7) -- (0.3,1.7) -- (4.47,1.7) --  (2.1, -0.7)   -- cycle;

\draw[thin] (0,-0.7) -- (0,1.7);
\draw[thin] (-0.05,-0.7) -- (-.05,1.7);
\draw[thin] (6.5,1.3) -- (6.5,3.7);
\draw[thin] (6.45,1.3) -- (6.45,3.7);
\node at (-.3,-0.5) {$1$};
\node at (-.45,0) {$-2$};
\node at (-.3,0.5) {$0$};
\node at (-.45,1) {$-2$};
\node at (-0.3,1.5) {$1$};
\node at (6.95,1.5) {$1$};
\node at (6.8,2) {$-2$};
\node at (6.95,2.5) {$0$};
\node at (6.8,3) {$-2$};
\node at (6.95,3.5) {$1$};

\draw[decorate, decoration={brace, amplitude=10pt}]  (7.2,3.7) -- (7.2,1.3) node [midway, xshift=1cm] {\small{Type $4$}};
\draw[decorate, decoration={brace, amplitude=10pt}]  (-0.75,-0.7) -- (-0.75,1.7)  node [midway, xshift=-1cm] {\small{Type $4$}};

\node[mark size=2pt, color=gray] at (1,-0.5) {$\bbigcdot$};
\node[mark size=2pt, color=gray] at (2,-0.5) {$\bbigcdot$};
\node[mark size=2pt, color=gray] at (0.5,0) {$\bbigcdot$};
\node[mark size=2pt, color=gray] at (1.5,0) {$\bbigcdot$};
\node[mark size=2pt, color=gray] at (2.5,0) {$\bbigcdot$};
\node[mark size=2pt, color=gray] at (1,0.5) {$\bbigcdot$};
\node[mark size=2pt, color=gray] at (2,0.5) {$\bbigcdot$};
\node[mark size=2pt, color=gray] at (3,0.5) {$\bbigcdot$};
\node[mark size=2pt, color=gray] at (0.5,1) {$\bbigcdot$};
\node[mark size=2pt, color=gray] at (1.5,1) {$\bbigcdot$};
\node[mark size=2pt, color=gray] at (2.5,1) {$\bbigcdot$};
\node[mark size=2pt, color=gray] at (3.5,1) {$\bbigcdot$};
\node[mark size=2pt, color=gray] at (1,1.5) {$\bbigcdot$};
\node[mark size=2pt, color=gray] at (2,1.5) {$\bbigcdot$};
\node[mark size=2pt, color=gray] at (3,1.5) {$\bbigcdot$};
\node[mark size=2pt, color=gray] at (4,1.5) {$\bbigcdot$};
\node[mark size=2pt, color=gray] at (0.5,2) {$\bbigcdot$};
\node[mark size=2pt, color=gray] at (1.5,2) {$\bbigcdot$};
\node[mark size=2pt, color=gray] at (2.5,2) {$\bbigcdot$};
\node[mark size=2pt, color=gray] at (3.5,2) {$\bbigcdot$};
\node[mark size=2pt, color=gray] at (4.5,2) {$\bbigcdot$};
\node[mark size=2pt, color=gray] at (1,2.5) {$\bbigcdot$};
\node[mark size=2pt, color=gray] at (2,2.5) {$\bbigcdot$};
\node[mark size=2pt, color=gray] at (3,2.5) {$\bbigcdot$};
\node[mark size=2pt, color=gray] at (4,2.5) {$\bbigcdot$};
\node[mark size=2pt, color=gray] at (5,2.5) {$\bbigcdot$};
\node[mark size=2pt, color=gray] at (0.5,3) {$\bbigcdot$};
\node[mark size=2pt, color=gray] at (1.5,3) {$\bbigcdot$};
\node[mark size=2pt, color=gray] at (2.5,3) {$\bbigcdot$};
\node[mark size=2pt, color=gray] at (3.5,3) {$\bbigcdot$};
\node[mark size=2pt, color=gray] at (4.5,3) {$\bbigcdot$};
\node[mark size=2pt, color=gray] at (5.5,3) {$\bbigcdot$};
\node[mark size=2pt, color=gray] at (1,3.5) {$\bbigcdot$};
\node[mark size=2pt, color=gray] at (2,3.5) {$\bbigcdot$};
\node[mark size=2pt, color=gray] at (3,3.5) {$\bbigcdot$};
\node[mark size=2pt, color=gray] at (4,3.5) {$\bbigcdot$};
\node[mark size=2pt, color=gray] at (5,3.5) {$\bbigcdot$};
\node[mark size=2pt, color=gray] at (6,3.5) {$\bbigcdot$};

%\draw[rounded corners, color=gray] (0.8,1.3) rectangle (4.2,1.7) {};

\end{tikzpicture}

\caption{Example of a mixture of constraints of type 4.  This figure is drawn for $k=6, \beta=1$ and depicts the mixture of constraints for $i=2$ in \eqref{eq:sympl_gf_const1} and \eqref{eq:sympl_gf_const2}. }\label{fig:constr6}
\end{figure}

%----- Combination of Splitting 2 ------%

\begin{figure}[!htb]
\centering
\begin{tikzpicture}%[scale=1.4, every node/.style={scale=1.4}]

\path [pattern=north west lines, pattern color=gray!40] (0,1.5) -- (0,3) -- (3,3) -- (1.5,1.5) -- cycle;
\path [pattern=north east lines, pattern color=gray!40] (0,2.75) -- (0,5) -- (5,5) -- (2.75,2.75)-- cycle;

\draw[] (1.5,3) -- (2.25,2.25);
\draw[dashed] (0,2.25) -- (2.25,2.25);
\draw[] (3.85,3.85) -- (2.7,5);
\draw[dashed] (0,3.85) -- (3.85,3.85);
\node[mark size=1.5pt, color=gray] at (2.6,2.83) {$\bbigcdot$};
\node[mark size=1.5pt, color=gray] at (4.6,4.83) {$\bbigcdot$};
\draw[gray!10!black] (4.6,4.85) circle (3pt);
\draw[gray!10!black] (2.6,2.85) circle (3pt);

\draw[decorate, decoration={brace, amplitude=10pt}]  (-0.2,2.7) -- (-0.2,5) node [midway, xshift=-1.3cm] {\small{Type $4$}};
\draw[decorate, decoration={brace, amplitude=10pt}]  (-0.5,1.5) -- (-0.5,3) node [midway, xshift=-1cm] {\small{Type $4$}};
\draw[decorate, decoration={brace, amplitude=3pt}]  (3.1,3.05) -- (3.1,2.65) node [midway, xshift=0.55cm] {\tiny{overlap}};
\draw[->] (2.78,2.9) -- (3.68,3.8);
\draw[->] (2.78,4.8) -- (3.68,3.9);

\node at (2.65,2.25) {\tiny{mirror}};
\node at (4.25,3.85) {\tiny{mirror}};

\end{tikzpicture}
\caption{Example of combining two split type $4$ constraints.  The dashed horizontal lines represent the lines of reflection, and the solid diagonal lines show where the splitting of the respective types occurs.  The circled elements are the largest element for each section, and the arrows show the location of elements that, due to the interlacing, are forced the be `large', and also direction of growth. }\label{fig:constr10}
\end{figure}

%------------------------%
% General pattern   %
%------------------------%
%\begin{figure}[!htb]
%\centering
%\begin{tikzpicture}[scale=0.7, every node/.style={scale=0.7}]
%\draw (0,0)--(0,6);
%\draw (0,0)--(3,3);
%\draw (0,6)--(3,3);

%\draw (4,0)--(4,6);
%\draw (4,0)--(7,3);
%\draw (4,6)--(7,3);
%\draw (4,3)--(7,3);

%\draw (8,0)--(8,6);
%\draw (8,0)--(11,3);
%\draw (8,6)--(11,3);
%\draw (8,2)--(10,2);
%\draw (8,4)--(10,4);

%\draw (12,0)--(12,6);
%\draw (12,0)--(15,3);\node[mark size=2pt, color=gray] at (3.5,3) {$\bbigcdot$};
%\draw (12,6)--(15,3);
%\draw (12,1.5)--(13.5,1.5);
%\draw (12,3)--(15,3);
%\draw (12,4.5)--(13.5,4.5);

%\draw (16,0)--(16,6);
%\draw (16,0)--(19,3);
%\draw (16,6)--(19,3);
%\draw (16,1.2)--(17.2,1.2);
%\draw (16,2.4)--(18.4,2.4);
%\draw (16,3.6)--(18.4,3.6);
%\draw (16,4.8)--(17.2,4.8);
%\end{tikzpicture}

%\caption{Examples of rough shapes for $k=1,\dots,5$ {\color{red}ROUGH please change me}.}\label{fig:basicshapes}
%\end{figure}
%----------------------%
%   K-S Aysmpts   %
%----------------------%

\subsection{Asymptotics at the symmetry point}\label{sec:ks_asympts}

In this subsection we show how the method illustrated above can also be used to recover results of Keating and Snaith on the asymptotics of moments of the characteristic polynomial at the symmetry point, see~\cite{keasna00b}. The original proof involved the Selberg integral and asymptotics for the Barnes G-function. More precisely we show that, for $s\in \mathbb{N}$
\begin{align}
\mathsf{M}_{Sp}(s)\coloneqq\mathbb{E}_{g\in Sp(2N)}\left[\det \left(I-g\right)^s\right]=\mathsf{c}_{Sp}(s)N^{\frac{s(s+1)}{2}}+O_s\left(N^{\frac{s(s+1)}{2}-1}\right),
\end{align}
where the leading order coefficient is explicit:
\begin{align*}
\mathsf{c}_{Sp}(s)=\frac{1}{\prod_{j=1}^{s}(2j-1)!!}.
\end{align*}

By applying Proposition~\ref{BumpGamburdSymplectic} with $x_i\equiv 1$ and inserting this into the combinatorial representation of Definition~\ref{CombinatorialFormulaSymplectic} we obtain the following proposition.

\begin{proposition}
Let $s\in \mathbb{N}$. $\mathsf{M}_{Sp}(s)$ is equal to the cardinality of the set $SP_{\langle N^{s}\rangle}$, namely the number of $(2s)$-symplectic Gelfand-Tsetlin patterns with top row $\langle N^{s}\rangle$.
\end{proposition}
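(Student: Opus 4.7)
The plan is to apply Proposition~\ref{BumpGamburdSymplectic} specialized to $M=s$ and to the point $(x_1,\dots,x_s)=(1,\dots,1)$, and then unpack the combinatorial formula for the symplectic Schur polynomial from Definition~\ref{CombinatorialFormulaSymplectic}. Since $\prod_{j=1}^s \det(I-x_j g)$ evaluated at $x_1=\cdots=x_s=1$ is exactly $\det(I-g)^s$, Proposition~\ref{BumpGamburdSymplectic} gives
\begin{align*}
\mathsf{M}_{Sp}(s)=\mathbb{E}_{g\in Sp(2N)}\left[\det(I-g)^s\right]=(1\cdots 1)^N\, sp^{(2s)}_{\langle N^s\rangle}(1,\dots,1)=sp^{(2s)}_{\langle N^s\rangle}(1,\dots,1).
\end{align*}

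Next, I would invoke the combinatorial definition
\begin{align*}
sp^{(2s)}_{\langle N^s\rangle}(x_1,\dots,x_s)=\sum_{P\in SP_{\langle N^s\rangle}} w_{sp}(P),\qquad w_{sp}(P)=\prod_{i=1}^{s} x_i^{e_i(P)},
\end{align*}
where $e_i(P)=\sum_{j=1}^{i}\lambda_j^{(2i)}-2\sum_{j=1}^{i}\lambda_j^{(2i-1)}+\sum_{j=1}^{i-1}\lambda_j^{(2i-2)}$. Setting each $x_i=1$, every weight $w_{sp}(P)$ collapses to $1$ regardless of the exponents, so the sum over $P$ simply counts the patterns:
\begin{align*}
sp^{(2s)}_{\langle N^s\rangle}(1,\dots,1)=\sum_{P\in SP_{\langle N^s\rangle}} 1=\#\, SP_{\langle N^s\rangle}.
\end{align*}

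Combining the two displays yields $\mathsf{M}_{Sp}(s)=\#\, SP_{\langle N^s\rangle}$, which is the claim. There is no real obstacle here: the statement is essentially an immediate specialization of the Bump--Gamburd identity at the symmetry point $x_i\equiv 1$, combined with the fact that the symplectic Schur polynomial, being the sum of monomial weights over Gelfand--Tsetlin patterns, degenerates at this point to the plain count of such patterns. The only subtlety worth noting in the write-up is that, unlike the general case of the moments of moments where one integrates over angles and imposes constraints via Fubini plus $\frac{1}{2\pi}\int_0^{2\pi}e^{\i s\theta}\,d\theta=\delta_{s=0}$, here no integration or constraint is needed because we are already evaluating at a single point on the unit circle.
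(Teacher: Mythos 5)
Your proposal is correct and coincides with the paper's own argument: both specialize Proposition~\ref{BumpGamburdSymplectic} to $x_i\equiv 1$ and then read off from the combinatorial formula of Definition~\ref{CombinatorialFormulaSymplectic} that the Schur polynomial at this point reduces to a count of patterns in $SP_{\langle N^s\rangle}$.
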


As before, the form of the top row fixes the top right triangle of the pattern, see Figure~\ref{fig:ks_asympt1}. An analogous argument to that given in Proposition~\ref{CombRepSymp2} yields the following.

\begin{proposition}
Let $s\in \mathbb{N}$. Then,
\begin{align*}
\mathsf{M}_{Sp}(s)=N^{\frac{s(s+1)}{2}}\mathsf{vol}\left(\mathsf{V}_{Sp}(s)\right)+O_s\left(N^{\frac{s(s+1)}{2}-1}\right)
\end{align*}
where the set $\mathsf{V}_{Sp}(s)\subset [0,1]^{\frac{s(s+1)}{2}}$ consists of joining two continuous half patterns of length $s$ at the top row, as in the Figure~\ref{fig:ks_asympt2}.
\end{proposition}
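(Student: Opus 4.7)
The plan is to mirror the strategy used for the moments of moments in Section~\ref{sec:sympl} but in this much simpler single-average setting, where no linear equality constraints arise. By the preceding proposition, the task reduces to showing that $\#SP_{\langle N^s\rangle}$ admits the claimed leading-order asymptotic behaviour.

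First I would identify the fixed ``top-right triangular'' region of any $(2s)$-symplectic Gelfand-Tsetlin pattern $P\in SP_{\langle N^s\rangle}$, as in Figure~\ref{fig:ks_asympt1}. The interlacing $\lambda^{(2s-1)}\prec\lambda^{(2s)}=\langle N^s\rangle$ forces $\lambda_j^{(2s-1)}=N$ for $j=1,\dots,s-1$, and propagating this downwards via the successive interlacings $\lambda^{(i)}\prec\lambda^{(i+1)}$ pins an entire triangular block of entries to equal $N$. A direct count then shows that exactly $s(s+1)/2$ entries of $P$ remain free.

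Second, I would construct a relabelling bijection
\[
\mathfrak{B}:SP_{\langle N^s\rangle}\longrightarrow \mathfrak{I}(N,s),
\]
in the same spirit as the map $\mathfrak{B}_{Sp}$ built in Section~\ref{sec:sympl}, where $\mathfrak{I}(N,s)\subset\{0,1,\dots,N\}^{s(s+1)/2}$ is defined by exactly those interlacing inequalities inherited from the non-fixed rows. The key structural observation — which is what the statement of the proposition is encoding — is that these inequalities assemble precisely into two symplectic half-patterns of length $s$ sharing a common top row (the common top row being the ``middle'' free row of the original pattern, of length $\lceil s/2\rceil$). Thus $\#SP_{\langle N^s\rangle}=\#\mathfrak{I}(N,s)$. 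Defining $\mathsf{V}_{Sp}(s)\subset[0,1]^{s(s+1)/2}$ as the continuous analogue obtained by replacing $\mathbb{Z}$ with $\mathbb{R}$ and $N$ by $1$ throughout, one sees that $\mathfrak{I}(N,s)=\mathbb{Z}^{s(s+1)/2}\cap N\mathsf{V}_{Sp}(s)$.

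Finally, $\mathsf{V}_{Sp}(s)$ is convex (it is cut out of $[0,1]^{s(s+1)/2}$ by finitely many linear inequalities) and bounded, hence is contained in a ball of radius $\sqrt{s(s+1)/2}$. Applying Theorem~\ref{LatticePointCountTheorem} with $\mathcal{S}=N\mathsf{V}_{Sp}(s)$ and using $\mathsf{vol}(N\mathsf{V}_{Sp}(s))=N^{s(s+1)/2}\mathsf{vol}(\mathsf{V}_{Sp}(s))$ yields the claimed asymptotic. The only step requiring genuine care is the combinatorial bookkeeping in the bijection $\mathfrak{B}$, namely verifying that the surviving inequalities on the free coordinates assemble exactly into the two joined half-patterns; this is entirely analogous to (and simpler than) the construction of $\mathfrak{B}_{Sp}$, and is the only mild obstacle. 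Positivity of $\mathsf{vol}(\mathsf{V}_{Sp}(s))$ is not required for the stated proposition, but is also immediate since one can exhibit an open subset of strict-interlacing configurations.
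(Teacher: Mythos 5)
Your proof is correct and follows exactly the approach the paper takes; the paper's own proof of this proposition is merely the one-line remark that ``an analogous argument to that given in Proposition~\ref{CombRepSymp2} yields the following,'' and you have filled in precisely the intended details (identifying the fixed top-right triangle, constructing the relabelling bijection to a free region that assembles into two half-patterns of length $s$ glued at the top $\lceil s/2\rceil$-length row, and applying Theorem~\ref{LatticePointCountTheorem}).
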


Thus, it suffices to show that the volume of $\mathsf{V}_{Sp}(s)$ can be computed explicitly and equals $\mathsf{c}_{Sp}(s)$. We require the following lemma (which is certainly well-known but we have not located this exact form in the literature).

\begin{lemma}\label{ks_lemma1}
Let $s\in \mathbb{N}$. The volume of a continuous half pattern of length $s$ with non-negative coordinates and top row $\left(x_1,\dots,x_{\lfloor \frac{s+1}{2}\rfloor}\right)\in \mathsf{W}_{\lfloor \frac{s+1}{2}\rfloor}^+$, that we denote by $\mathsf{vol}_s\left(x_1,x_2,\dots,x_{\lfloor \frac{s+1}{2}\rfloor}\right)$, is given by:
\begin{align*}
\mathsf{vol}_s\left(x_1,x_2,\dots,x_{\lfloor \frac{s+1}{2}\rfloor}\right)=\prod_{j=1}^{s}\frac{1}{(j-1)!!}\det\left(x_{\lfloor \frac{s+1}{2}\rfloor+1-i}^{2(j-1)+\mathbf{1}(s \ \textnormal{even})}\right)_{i,j=1}^{\lfloor \frac{s+1}{2}\rfloor}.
\end{align*}
\end{lemma}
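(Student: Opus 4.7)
I would proceed by induction on $s$, integrating out the second-to-top row $\lambda^{(s-1)}$ at each step. The base case $s=1$ is immediate: the half pattern collapses to the single fixed point $\lambda^{(1)}=(x_1)$, which has zero-dimensional volume $1$, matching the $1\times 1$ determinant $\det(x_1^0)=1$ together with the empty product. For the inductive step, observe that in a half pattern the interlacing between the fixed top row $\lambda^{(s)}$ and the row $\lambda^{(s-1)}=(y_1,\dots,y_m)$ below it confines each $y_i$ to an interval $[x_{i+1},x_i]$; when $s=2m$ is even one uses the convention $x_{m+1}\coloneqq 0$, forced by the non-negativity constraint, while when $s=2m+1$ is odd the bound $x_{m+1}$ is a genuine entry of the top row. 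These intervals are disjoint and independent across $i$, so integrating $\mathsf{vol}_{s-1}(y)$ over the interlacing region factors coordinate-wise, and by linearity in the rows of the determinant supplied by the inductive hypothesis the one-dimensional integrals pass inside the determinant, column by column.

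Applying $\int_{x_{i+1}}^{x_i} y^{k}\,dy=(x_i^{k+1}-x_{i+1}^{k+1})/(k+1)$ to the $(i,j)$ entry $y_{m+1-i}^{2(j-1)+\mathbf{1}(s-1\,\text{even})}$ of the inductive determinant produces an $m\times m$ matrix of differences with column-$j$ prefactor $\tfrac{1}{2j-1}$ if $s$ is even and $\tfrac{1}{2j}$ if $s$ is odd. Pulling these denominators out and using $\prod_{j=1}^m(2j-1)=(2m-1)!!$ and $\prod_{j=1}^m(2j)=(2m)!!$ extends the prefactor from $\prod_{j=1}^{s-1}\tfrac{1}{(j-1)!!}$ to the target $\prod_{j=1}^{s}\tfrac{1}{(j-1)!!}$. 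What remains is a short determinantal identity: match the resulting difference-matrix with the monomial-determinant in the statement.

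In the even case the size $m$ is unchanged, and because $x_{m+1}=0$ the first row is already the pure monomial row $(x_m^{2j-1})_j$; successively adding row $i$ to row $i+1$ for $i=1,\dots,m-1$ telescopes the differences and reduces the matrix to $(x_{m+1-i}^{2j-1})_{i,j=1}^m=(x_{m+1-i}^{2(j-1)+1})_{i,j=1}^m$, as required. In the odd case the size must grow from $m$ to $m+1$; one realizes the $m\times m$ difference-determinant as the first-column cofactor of the bordered matrix $(x_{m+2-i}^{2(j-1)})_{i,j=1}^{m+1}$ by subtracting row $i$ from row $i+1$ for $i=m,m-1,\dots,1$, which leaves a $1$ in position $(1,1)$ and the difference matrix in the lower-right $m\times m$ block. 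The only delicate point, and what I expect to be the main obstacle, is keeping the row reversal $i\mapsto \lfloor (s+1)/2\rfloor+1-i$ and the $(2m)!!$ versus $(2m-1)!!$ bookkeeping consistently aligned with the $\mathbf{1}(s\,\text{even})$ indicator across the two parity cases; once this is tracked carefully every step reduces to a one-line computation.
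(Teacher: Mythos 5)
Your proof is correct and follows exactly the approach the paper indicates, namely induction on $s$ integrating out the penultimate row and using multilinearity of the determinant in its rows. The bookkeeping you worried about — the reversal $i\mapsto\lfloor(s+1)/2\rfloor+1-i$, the convention $x_{m+1}=0$ in the even case, the bordering/cofactor argument in the odd case, and the $(2m-1)!!$ versus $(2m)!!$ column prefactors supplying the new factor $1/(s-1)!!$ — all checks out, so this is a full proof of the lemma along the paper's own (unelaborated) lines.
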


\begin{proof}
Direct computation by induction on $s$, using multi-linearity of the determinant.
\end{proof}

We finally have:

\begin{proposition}
Let $s\in \mathbb{N}$. Then,
\begin{align*}
\mathsf{vol}\left(\mathsf{V}_{Sp}(s)\right)=\frac{1}{\prod_{j=1}^{s}\left(2j-1\right)!!}.
\end{align*}
\end{proposition}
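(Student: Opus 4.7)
The plan is to reduce $\mathsf{vol}(\mathsf{V}_{Sp}(s))$ to a single definite integral via Lemma~\ref{ks_lemma1} and then evaluate it using the Andr\'eief identity and the Cauchy determinant formula. Set $m = \lfloor(s+1)/2\rfloor$ and $\epsilon = \mathbf{1}(s \text{ even})$. Since $\mathsf{V}_{Sp}(s)$ consists of two continuous half patterns of length $s$ joined along a common top row $x = (x_1, \dots, x_m) \in \mathsf{W}_m^+ \cap [0,1]^m$, Fubini's theorem gives
\[
\mathsf{vol}(\mathsf{V}_{Sp}(s)) = \int_{\mathsf{W}_m^+ \cap [0,1]^m} \mathsf{vol}_s(x_1,\dots,x_m)^2 \, dx_1 \cdots dx_m.
\]

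By Lemma~\ref{ks_lemma1}, writing $C_s \coloneqq \prod_{j=1}^s 1/(j-1)!!$ we have $\mathsf{vol}_s(x) = C_s \det(x_{m+1-i}^{2(j-1)+\epsilon})_{i,j=1}^m$. The squared determinant is symmetric in $x_1,\dots,x_m$, so integration over the Weyl chamber picks up a factor $1/m!$ relative to integration over the cube. Applying Andr\'eief's identity
\[
\int_{[0,1]^m} \det(f_i(x_j)) \det(g_i(x_j))\, dx = m!\, \det\!\left(\int_0^1 f_i(x)g_j(x)\,dx\right)_{i,j=1}^m
\]
with $f_i(x) = g_i(x) = x^{2(i-1)+\epsilon}$ cancels the $1/m!$ and leaves
\[
\mathsf{vol}(\mathsf{V}_{Sp}(s)) = C_s^2 \det\!\left(\frac{1}{2(i+j) - 3 + 2\epsilon}\right)_{i,j=1}^m.
\]

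This is a Cauchy-type determinant which I would evaluate via the Cauchy formula $\det(1/(a_i+b_j)) = \prod_{i<j}(a_j-a_i)(b_j-b_i)/\prod_{i,j}(a_i+b_j)$ taking $a_i = 2(i-1)$ and $b_j = 2j-1+2\epsilon$. The numerator $\prod_{i<j} 4(j-i)^2$ factorizes as $4^{m(m-1)/2}\bigl(\prod_{k=0}^{m-1} k!\bigr)^2$, and the denominator is a product of odd integers that, once combined with $C_s^2$, should collapse to the compact form $1/\prod_{j=1}^s (2j-1)!!$. I have verified this directly for $s=1,2,3$, which both confirms the approach and fixes the correct choice of Cauchy parameters. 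The main obstacle is purely algebraic bookkeeping: collecting all factorial and power-of-two factors from the Cauchy evaluation and from $C_s^2$, and confirming that the two parities of $s$ (which enter through $\epsilon$ and through the case distinction in Lemma~\ref{ks_lemma1}) give the same final expression. No essentially new ideas are required beyond the determinantal identities already invoked; this yields a combinatorial counterpart of the original Selberg-integral proof of~\cite{keasna00b}.
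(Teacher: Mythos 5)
Your proposal follows the paper's own proof step for step: express $\mathsf{vol}(\mathsf{V}_{Sp}(s))$ as the integral of $\mathsf{vol}_s(x)^2$ over the Weyl chamber using Lemma~\ref{ks_lemma1}, convert to a Gram determinant via Andr\'eief, and recognize the result as a Cauchy determinant (your choice $a_i=2(i-1)$, $b_j=2j-1+2\epsilon$ is just a shift of the paper's $x_i=2i-\tfrac{3}{2}+\mathbf{1}(s\text{ even})$, $y_j=-2j+\tfrac{3}{2}-\mathbf{1}(s\text{ even})$ and produces identical entries). The paper is equally terse about the final factorial bookkeeping (``after some elementary manipulations''), so making the Weyl-chamber symmetrization and the parity check explicit as you do is a modest improvement in exposition, not a different method.
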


\begin{proof}
Recall that, see Figure~\ref{fig:ks_asympt2}, $\mathsf{V}_{Sp}(s)$ is obtained by joining at the top row two continuous half patterns with coordinates in $[0,1]$. We then calculate using Lemma~\ref{ks_lemma1} and Andreief's identity:
\begin{align*}
\mathsf{vol}\left(\mathsf{V}_{Sp}(s)\right)&=\int_{1\ge x_1\ge x_2\ge \dots \ge x_{\lfloor \frac{s+1}{2}\rfloor}\ge 0}^{}\mathsf{vol}_s\left(x_1,x_2,\dots,x_{\lfloor \frac{s+1}{2}\rfloor}\right)^2dx_1\cdots dx_{\lfloor \frac{s+1}{2}\rfloor}\\
&=\prod_{j=1}^{s}\left(\frac{1}{(j-1)!!}\right)^2\det\left(\int_{0}^{1}x^{2(i-1)+2(j-1)+2\mathbf{1}(s\ \textnormal{even})}dx\right)^{\lfloor \frac{s+1}{2}\rfloor}_{i,j=1}\\
&=\prod_{j=1}^{s}\left(\frac{1}{(j-1)!!}\right)^2\det\left(\frac{1}{2\left(i+j-\frac{3}{2}+\mathbf{1}(s\ \textnormal{even})\right)}\right)^{\lfloor \frac{s+1}{2}\rfloor}_{i,j=1}.
\end{align*}
In order to evaluate this further one uses the Cauchy determinant formula:
\begin{align*}
\det\left(\frac{1}{x_i-y_j}\right)_{i,j=1}^n=\frac{\prod_{i=2}^{n}\prod_{j=1}^{i-1}(x_i-x_j)(y_j-y_i)}{\prod_{i=1}^{n}\prod_{j=1}^{n}(x_i-y_j)}.
\end{align*}
Applying this with,
\begin{align*}
x_i=2i-\frac{3}{2}+\mathbf{1}(s\ \textnormal{even}), \ y_j=-2j+\frac{3}{2}-\mathbf{1}(s\ \textnormal{even})
\end{align*}
and after some elementary manipulations we readily obtain the statement of the proposition.
\end{proof}

\begin{remark}
Similar arguments apply in the setting of $SO(2N)$, see~\cite{keasna00b} for the original proof.
\end{remark}

%------ K-S Asympt Figures -------%

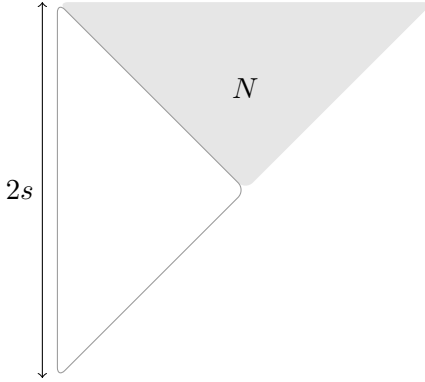
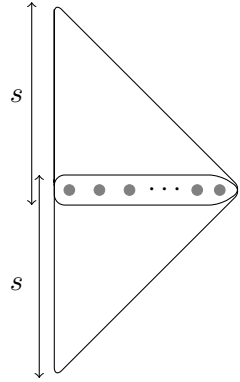
\begin{figure}[!htb]
\centering
\begin{subfigure}[t]{0.48\textwidth}
\centering
\begin{tikzpicture}

\path [rounded corners, fill=gray!20] (0,2.5) -- (2.5,0) -- (5,2.5) -- cycle;
\path [rounded corners, draw=gray!80] (0,2.5) -- (0,-2.5) -- (2.5,0) -- cycle;
\draw[<->] (-0.2,2.5) -- (-0.2,-2.5);
\node at (2.5,1.35) {\textbf{$N$}};
\node at (-0.5,0) {\textbf{$2s$}};

\end{tikzpicture}

\caption{Figure showing a $(2s)$-symplectic Gelfand-Tsetlin pattern with top row (and hence top right triangle) fixed to be $\langle N^{2s}\rangle$.}\label{fig:ks_asympt1}
\end{subfigure}
\hfill
\begin{subfigure}[t]{0.48\textwidth}
\centering
\begin{tikzpicture}

\path [draw=black,rounded corners] (0,2.5) -- (2.5,0) -- (2.2,-0.2) -- (0,-0.2 -- (0,0) -- cycle;
\path [rounded corners, draw=black] (0,2.5) -- (0,-2.5) -- (2.5,0) -- (2.2,0.2) -- (0,0.2) -- (0,0) -- cycle;

\node at (1.5,0) {$\cdots$};

\draw[fill=gray, color=gray] (0.2,0) circle (2pt);
\draw[fill=gray, color=gray] (0.6,0) circle (2pt);
\draw[fill=gray, color=gray] (1,0) circle (2pt);
\draw[fill=gray, color=gray] (1.9,0) circle (2pt);
\draw[fill=gray, color=gray] (2.2,0) circle (2pt);
\draw[<->] (-0.3,2.5) -- (-0.3,-0.2);
\draw[<->] (-0.2,0.2) -- (-0.2,-2.5);
\node at (-0.5,-1.25) {\textbf{$s$}};
\node at (-0.5,1.25) {\textbf{$s$}};
\end{tikzpicture}
\caption{Figure showing the two continuous half patterns in $[0,1]$ joined at the top row which give $\mathsf{V}_{Sp}(s)$.}\label{fig:ks_asympt2}
\end{subfigure}
\caption{Figures showing both the general structure of the (discrete) symplectic half pattern, and the two continuous half patterns formed by the free coordinates joined at the top row.}
\end{figure}

%\begin{figure}[!htb]
%\begin{tikzpicture}
%
%\path [rounded corners, fill=gray!20] (0,2.5) -- %(2.5,0) -- (5,2.5) -- cycle;
%\path [rounded corners, draw=gray!80] (0,2.5) -- (0,-2.5) -- (2.5,0) -- cycle;
%\draw[<->] (-0.2,2.5) -- (-0.2,-2.5);
%\node at (2.5,1.35) {\textbf{$N$}};
%\node at (-0.5,0) {\textbf{$2s$}};
%
%\end{tikzpicture}
%
%\caption{Figure showing a $2s$-symplectic Gelfand-Tsetlin pattern with top row (and hence top right triangle) fixed to be $\langle N^{2s}\rangle$.}\label{fig:ks_asympt2}
%\end{figure}

%+++++++++++++++++++++++++%
%    Orthogonal section   %
%+++++++++++++++++++++++++%

\section{Results for the special orthogonal group $SO(2N)$}\label{sec:orthog}  

We now give the proof of the asymptotic growth of the moments of the moments for $SO(2N)$.  The key difference between the argument presented here and that of Section~\ref{sec:sympl} is that the leading elements in the odd rows of the half-patterns, the `odd-starters', are now allowed to be positive or negative.  This introduces an additional level of complexity due to the fact that now the constraints are not linear (they involve absolute values and signs).

Analogously to the symplectic case outlined in Section~\ref{sec:sympl}, we break the proof down in to steps.  Firstly we prove a proposition connecting the moments of moments to a count of restricted orthogonal Gelfand-Tsetlin patterns. Secondly, we note that the constraints on the patterns fix a triangular region, thus the count simplifies down to considering a subregion of the array. This induces a natural bijection between these constrained patterns and certain integer arrays.  Finally, by considering the number of fixed parameters and moving to a continuous setting, we may apply Theorem~\ref{LatticePointCountTheorem} to achieve Theorem~\ref{MainTheoremOrthogonal}. 

\subsection{A combinatorial representation}

The relevant combinatorial representation for the orthogonal group $SO(2N)$ is the following.

\begin{proposition}\label{CombRepOrthog1}
 Let $k,\beta\in\mathbb{N}$.  Then $\mom_{SO(2N)}(k,\beta)$ is equal to the number of $(4k\beta-1)$-orthogonal Gelfand-Tsetlin patterns $P=\left(\lambda^{(i)}\right)_{i=1}^{4k\beta-1}$ with top row either $\lambda^{(4k\beta-1)}=\langle N^{2k\beta}\rangle$ or $\lambda^{(4k\beta-1)}=\langle N^{2k\beta}\rangle^-$, which moreover satisfy each of the following $k$ constraints for $i=1,\dots,k$:

\begin{align}\label{orthog_constr}
  \sum_{j=(2i-2)\beta+1}^{(2i-1)\beta}&\Sgn(\lambda_j^{(2j-1)})\Sgn(\lambda_{j-1}^{(2j-3)})\left[\sum_{l=1}^{j}|\lambda_l^{(2j-1)}|-2\sum_{l=1}^{j-1}|\lambda_l^{(2j-2)}|+\sum_{l=1}^{j-1}|\lambda_l^{(2j-3)}|\right]\\
  &=\sum_{j=(2i-1)\beta+1}^{2i\beta}\Sgn(\lambda_j^{(2j-1)})\Sgn(\lambda_{j-1}^{(2j-3)})\left[\sum_{l=1}^{j}|\lambda_l^{(2j-1)}|-2\sum_{l=1}^{j-1}|\lambda_l^{(2j-2)}|+\sum_{l=1}^{j-1}|\lambda_l^{(2j-3)}|\right],\nonumber
\end{align}
where $\lambda^{(0)}, \lambda^{(-1)}\equiv 0$. $GT_{SO}(N;k;\beta)$ denotes the set of such patterns.  Further, we write $GT^+_{SO}(N;k;\beta)$ for the set of such constrained $(4k\beta-1)$-orthogonal patterns with top row $\langle N^{2k\beta}\rangle$, and $GT^-_{SO}(N;k;\beta)$ for the equivalent (but disjoint) set with top row $\langle N^{2k\beta}\rangle^{-}$. 
%Then we denote 
%\[GT_{SO}(N;k;\beta)\coloneqq GT^+_{SO}(N;k;\beta) \cup GT^-_{SO}(N;k;\beta).\]
%Note that $\mom_{SO(2N)}(k,\beta)=2|GT^+_{SO}(N;k;\beta)|$, since there is a clear bijection between $GT^+_{SO}(N;k;\beta)$ and $GT^-_{SO}(N;k;\beta)$. 
\end{proposition}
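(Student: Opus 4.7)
The plan is to mimic the proof of Proposition~\ref{CombRepSymp1}, substituting the orthogonal Bump--Gamburd identity (Proposition~\ref{BumpGamburdOrthogonal}) for its symplectic counterpart and the combinatorial description of the orthogonal Schur polynomial (Definition~\ref{def:orthoschur}) for the symplectic one. First I would apply Fubini's Theorem to write
\begin{align*}
\mom_{SO(2N)}(k,\beta)=\frac{1}{(2\pi)^k}\int_0^{2\pi}\!\!\cdots\int_0^{2\pi}\mathbb{E}_{g\in SO(2N)}\!\left[\prod_{j=1}^{2k\beta}\det(I-x_jg)\right]d\theta_1\cdots d\theta_k,
\end{align*}
with the same choice of $\underline{x}$ as in the symplectic case (that is, $\beta$ copies of $e^{-\i\theta_l}$ followed by $\beta$ copies of $e^{\i\theta_l}$, for $l=1,\dots,k$), using \eqref{ComplexConjObs} to express $|P_{SO(2N)}(\theta_l;g)|^{2\beta}$ in this form. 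I would then apply Proposition~\ref{BumpGamburdOrthogonal} with $M=2k\beta$ and $\nu=\langle N^{2k\beta}\rangle\in\mathsf{S}_{2k\beta}^{+}$. Since the $x_j$ come in complex conjugate pairs, $(x_1\cdots x_{2k\beta})^N=1$, and the integrand reduces to $o^{(4k\beta)}_{\langle N^{2k\beta}\rangle}(x_1,\dots,x_{2k\beta})$.

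Next, I would expand the orthogonal Schur polynomial via Definition~\ref{def:orthoschur} as a sum of the weights $w_o(P)$ over $P\in OP_{\langle N^{2k\beta}\rangle}\cup OP_{\langle N^{2k\beta}\rangle^-}$, observing that this union is disjoint as the top rows differ. Substituting the chosen $x_j$ into
\begin{align*}
w_o(P)=\prod_{i=1}^{2k\beta}x_i^{\Sgn(\lambda_i^{(2i-1)})\Sgn(\lambda_{i-1}^{(2i-3)})\left[\sum_{l=1}^{i}|\lambda_l^{(2i-1)}|-2\sum_{l=1}^{i-1}|\lambda_l^{(2i-2)}|+\sum_{l=1}^{i-1}|\lambda_l^{(2i-3)}|\right]},
\end{align*}
the weight factors as $\prod_{l=1}^{k}\exp(\i\theta_l\,\Delta_l(P))$, where, for each $l$,
\begin{align*}
\Delta_l(P)=\sum_{j=(2l-1)\beta+1}^{2l\beta}E_j(P)-\sum_{j=(2l-2)\beta+1}^{(2l-1)\beta}E_j(P),
\end{align*}
and $E_j(P)=\Sgn(\lambda_j^{(2j-1)})\Sgn(\lambda_{j-1}^{(2j-3)})\bigl[\sum_{l=1}^{j}|\lambda_l^{(2j-1)}|-2\sum_{l=1}^{j-1}|\lambda_l^{(2j-2)}|+\sum_{l=1}^{j-1}|\lambda_l^{(2j-3)}|\bigr]$ is the bracketed signed exponent appearing at index $j$ in the weight.

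Finally, I would invoke Fourier orthogonality $\frac{1}{2\pi}\int_0^{2\pi}e^{\i s\theta}d\theta=\delta_{s,0}$ separately in each $\theta_l$. Only those patterns for which $\Delta_l(P)=0$ for all $l=1,\dots,k$ survive, each contributing exactly $+1$ to the count. But $\Delta_l(P)=0$ is exactly the $l$-th constraint~\eqref{orthog_constr}, so the surviving patterns are precisely the members of $GT_{SO}(N;k;\beta)=GT^+_{SO}(N;k;\beta)\sqcup GT^-_{SO}(N;k;\beta)$, yielding the claim. I do not anticipate a serious obstacle: the only extra care required relative to the symplectic case is keeping track of the sign prefactors $\Sgn(\lambda_j^{(2j-1)})\Sgn(\lambda_{j-1}^{(2j-3)})$ through the exponent, which pass through the Fourier argument inertly since the constraint extracted is still a vanishing condition on an integer-valued linear combination of absolute-value sums.
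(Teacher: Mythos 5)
Your proposal is correct and follows essentially the same approach as the paper, which itself simply states that the proof of Proposition~\ref{CombRepOrthog1} follows the same method as Proposition~\ref{CombRepSymp1}; you have faithfully and carefully carried that method through the orthogonal case, including the appropriate handling of the sign factors and the disjoint union over the two top rows.
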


\begin{proof}
The proof of Proposition~\ref{CombRepOrthog1} follows entirely the same method as described in the proof of Proposition~\ref{CombRepSymp1}. 
\end{proof}

The case for $k=\beta=1$ is separate from the general case. This is essentially due to the fact that in this particular situation, the limited number of non-fixed elements in the pattern means that the constraints~\eqref{orthog_constr} behave differently compared to the case for higher $k, \beta$ (note that in the case of $GT^+_{SO}(N;1;1)$ the corresponding constraint does not fix any coordinate, as we see in the proof below).  We handle this special case here.

\begin{proposition}\label{orthog_k1b1_prop}
We have that \[\mom_{SO(2N)}(1,1)=2(N+1).\]

\end{proposition}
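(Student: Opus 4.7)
The plan is to apply Proposition~\ref{CombRepOrthog1} directly with $k=\beta=1$, so that $\mom_{SO(2N)}(1,1)$ counts $3$-orthogonal Gelfand-Tsetlin patterns $P=(\lambda^{(1)},\lambda^{(2)},\lambda^{(3)})$ with top row either $\lambda^{(3)}=(N,N)$ or $\lambda^{(3)}=(N,-N)$, subject to the single constraint \eqref{orthog_constr} for $i=1$. Since $k\beta=1$, this constraint has only $j=1$ on the left and $j=2$ on the right; recalling $\lambda^{(0)},\lambda^{(-1)}\equiv 0$ (so their signs are $+1$), the left side collapses to $\operatorname{sgn}(\lambda_1^{(1)})|\lambda_1^{(1)}|=\lambda_1^{(1)}$, while the right side becomes $\operatorname{sgn}(\lambda_2^{(3)})\operatorname{sgn}(\lambda_1^{(1)})\bigl[|\lambda_1^{(3)}|+|\lambda_2^{(3)}|-2|\lambda_1^{(2)}|+|\lambda_1^{(1)}|\bigr]$. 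So the whole problem reduces to counting solutions of this single scalar equation under the orthogonal pattern constraints.

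Next I would handle the two possible top rows separately, i.e.\ compute $\#GT^+_{SO}(N;1;1)$ and $\#GT^-_{SO}(N;1;1)$. For the top row $\lambda^{(3)}=(N,N)$, the interlacing $\lambda^{(2)}\prec(N,N)$ forces $\lambda_1^{(2)}=N$. Plugging in $|\lambda_1^{(3)}|=|\lambda_2^{(3)}|=N$, $|\lambda_1^{(2)}|=N$, and $\operatorname{sgn}(\lambda_2^{(3)})=+1$, the constraint collapses to the tautology $\lambda_1^{(1)}=\lambda_1^{(1)}$. Hence $\lambda_1^{(1)}$ is free subject only to the odd-starter bound $|\lambda_1^{(1)}|\le\lambda_1^{(2)}=N$, giving $\#GT^+_{SO}(N;1;1)=2N+1$.

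For $\lambda^{(3)}=(N,-N)$, the interlacing combined with $\lambda_1^{(2)}\ge 0$ (since $\lambda_1^{(2)}$ is not an odd starter) yields $0\le\lambda_1^{(2)}\le N$, and the odd-starter condition gives $|\lambda_1^{(1)}|\le\lambda_1^{(2)}$. Now $\operatorname{sgn}(\lambda_2^{(3)})=-1$ and the constraint reads
\begin{equation*}
\lambda_1^{(1)}=-\operatorname{sgn}(\lambda_1^{(1)})\bigl[2N-2\lambda_1^{(2)}+|\lambda_1^{(1)}|\bigr].
\end{equation*}
A case split on $\operatorname{sgn}(\lambda_1^{(1)})$: if $\lambda_1^{(1)}\ge 0$ (so $\operatorname{sgn}=+1$) this reduces to $2\lambda_1^{(1)}=2\lambda_1^{(2)}-2N$, forcing $\lambda_1^{(2)}=N$ and $\lambda_1^{(1)}=0$; if $\lambda_1^{(1)}<0$ this reduces to $2\lambda_1^{(1)}=2N-2\lambda_1^{(2)}\ge 0$, contradicting $\lambda_1^{(1)}<0$. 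Thus $\#GT^-_{SO}(N;1;1)=1$.

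Adding the two contributions, $\mom_{SO(2N)}(1,1)=(2N+1)+1=2(N+1)$, completing the proof. The only step requiring care is the sign case-analysis in the $(N,-N)$ case, where it is crucial to respect the convention $\operatorname{sgn}(0)=+1$; everything else is routine bookkeeping once the constraint is unwound for this small pattern.
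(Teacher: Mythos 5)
Your proof is correct and follows essentially the same route as the paper: it applies Proposition~\ref{CombRepOrthog1} with $k=\beta=1$, splits by top row, and counts $|GT^+_{SO}(N;1;1)|=2N+1$ and $|GT^-_{SO}(N;1;1)|=1$. You are in fact a little more careful than the paper in the $(N,-N)$ case, where interlacing and non-negativity of non-odd-starters only give $0\le\lambda_1^{(2)}\le N$ rather than fixing $\lambda_1^{(2)}=N$ a priori (as the paper's Figure~\ref{fig:simplest_ortho} and equation~\eqref{orthog_k1b1} implicitly assume); your case analysis of the full constraint~\eqref{orthog_constr} correctly shows that $\lambda_1^{(2)}=N$ and $\lambda_1^{(1)}=0$ are both forced, yielding the same final count $2(N+1)$.
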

\begin{proof} 
By Proposition~\ref{CombRepOrthog1}, 
\[\mom_{SO(2N)}(1,1)=|GT^+_{SO}(N;1;1)| + |GT^-_{SO}(N;1;1)|,\]
where here $GT_{SO}(N;1;1)$ is the set of all $(3)$-orthogonal Gelfand-Tsetlin patterns $P$ with top row either $(N,N)$ or $(N,-N)$, corresponding to the sets $GT^+_{SO}(N;1;1)$ and $GT^-_{SO}(N;1;1)$ respectively, satisfying the constraint: 
\begin{equation}\label{orthog_k1b1}
\Sgn(\lambda_1^{(1)})\lambda_1^{(1)}=\Sgn(\lambda_2^{(3)})\Sgn(\lambda_1^{(1)})\lambda_1^{(1)},
\end{equation}
see Figure~\ref{fig:simplest_ortho}. The fact that there is only one `free' parameter, namely $\lambda_1$, here is the key difference between this special case, and the situation for general $k, \beta$. Hence, $|GT^+_{SO}(N;1;1)| = 2N+1$ since all values of $0\leq |\lambda_1^{(1)}| \leq N$ are valid.   However, the only option satisfying constraint~\eqref{orthog_k1b1} in the second case is $\lambda_1^{(1)}\equiv0$.  Thus,
\[\mom_{SO(2N)}(1,1)=2(N+1).\]

%-------------------%
% k=\beta=1         %
%-------------------%

\begin{figure}[!htb]
\centering
\begin{tikzpicture}[scale=0.8, every node/.style={scale=0.8}]
%----- Pattern 1 -------%
\node at (1.2,2.7) {\Large{$P\in GT^{+}_{SO}(N;1;1)$ }};
\draw[thin] (-0.9,2.4) -- (3.1,2.4);
\draw[thin] (-0.9,2.35) -- (3.1,2.35);
\node at (0,0) {\Large{$\lambda_1^{(1)}$}};
\node at (1,1) {\Large{$N$}};
\node at (0,2) {\Large{$N$}};
\node at (2,2) {\Large{$N$}};

%----- Pattern 2 -------%
\node at (6.5,2.7) {\Large{$Q\in GT^{-}_{SO}(N;1;1)$ }};
\draw[thin] (4.5,2.4) -- (8.5,2.4);
\draw[thin] (4.5,2.35) -- (8.5,2.35);
\node at (5.5,0) {\Large{$\lambda_1^{(1)}$}};
\node at (6.5,1) {\Large{$N$}};
\node at (5.5,2) {\Large{$-N$}};
\node at (7.5,2) {\Large{$N$}};

\end{tikzpicture}

\caption{Cases for determining $\mom_{SO(2N)}(1,1)$. The relevant constraint is $\lambda_1^{(1)}=\lambda_1^{(1)}\cdot \Sgn(\pm N)$.}\label{fig:simplest_ortho}
\end{figure}
\end{proof}

Henceforth we assume that we are in the general case (i.e. we exclude the case $k=\beta=1$).  Then, we note that by requiring the top row of the pattern $P$ to be either $\langle N^{2k\beta}\rangle$ or $\langle N^{2k\beta}\rangle^-$, the top right triangle of $GT_{SO}(N;k,\beta)$ is also determined, as shown in Figure~\ref{fig:orthofixed}.  We now introduce notation which captures the sign of the odd-starters for a given pattern $P\in GT_{SO}(N;k;\beta)$.  Note that the ability of the odd-starters to be positive or negative is one of the key differences between the orthogonal and the symplectic case.

%Furthermore, as noted in the statement of Proposition~\ref{CombRepOrthog1}, in order to determine $\mom_{SO(2N)}(k,\beta)$ it is enough to work with either choice of top row.   Whilst we do not make use of this observation immediately, we will use this simplification in Section~\ref{sec:ortho_asympts}. 

We consider the following decomposition of $GT_{SO}(N;k;\beta)$ into the disjoint union:
\[GT_{SO}(N;k;\beta) = \bigcup_{\underline{\varepsilon}\in\{\pm1\}^{2k\beta}} GT^{\underline{\varepsilon}}_{SO}(N;k;\beta),\]
where $GT^{\underline{\varepsilon}}_{SO}(N;k;\beta)$ is the subset of $GT_{SO}(N;k;\beta)$ where the sign of $\lambda_{i}^{(2i-1)}$ for $1\leq i \leq 2k\beta$ is required to be equal to $\varepsilon_i$.  We decompose in this way due to the requirement of convexity in Theorem~\ref{LatticePointCountTheorem}. One then sees that, for instance,
\[GT^+_{SO}(N;k;\beta) = \bigcup_{\substack{\underline{\varepsilon}\in\{\pm1\}^{2k\beta}:\\ \varepsilon_{2k\beta}=1}} GT^{\underline{\varepsilon}}_{SO}(N;k;\beta).\]
Further examples of the definition are given by Figure~\ref{fig:oddstarter_eg}.

As in Section~\ref{sec:sympl}, for ease we now concentrate on the undetermined elements.  The following definition formally defines a relabelling of said parts, and Figure~\ref{fig:orthobij} demonstrates the bijection between a given pattern $P\in GT^{\underline{\varepsilon}}_{SO}(N;k;\beta)$ and the renaming.  In spirit, this process is the same as that described in Definition~\ref{def:sympl_int_array}, though with the added complexity of the signs of the odd-starters. 

\begin{definition}\label{def:ortho_int_array}
We consider the decomposition of $\mathfrak{I}_{SO}(N;k;\beta)$ into the union of multisets
\[\mathfrak{I}_{SO}(N;k;\beta)=\bigcup_{\underline{\varepsilon}\in\{\pm 1\}^{2k\beta}}\mathfrak{I}^{\underline{\varepsilon}}_{SO}(N;k;\beta),\]
where for a fixed $\underline{\varepsilon}\in\{\pm1\}^{2k\beta}$, $\mathfrak{I}^{\underline{\varepsilon}}_{SO}(N;k;\beta)$ is the set of integer arrays $(y^{(i)})_{i=1}^{4k\beta-3}\in\mathbb{Z}^{k\beta(2k\beta-1)}$ satisfying the following additional requirements:
\begin{enumerate}
\item $y^{(i)},y^{(4k\beta-2-i)}\in \mathsf{S}^+_{\lfloor \frac{i+1}{2}\rfloor}$ for $1\le i \le 2k\beta-1$,
\item both $(y^{(i)})_{i=1}^{2k\beta-1}$ and $(y^{(4k\beta-2-i)})_{i=1}^{2k\beta-1}$ form $(2k\beta-1)$-orthogonal Gelfand-Tsetlin patterns,
\item $0\le y_j^{(i)} \le N$ for any valid $i,j$, unless $y_j^{(i)}$ is an odd-starter with its associated $\varepsilon_{(i+1)/2}=-1$, in which case $0<y_j^{(i)}\leq N$,
\item the rows  $(y^{(i)})_{i=1}^{4k\beta-3}$ fulfil the following constraints:

In the case $k$ is even, let $i=1,\dots,\frac{k}{2}$ (with $y^{(-1)},y^{(0)},y^{(4k\beta-2)},y^{(4k\beta-1)}\equiv 0$, and $\varepsilon_0\equiv 1$).  Then,
\begin{align}\label{eq:ortho_gf_const1}
\sum_{j=(2i-2)\beta+1}^{(2i-1)\beta}&\varepsilon_j\varepsilon_{j-1}\left[\sum_{l=1}^{j}y_l^{(2j-1)}-2\sum_{l=1}^{j-1}y_l^{(2j-2)}+\sum_{l=1}^{j-1}y_l^{(2j-3)}\right]\\
&=\sum_{j=(2i-1)\beta+1}^{2i\beta}\varepsilon_j\varepsilon_{j-1}\left[\sum_{l=1}^{j}y_l^{(2j-1)}-2\sum_{l=1}^{j-1}y_l^{(2j-2)}+\sum_{l=1}^{j-1}y_l^{(2j-3)}\right]\nonumber
\end{align}
and
\begin{align}\label{eq:ortho_gf_const2}
\sum_{j=(2i-2)\beta+1}^{(2i-1)\beta}&\varepsilon_{2k\beta-j+1}\varepsilon_{2k\beta-j}\left[\sum_{l=1}^{j}y_l^{(4k\beta-2j-1)}-2\sum_{l=1}^{j-1}y_l^{(4k\beta-2j)}+\sum_{l=1}^{j-1}y_l^{(4k\beta-2j+1)}\right]\\
&=\sum_{j=(2i-1)\beta+1}^{2i\beta}\varepsilon_{2k\beta-j+1}\varepsilon_{2k\beta-j}\left[\sum_{l=1}^{j}y_l^{(4k\beta-2j-1)}-2\sum_{l=1}^{j-1}y_l^{(4k\beta-2j)}+\sum_{l=1}^{j-1}y_l^{(4k\beta-2j+1)}\right].\nonumber
\end{align}
While, when $k$ is odd we have the same constraints as above for $i=1,\dots, \frac{k-1}{2}$ along with:
\begin{align}\label{eq:ortho_gf_const3}
\sum_{j=(k-1)\beta+1}^{k\beta}&\varepsilon_{j}\varepsilon_{j-1}\left[\sum_{l=1}^{j}y_l^{(2j-1)}-2\sum_{l=1}^{j-1}y_l^{(2j-2)}+\sum_{l=1}^{j-1}y_l^{(2j-3)}\right]\\
&=\sum_{j=(k-1)\beta+1}^{k\beta}\varepsilon_{2k\beta-j+1}\varepsilon_{2k\beta-j}\left[\sum_{l=1}^{j}y_l^{(4k\beta-2j-1)}-2\sum_{l=1}^{j-1}y_l^{(4k\beta-2j)}+\sum_{l=1}^{j-1}y_l^{(4k\beta-2j+1)}\right].\nonumber
\end{align} 
\end{enumerate}
Observe that, as in the symplectic case, for both $k$ odd and even there are a total of $k$ constraints.
\end{definition}

Then, analogously to how $\mathfrak{B}_{Sp}$  was defined in Section~\ref{sec:sympl}, (see \eqref{eq:symbij}), one may also define 
\begin{equation}\label{eq:orthobij}
\mathfrak{B}_{SO}:GT_{SO}(N;k;\beta) \longrightarrow \mathfrak{I}_{SO}(N;k;\beta).
\end{equation}

The bijection is depicted by Figure~\ref{fig:orthobij}, and can be constructed as follows. Take $P\in GT_{SO}(N;k;\beta)$ so $P=(\lambda^{(i)})_{i=1}^{4k\beta-1}$.  In particular, there exists $\underline{\varepsilon}\in\{\pm1\}^{2k\beta}$ such that $P\in GT^{\underline{\varepsilon}}_{SO}(N;k;\beta)$.  Due to the interlacing $\lambda^{(4k\beta-3)}\prec \langle N^{2k\beta-1}\rangle=\lambda^{(4k\beta-2)}$, all but one element of $\lambda^{(4k\beta-3)}$ is fixed:
\begin{align*}
\lambda_1^{(4k\beta-3)},\dots,\lambda_{2k\beta-2}^{(4k\beta-3)}\equiv N,\\
0 \le |\lambda_{2k\beta-1}^{(4k\beta-3)}| \le N.
\end{align*}

We now set $y_1^{(4k\beta-3)}=|\lambda_{2k\beta-1}^{(4k\beta-3)}|$ and $\varepsilon_{2k\beta-1}=\Sgn(\lambda_{2k\beta-1}^{(4k\beta-3)})$. Repeating the same logic, we consider the next pair of interlaced rows $\lambda^{(4k\beta-4)}\prec \lambda^{(4k\beta-3)}$ which once more fixes all but one coordinate:
\begin{align*}
\lambda_1^{(4k\beta-4)},\dots,\lambda_{2k\beta-3}^{(4k\beta-4)}\equiv N,\\
y_1^{(4k\beta-3)} = |\lambda_{2k\beta-1}^{(4k\beta-3)}| \le \lambda_{2k\beta-2}^{(4k\beta-4)} \le N.
\end{align*}

Thus set $y_1^{(4k\beta-4)}=\lambda_{2k\beta-2}^{(4k\beta-4)}$.  This process can be repeated up to and including $\lambda^{(2k\beta)}$, after which there are no more coordinates fixed by the interlacing.   Thereafter set $y_j^{(i)}=|\lambda_{j}^{(i)}|$, and throughout use the fact that $\varepsilon_j=\Sgn(\lambda_{j}^{(2j-1)})$.  It is apparent that this entire process is invertible, hence the map given by this construction, $\mathfrak{B}_{SO}$ is a bijection.  We may then employ Proposition~\ref{CombRepOrthog1} to achieve the following statement. 

\begin{proposition}\label{CombRepOrthog2}
Let $k, \beta\in\mathbb{N}$. Then 
%\[\mom_{SO(2N)}(k,\beta)=\#\mathfrak{I}_{SO}(N;k;\beta)= \sum_{\underline{\varepsilon}\in\{\pm 1\}^{2k\beta}}\#\mathfrak{I}^{\underline{\varepsilon}}_{SO}(N;k;\beta).\]
\begin{align*}
\mom_{SO(2N)}(k,\beta)=\#GT_{SO}(N;k,\beta)&=\sum_{\underline{\varepsilon}\in\{\pm 1\}^{2k\beta}}\#GT^{\underline{\varepsilon}}_{SO}(N;k;\beta)\\
&= \sum_{\underline{\varepsilon}\in\{\pm 1\}^{2k\beta}}\#\mathfrak{I}^{\underline{\varepsilon}}_{SO}(N;k;\beta)\\
&=\# \mathfrak{I}_{SO}(N;k,\beta).
\end{align*}
\end{proposition}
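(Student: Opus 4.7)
The statement is essentially the consolidation of several pieces already set up, so the plan is to verify a short chain of four equalities. The first, $\mom_{SO(2N)}(k,\beta)=\# GT_{SO}(N;k;\beta)$, is exactly Proposition~\ref{CombRepOrthog1}. The second and fourth equalities are immediate from the fact that
\[
GT_{SO}(N;k;\beta)=\bigsqcup_{\underline{\varepsilon}\in\{\pm 1\}^{2k\beta}}GT^{\underline{\varepsilon}}_{SO}(N;k;\beta),\qquad \mathfrak{I}_{SO}(N;k;\beta)=\bigsqcup_{\underline{\varepsilon}\in\{\pm 1\}^{2k\beta}}\mathfrak{I}^{\underline{\varepsilon}}_{SO}(N;k;\beta),
\]
are disjoint unions by construction: once $P\in GT_{SO}(N;k;\beta)$ is given, the vector $\underline{\varepsilon}$ with $\varepsilon_i=\Sgn(\lambda_i^{(2i-1)})$ is uniquely determined. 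The bulk of the proof is therefore the third equality, which is the statement that for each fixed sign vector $\underline{\varepsilon}$ the relabelling map $\mathfrak{B}_{SO}$ described before the proposition restricts to a bijection $GT^{\underline{\varepsilon}}_{SO}(N;k;\beta)\longrightarrow \mathfrak{I}^{\underline{\varepsilon}}_{SO}(N;k;\beta)$.

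To verify this bijection I would work through three checks. First, the interlacing cascading down from $\lambda^{(4k\beta-1)}=\langle N^{2k\beta}\rangle^{\pm}$ forces a top-right triangular region of $P$ to equal $N$, so the coordinates $\lambda_l^{(i)}$ that can vary are exactly those captured by the array $(y^{(i)})_{i=1}^{4k\beta-3}$, with $y_l^{(i)}=|\lambda_l^{(i)}|$ after the relabelling illustrated in Figure~\ref{fig:orthobij}; the dimension count $k\beta(2k\beta-1)$ matches. Second, the half-pattern structure and the modified interlacing on odd starters of $P$ translate directly into conditions (1)--(3) of Definition~\ref{def:ortho_int_array}, since the relabelled rows read off from $P$ in either direction (top down and bottom up) inherit the two $(2k\beta-1)$-orthogonal patterns $(y^{(i)})_{i=1}^{2k\beta-1}$ and $(y^{(4k\beta-2-i)})_{i=1}^{2k\beta-1}$, with $0\le y_l^{(i)}\le N$. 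Third, on $GT^{\underline{\varepsilon}}_{SO}$ we have the tautology $\Sgn(\lambda_j^{(2j-1)})\Sgn(\lambda_{j-1}^{(2j-3)})=\varepsilon_j\varepsilon_{j-1}$.

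The main step is then translating the single constraint family \eqref{orthog_constr} into the three families \eqref{eq:ortho_gf_const1}--\eqref{eq:ortho_gf_const3}. I would substitute $|\lambda_l^{(\cdot)}|=y_l^{(\cdot)}$ for non-fixed entries and $|\lambda_l^{(\cdot)}|=N$ for entries in the fixed top triangle, and replace the sign products by $\varepsilon_j\varepsilon_{j-1}$. For $i\le\lfloor k/2\rfloor$ the rows appearing in \eqref{orthog_constr} lie in the ``bottom'' half of the pattern and touch no fixed entries, producing exactly \eqref{eq:ortho_gf_const1} after the relabelling. The symmetric constraints \eqref{eq:ortho_gf_const2} arise by applying exactly the same argument to the corresponding constraints in the ``top'' half, where the reindexing $i\mapsto 4k\beta-2-i$ is precisely the reflection through the midrow used in the relabelling. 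When $k$ is odd, the middle constraint indexed by $i=(k+1)/2$ straddles both halves, and yields \eqref{eq:ortho_gf_const3}; here one has to verify that the deterministic contributions coming from the fixed $N$'s appear identically on both sides of the equation and so cancel, which they do by symmetry of the fixed triangle.

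Invertibility of $\mathfrak{B}_{SO}$ on each stratum is transparent: given $\underline{\varepsilon}$ and an element of $\mathfrak{I}^{\underline{\varepsilon}}_{SO}(N;k;\beta)$, one reinserts the $N$'s in the top triangle and restores the signs on odd starters according to $\underline{\varepsilon}$ to recover a unique $P\in GT^{\underline{\varepsilon}}_{SO}(N;k;\beta)$. The only delicate point, and hence the main obstacle, is the careful bookkeeping in the constraint translation just described: one must ensure that the combinations of sign products, absolute values, and fixed $N$-contributions line up so that \eqref{orthog_constr} becomes equivalent to the linear system \eqref{eq:ortho_gf_const1}--\eqref{eq:ortho_gf_const3} on the relabelled variables, separately for each parity of $k$.
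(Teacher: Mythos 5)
Your proposal is correct and follows essentially the same route as the paper: the paper constructs the relabelling map $\mathfrak{B}_{SO}$ in the discussion before the proposition (decomposing over $\underline{\varepsilon}$, fixing the top triangle to $N$, and translating \eqref{orthog_constr} into \eqref{eq:ortho_gf_const1}--\eqref{eq:ortho_gf_const3}), and then deduces Proposition~\ref{CombRepOrthog2} directly from Proposition~\ref{CombRepOrthog1}. You simply spell out the constraint-bookkeeping and the stratum-by-stratum bijectivity that the paper leaves implicit.
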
 

%+++++++++++++++++++++++%
%     Ortho figures     %
%+++++++++++++++++++++++%

%--------------------%
% Orthog fixed       %
%--------------------%
\begin{figure}[!htb]
\centering
\begin{tikzpicture}[scale=0.6, every node/.style={scale=0.6}]

%----- Fixed part -------%
\draw[decorate, decoration={brace, amplitude=10pt}] (-0.6,10.4) -- (10.7,10.4) node [midway, yshift=.9cm] {$2k\beta$};

\path [draw=gray!10, fill=gray!10] (-0.6,10.2) -- (10.7,10.2) -- (5,4.3) -- cycle;
\node at (0,10) {$N$};
\node at (2,10) {$\cdots$};
\node at (6,10) {$\cdots$};
\node at (8,10) {$N$};
\node at (10,10) {$N$};
\node at (1,9) {$N$};
\node at (3,9) {$\cdots$};
\node at (5,9) {$\cdots$};
\node at (7,9) {$N$};
\node at (9,9) {$N$};
\node at (8,8) {$N$};
\node at (2,8) {$N$};
\node at (4,8) {$\cdots$};
\node at (6,8) {$N$};
\node at (8,8) {$N$};
\node at (7,7) {$N$};
\draw (6,6) node[rotate=80] {$\ddots$};
\draw (5,7) node[rotate=-10] {$\ddots$};
\draw (3,7) node[rotate=-10] {$\ddots$};
\draw (4,6) node[rotate=-10] {$\ddots$};
\draw (3,5) node[rotate=-10] {$\ddots$};
\node at (5,5) {$N$};

%----- Non-fixed part -------%
\draw[thin, color=gray!60] (-0.6,10.2) -- (-0.6,-1.1);
\draw[thin, color=gray!60] (5,4.3) -- (-0.6,-1.1);
\draw[thin, color=gray!60] (5,4.3) -- (-0.6,10.2);
\draw[decorate, decoration={brace, amplitude=10pt}]  (-0.7,-0.3) -- (-0.7,8.3) node [midway, xshift=-1.4cm] {$2k\beta-1$};
\node at (-2.1, 3.6) {odd-starters};
%\node at (0,6) {$\lambda_{?}^{(4k\beta-3)}$};
%\draw (1,5) node[rotate=-10] {$\ddots$};
%\node at (0,4) {$\vdots$};
%\node at (0,2) {$\vdots$};
%\draw (1,3) node[rotate=-10] {$\ddots$};
%\draw (2,2) node[rotate=80] {$\ddots$};
%\node at (2,4) {$\lambda_{?}^{(?)}$};
%\node at (3,3) {$\lambda_{?}^{(?)}$};
%\node at (1,1) {$\lambda_{1}^{(2)}$};
%\node at (0,0) {$\lambda_{1}^{(1)}$};
\node at (0,8) {$\ast$};
\node at (1,7) {$\ast$};
\node at (0,6) {$\ast$};
\node at (2,6) {$\ast$};
\node at (4,4) {$\ast$};
\draw (1,5) node[rotate=-10] {$\ddots$};
\node at (0,4) {$\vdots$};
\node at (0,2) {$\vdots$};
\draw (1,3) node[rotate=-10] {$\ddots$};
\draw (2,2) node[rotate=80] {$\ddots$};
\node at (2,4) {$\ast$};
\node at (3,3) {$\ast$};
\node at (1,1) {$\ast$};
\node at (0,0) {$\ast$};
\end{tikzpicture}

\caption{Figure depicting the fixed region of a $(4k\beta-1)$-orthogonal Gelfand-Tsetlin pattern with top row $\langle N^{2k\beta}\rangle$.  The shaded area represents the fixed region, whilst the unshaded region shows which elements have some freedom in the values that they can take.}\label{fig:orthofixed}
\end{figure}
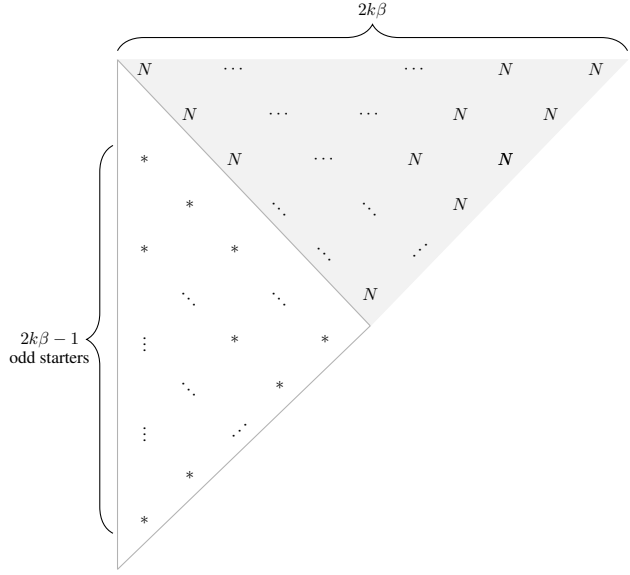

%------------------------%
% Orthog epsilon egs     %
%------------------------%

\begin{figure}[!htb]
\centering
\begin{tikzpicture}[scale=0.6, every node/.style={scale=0.6}]
%----- Pattern 1 -------%
\node at (3.5,6.7) {\Large{$P\in GT^{\underline{\varepsilon}}_{SO}(8;2;1)$ }};
\node at (3.5,7.5) {\Large{$\underline{\varepsilon}=(-1,1,-1,-1)$}};
\draw[thin] (-.5,6.4) -- (6.5,6.4);
\draw[thin] (-.5,6.35) -- (6.5,6.35);
\node at (0,0) {\Large{$-1$}};
\node at (1,1) {\Large{$2$}};
\node at (0,2) {\Large{$0$}};
\node at (2,2) {\Large{$6$}};
\node at (1,3) {\Large{$5$}};
\node at (3,3) {\Large{$8$}};
\node at (0,4) {\Large{$-2$}};
\node at (2,4) {\Large{$8$}};
\node at (4,4) {\Large{$8$}};
\node at (1,5) {\Large{$8$}};
\node at (3,5) {\Large{$8$}};
\node at (5,5) {\Large{$8$}};
\node at (0,6) {\Large{$-8$}};
\node at (2,6) {\Large{$8$}};
\node at (4,6) {\Large{$8$}};
\node at (6,6) {\Large{$8$}};

%----- Pattern 2 -------%
\node at (12.5,6.7) {\Large{$Q\in GT^{\underline{\varepsilon}}_{SO}(5;2;1)$ }};
\node at (12.5,7.5) {\Large{$\underline{\varepsilon}=(-1,-1,1,1)$}};
\draw[thin] (8.5,6.4) -- (15.5,6.4);
\draw[thin] (8.5,6.35) -- (15.5,6.35);
\node at (9,0) {\Large{$-3$}};
\node at (10,1) {\Large{$2$}};
\node at (9,2) {\Large{$-3$}};
\node at (11,2) {\Large{$4$}};
\node at (10,3) {\Large{$3$}};
\node at (12,3) {\Large{$5$}};
\node at (9,4) {\Large{$2$}};
\node at (11,4) {\Large{$5$}};
\node at (13,4) {\Large{$5$}};
\node at (10,5) {\Large{$5$}};
\node at (12,5) {\Large{$5$}};
\node at (14,5) {\Large{$5$}};
\node at (9,6) {\Large{$5$}};
\node at (11,6) {\Large{$5$}};
\node at (13,6) {\Large{$5$}};
\node at (15,6) {\Large{$5$}};
\end{tikzpicture}

\caption{Examples of patterns $P, Q$ in $GT^{\underline{\varepsilon}}_{SO}(N;k;\beta)$ for $k=2$, $\beta=1$, and different, given values of $N$ and $\underline{\varepsilon}$.}\label{fig:oddstarter_eg}
\end{figure}

%----------------------%
% orthog bij           %
%----------------------%
\begin{figure}[!htb]
\centering
\begin{tikzpicture}[scale=0.6, every node/.style={scale=0.6}]

%----- Fixed part -------%
\path [draw=gray!10, fill=gray!10] (-0.8,10.3) -- (10.8,10.3) -- (5,4.5) -- cycle;
\node at (0,10) {$\pm N$};
\node at (4,10) {$\cdots$};
\node at (2,10) {$\cdots$};
\node at (6,10) {$N$};
\node at (8,10) {$N$};
\node at (10,10) {$N$};
\node at (1,9) {$N$};
\node at (3,9) {$\cdots$};
\node at (5,9) {$\cdots$};
\node at (7,9) {$N$};
\node at (9,9) {$N$};
\node at (8,8) {$N$};
\node at (2,8) {$N$};
\node at (4,8) {$\cdots$};
\node at (6,8) {$N$};
\node at (8,8) {$N$};
\node at (7,7) {$N$};
\draw (6,6) node[rotate=80] {$\ddots$};
\draw (5,7) node[rotate=-10] {$\ddots$};
\draw (3,7) node[rotate=-10] {$\ddots$};
%\node at (3,7) {$N$};
\draw (4,6) node[rotate=-10] {$\ddots$};
\draw (3,5) node[rotate=-10] {$\ddots$};
\node at (5,5) {$N$};

%----- Fixed part -------%
%\path [draw=gray!10, fill=gray!10] (-0.2,9.3) -- (9.7,9.3) -- (4.88,4) -- cycle;
%\node at (1,9) {$\pm N$};
%\node at (3,9) {$\cdots$};
%\node at (5,9) {$\cdots$};
%\node at (7,9) {$N$};
%\node at (9,9) {$N$};
%\node at (8,8) {$N$};
%\node at (2,8) {$N$};
%\node at (4,8) {$\cdots$};
%\node at (6,8) {$N$};
%\node at (8,8) {$N$};
%\node at (7,7) {$N$};
%\draw (6,6) node[rotate=80] {$\ddots$};
%\draw (5,7) node[rotate=-10] {$\ddots$};
%\draw (3,7) node[rotate=-10] {$\ddots$};
%\draw (4,6) node[rotate=-10] {$\ddots$};
%\draw (3,5) node[rotate=-10] {$\ddots$};
%\node at (5,5) {$N$};

%------ Non-Fixed part --------%
\node at (0,8) {$\lambda_{2k\beta-1}^{(4k\beta-3)}$};
\node at (1,7) {$\lambda_{2k\beta-2}^{(4k\beta-4)}$};
\node at (0,6) {$\lambda_{2k\beta-2}^{(4k\beta-5)}$};
\node at (2,6) {$\lambda_{2k\beta-3}^{(4k\beta-5)}$};
\draw (3,5) node[rotate=-10] {$\ddots$};
\node at (4,4) {$\lambda_{1}^{(2k\beta-1)}$};
\draw (1,5) node[rotate=-10] {$\ddots$};
\node at (0,4.5) {$\vdots$};
\node at (0,2.5) {$\vdots$};
\draw (1,3) node[rotate=-10] {$\ddots$};
\draw (2,2) node[rotate=80] {$\ddots$};
\node at (2,4) {$\lambda_{2}^{(2k\beta-1)}$};
\node at (3,3) {$\lambda_{1}^{(2k\beta-2)}$};
\node at (1,1) {$\lambda_{1}^{(2)}$};
\node at (0,0) {$\lambda_{1}^{(1)}$};

%----- Relabelled -------%
%\path [draw=gray!5, fill=gray!5] (-0.4,9.3) -- (9.7,9.3) -- (4.68,4) -- cycle;
\node at (13,9) {};
\node at (15,9) {};
\node at (17,9) {};
%\node at (19,9) {};
%\node at (21,9) {};
%\node at (20,8) {};
\node at (14,8) {};
\node at (16,8) {};
\node at (18,8) {};
%\node at (20,8) {};
%\node at (19,7) {};
\node at (18,6) {};
\draw (17,7) node[rotate=-10] {};
\draw (15,7) node[rotate=-10] {};
\draw (16,6) node[rotate=-10] {};
\draw (15,5) node[rotate=-10] {};
\node at (17,5) {};

%----- Non-fixed part -------%
%\draw[thin, dotted] (4.68,4) -- (-0.4,9.3);
\node at (12,10) {$\varepsilon_{2k\beta}$};
\node at (12,8) {$\varepsilon_{2k\beta-1}y_1^{(4k\beta-3)}$};
\node at (13,7) {$y_1^{(4k\beta-4)}$};
\node at (12,6) {$\varepsilon_{2k\beta-2}y_2^{(4k\beta-5)}$};
\node at (14,6) {$y_1^{(4k\beta-5)}$};
\node at (16,4) {$y_{1}^{(2k\beta-1)}$};
\draw (13,5) node[rotate=-10] {$\ddots$};
\draw (15,5) node[rotate=-10] {$\ddots$};
\node at (12,4.5) {$\vdots$};
\node at (12,2.5) {$\vdots$};
\draw (13,3) node[rotate=-10] {$\ddots$};
\draw (14,2) node[rotate=80] {$\ddots$};
\node at (14,4) {$y_{2}^{(2k\beta-1)}$};
\node at (15,3) {$y_{1}^{(2k\beta-2)}$};
\node at (13,1) {$y_{1}^{(2)}$};
\node at (12,0) {$\varepsilon_{1}y_{1}^{(1)}$};

%--- Bij -----%
\node at (9,5) {\huge{$\xrightarrow[]{\mathfrak{B}_{SO}}$}};

\end{tikzpicture}

\caption{Pictorial representation of the relabelling of the coordinates given by the bijection $\mathfrak{B}^{\underline{\varepsilon}}_{SO}:GT^{\underline{\varepsilon}}_{SO}(N;k;\beta) \longrightarrow \mathfrak{I}^{\underline{\varepsilon}}_{SO}(N;k;\beta)$.  Above on the right hand side (the image of the bijection), $\varepsilon_j=\Sgn(\lambda^{(2j-1)}_j)$ for $j=1,\dots,2k\beta-1$ and $\varepsilon_{2k\beta}=\Sgn(\lambda^{(4k\beta-1)}_{2k\beta})=\Sgn(\pm N)$.}\label{fig:orthobij}
\end{figure}

\subsection{Asymptotics and the leading order coefficient}\label{sec:ortho_asympts}

Recall, from Section~\ref{sec:sympl_asympt}, that we defined continuous half-patterns and continuous orthogonal Gelfand-Tsetlin patterns using the continuous Weyl chamber, 
\begin{align*}
\mathsf{W}_N=\{ x=(x_1,\dots,x_N)\in \mathbb{R}^N:x_1\ge \dots \ge x_N\}.
\end{align*}

There we defined the index set $\mathcal{S}^{Sp}_{(k,\beta)}$, here we give the equivalent definition for the orthogonal case.  For more explanation of the construction of this set, see the Section~\ref{sec:sympl_asympt}.

\begin{align*}
\mathcal{S}^{SO}_{(k,\beta)} \coloneqq \bigg\{ (m,n):\ &1 \le m \le \bigg\lfloor \frac{n+1}{2} \bigg\rfloor\text{ and }1 \le n \le 2k\beta-1; \\
&\text{or }1 \le i \le \bigg\lfloor \frac{4k\beta-n-1}{2}\bigg\rfloor\text{ and }2k\beta\le n < 4k\beta-3;\\
 &n\neq 4\beta-1, 8\beta-1, \dots, 4(k-1)\beta-1 \bigg\}\\
 \cup \bigg\{ (m,4n&\beta-1): \ 1\le m \le 2n\beta-1 \text{ and }1\le n\le \left\lfloor \frac{k}{2} \right\rfloor;\\
 &\text{or }1\le m \le 2(k-n)\beta-1 \text{ and }\left\lfloor\frac{k}{2}\right\rfloor+1\le n< k \}\bigg\}.
\end{align*}

Note that the size of the set $\mathcal{S}^{SO}_{(k,\beta)}$ is $k\beta (2k\beta-1)-k$.  The set corresponding to the indices `missing' from $\mathcal{S}^{SO}_{(k,\beta)}$ is the following
\[\mathcal{T}^{SO}_{(k,\beta)}\coloneqq \{(m,n) : y_m^{(n)}\in \mathfrak{I}_{SO}(N;k;\beta)\} \backslash \mathcal{S}^{SO}_{(k,\beta)}.\]
Now define the following set $\mathcal{V}_{(k,\beta;\underline{\varepsilon})}^{SO}\subset\mathbb{R}^{k\beta(2k\beta-1)-k}$, which is the continuous version of $\mathfrak{I}^{\underline{\varepsilon}}_{SO}(N;k;\beta)$, except that a particular choice of $k$ of the coordinates from $\mathfrak{I}^{\underline{\varepsilon}}_{SO}(N;k;\beta)$ are determined by the linear equations, \cref{eq:ortho_gf_const1,eq:ortho_gf_const2,eq:ortho_gf_const3}.  Then, $\mathcal{V}_{(k,\beta;\underline{\varepsilon})}^{SO}$  comprises the following elements.   Firstly, we take coordinates $x_m^{(n)}$ indexed by $(m,n)\in \mathcal{S}^{SO}_{(k,\beta)}$ which moreover satisfy the following: 
\begin{align*}
0\le x_m^{(n)} \le 1,\quad&\text{for }(m,n)\in \mathcal{S}^{SO}_{(k,\beta)},
\end{align*}
unless $(m,n)$ denotes the position of an odd-starter with its corresponding $\varepsilon_{(n+1)/2}=-1$ in which case $0<x_m^{(n)}\leq 1$. Take $\underline{\varepsilon}$ just as in the definition of $\mathcal{I}^{\underline{\varepsilon}}_{SO}(N;k,\beta)$, i.e. a fixed set of signs for the odd-starters.
%\Sgn(y_j^{(2j-1)})=\varepsilon_j\quad&\text{for }j=1,\dots,k\beta; (j,2j-1)\not\in\mathcal{T}^{SO}_{(k,\beta)} \\
%\Sgn(y_{2k\beta-j}^{(2j-1)})=\varepsilon_j\quad&\text{for }j=k\beta+1,\dots,2k\beta-1; (2k\beta-j,2j-1)\not\in\mathcal{T}^{SO}_{(k,\beta)} \\
%\Sgn(x_i^{(j)})=1\quad&\text{otherwise.}
Additionally, $\mathcal{V}_{(k,\beta;\underline{\varepsilon})}^{SO}$ contains the following $k$ elements, determined by the linear equations~\eqref{eq:ortho_gf_const1}--\eqref{eq:ortho_gf_const3} in the definition of $\mathfrak{I}^{\underline{\varepsilon}}_{SO}(N;k;\beta)$, 
\begin{align*}
x_{\lfloor\frac{n+1}{2}\rfloor}^{(n)}\quad&\text{for }n= 4\beta-1, 8\beta-1, \dots, 4\lfloor\tfrac{k}{2}\rfloor\beta-1,\\
x_{\lfloor\frac{4k\beta-n-1}{2}\rfloor}^{(n)}\quad&\text{for }n=4(\lfloor\tfrac{k}{2}\rfloor+1)\beta-1,\dots,4(k-1)\beta-1, 4k\beta-3.
%\intertext{and the $k$ signs,}
%\varepsilon_n\quad&\text{for }n=4\beta-1,\dots,4(k-1)\beta-1,4k\beta-3.
\end{align*}
Thus,
 \begin{itemize}
\item $0 \le x_{m}^{(n)} \le 1$, for all $x_m^{(n)}\in \mathcal{V}_{(k,\beta;\underline{\varepsilon})}^{SO}$, unless $(m,n)$ denotes the position of an odd-starter with its corresponding $\varepsilon_{(n+1)/2}=-1$, in which case $0<x_m^{(n)}\leq 1$,
\item $x^{(n)}, x^{(4k\beta-n)} \in \mathsf{W}_{\lfloor \frac{n+1}{2}\rfloor}^+$, for all $n=1, \dots, 2k\beta-1$,
\item both $(x^{(n)})_{n=1}^{2k\beta-1}$ and $(x^{(4k\beta-n)})_{n=1}^{2k\beta-1}$ form continuous $(2k\beta-1)$-orthogonal Gelfand-Tsetlin patterns.
\end{itemize}

Observe that, just as in the symplectic case, $\mathcal{V}_{(k,\beta;\underline{\varepsilon})}^{SO}$ is convex as an intersection of hyperplanes. Moreover, $\mathcal{V}_{(k,\beta;\underline{\varepsilon})}^{SO}$ is contained in the cube $[0,1]^{k\beta(2k\beta-1)-k}$ and hence in a closed ball of radius $\sqrt{k\beta(2k\beta-1)-k}$.

\begin{proof}[Proof of Theorem \ref{MainTheoremOrthogonal}]
The fact that the moments of moments are polynomials in $N$ was proven in Proposition \ref{polystrucsympl}, and the case of $k=\beta=1$ was handled above in Proposition~\ref{orthog_k1b1_prop}.

What remains to be shown is the statement concerning the leading order for general $k, \beta$.  Firstly note that for a given $\underline{\varepsilon}\in\{\pm1\}^{2k\beta}$:
%\begin{align*}
%\sum_{\underline{\varepsilon}\in\{\pm1\}^{2k\beta-1}}\#\mathfrak{I}^{\underline{\varepsilon}}_{SO}(N;k;\beta)=\#\mathfrak{I}_{SO}(N;k;\beta)=\# \left(\mathbb{Z}^{k\beta(2k\beta-1)-k}\cap\left(N\mathcal{V}_{(k,\beta)}^{SO}\right)\right),
%\end{align*}

\begin{align*}
\#\mathfrak{I}^{\underline{\varepsilon}}_{SO}(N;k;\beta)=\# \left(\mathbb{Z}^{k\beta(2k\beta-1)-k}\cap\left(N\mathcal{V}_{(k,\beta;\underline{\varepsilon})}^{SO}\right)\right),
\end{align*}
where for a set $\mathcal{A}$, we write $N \mathcal{A}=\{Nx: x \in \mathcal{A} \}$ for its dilate by a factor of $N$. Making use of Theorem~\ref{LatticePointCountTheorem} with $\mathcal{S}=N\mathcal{V}_{(k,\beta;\underline{\varepsilon})}^{SO}$ we get:
\begin{align*}
\#\mathfrak{I}^{\underline{\varepsilon}}_{SO}(N;k;\beta)&=\mathsf{vol}\left(N\mathcal{V}_{(k,\beta;\underline{\varepsilon})}^{SO}\right)+O_{k,\beta}\left(N^{k\beta(2k\beta-1)-k-1}\right)\\
&=N^{k\beta(2k\beta-1)-k}\mathsf{vol}\left(\mathcal{V}_{(k,\beta;\underline{\varepsilon})}^{SO}\right)+O_{k,\beta}\left(N^{k\beta(2k\beta-1)-k-1}\right).
\end{align*}
Thus, by Proposition~\ref{CombRepOrthog2} we obtain:
\begin{align*}
\mom_{SO(2N)}(k,\beta)&=\sum_{\underline{\varepsilon}\in\{\pm 1\}^{2k\beta}}\#\mathfrak{I}^{\underline{\varepsilon}}_{SO}(N;k;\beta)\\&=\sum_{\underline{\varepsilon}\in\{\pm 1\}^{2k\beta}}\left[N^{k\beta(2k\beta-1)-k}\mathsf{vol}\left(\mathcal{V}_{(k,\beta;\underline{\varepsilon})}^{SO}\right)+O_{k,\beta}\left(N^{k\beta(2k\beta-1)-k-1}\right)\right]\\
&=\mathfrak{c}_{SO}(k,\beta)N^{k\beta(2k\beta-1)-k}+O_{k,\beta}\left(N^{k\beta(2k\beta-1)-k-1}\right)
\end{align*}
where
\begin{align}
\mathfrak{c}_{SO}(k,\beta)=\sum_{\underline{\varepsilon}\in\{\pm 1\}^{2k\beta}}\mathsf{vol}\left(\mathcal{V}_{(k,\beta;\underline{\varepsilon})}^{SO}\right).
\end{align}
It then once more suffices to prove that $\mathfrak{c}_{SO}(k,\beta)>0$, which is the content of Lemma \ref{PositivityOfVolumeOrthogonal} below.
\end{proof}
%
%It is important that the constant $\mathfrak{c}_{SO}(k,\beta)$ is strictly positive, since this ensures that the claimed order of the polynomial is correct.

\begin{lemma}\label{PositivityOfVolumeOrthogonal}
Let $k,\beta \in \mathbb{N}$. Then,
\begin{align}
\mathfrak{c}_{SO}(k,\beta)>0.
\end{align} 
\end{lemma}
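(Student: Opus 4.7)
Since $\mathfrak{c}_{SO}(k,\beta)=\sum_{\underline{\varepsilon}\in\{\pm1\}^{2k\beta}}\mathsf{vol}\bigl(\mathcal{V}_{(k,\beta;\underline{\varepsilon})}^{SO}\bigr)$ is a sum of non-negative terms, it suffices to produce a single choice of signs $\underline{\varepsilon}$ for which the corresponding volume is strictly positive. The plan is to take $\underline{\varepsilon}=(+1,+1,\dots,+1)$. With this choice, every product $\varepsilon_j\varepsilon_{j-1}$ and $\varepsilon_{2k\beta-j+1}\varepsilon_{2k\beta-j}$ appearing in~\eqref{eq:ortho_gf_const1}--\eqref{eq:ortho_gf_const3} equals $+1$, so the $k$ defining constraints collapse to linear equations in the $y$-coordinates, structurally identical to the symplectic constraints~\eqref{eq:sympl_gf_const1}--\eqref{eq:sympl_gf_const3} (with $(2k\beta-1)$-orthogonal half-patterns replacing $(2k\beta)$-symplectic half-patterns, and the row-index shifting by one). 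This linearisation is the crucial reduction that brings us back into the framework of Lemma~\ref{PositivityOfVolumeSymplectic}.

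Next, mimicking the argument in the symplectic case, I would introduce the sub-region $\tilde{\mathcal{V}}_{(k,\beta;\mathbf{1})}^{SO}\subset\mathcal{V}_{(k,\beta;\mathbf{1})}^{SO}$ defined by replacing all non-strict inequalities by strict ones (strict interlacing, coordinates in $(0,1)$). By the same continuity/perturbation argument as before, it suffices to exhibit one element of $\tilde{\mathcal{V}}_{(k,\beta;\mathbf{1})}^{SO}$: a small cube around such an element stays inside, since the $k$ linearly-determined coordinates in $\mathcal{T}^{SO}_{(k,\beta)}$ depend continuously (in fact, Lipschitz-continuously with a constant $C_{k,\beta}$) on the free ones.

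To construct such an element, I would classify the constraints into the same five types used in the symplectic proof: a centred triangular type~1 when $k=1$; a centred pentagonal type~2 when $k>1$ is odd; triangular types~3 at the top and bottom of the pattern; trapezoidal types~4 in the interior when $k\geq 4$; and the auxiliary type~5 obtained by splitting types~3 and~4 along the base of the fixed region. For types~1 and~2, choose a lower half-pattern with strictly interlacing coordinates at small but positive distances $\le\epsilon$ from neighbours and from $0,1$, and mirror it across the symmetry row. For types~3 and~4, split off an auxiliary type~5 piece, pick all its coordinates except the maximal one to be at small positive distance $\le\epsilon$ from their neighbours, and then solve the (now single linear) constraint for that maximal coordinate; by taking $\epsilon$ small enough relative to constants $c_{k,\beta}^*$ that depend only on $k,\beta$, the resulting value lies strictly in $(0,1)$ and respects strict interlacing.

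Finally, I would argue compatibility across overlapping constraints exactly as in Figures~\ref{fig:constr9} and~\ref{fig:constr10}: a finite number of splittings each potentially inflates the required bound on $\epsilon$ by a constant factor, but since $k$ is fixed, a uniformly small $\epsilon=\epsilon(k,\beta)>0$ works simultaneously for all constraints. The main obstacle I anticipate is precisely this bookkeeping of the compatibility, together with verifying that the index-set shift between $\mathcal{S}^{Sp}_{(k,\beta)}$ and $\mathcal{S}^{SO}_{(k,\beta)}$ (and the corresponding shift in the $\mathcal{T}$-sets) does not disrupt the type classification or the splitting—this requires re-indexing every figure by one row, but does not change the combinatorial content of the argument. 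Once the element is constructed, $\mathsf{vol}\bigl(\mathcal{V}_{(k,\beta;\mathbf{1})}^{SO}\bigr)>0$ and hence $\mathfrak{c}_{SO}(k,\beta)>0$ follows immediately.
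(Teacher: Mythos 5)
Your proposal is correct and follows essentially the same route as the paper: you pick $\underline{\varepsilon}=(+1,\dots,+1)$, observe that the sign factors then collapse the orthogonal constraints into linear ones structurally parallel to the symplectic case, and invoke the perturbation/type-classification argument of Lemma~\ref{PositivityOfVolumeSymplectic}. The paper states this even more briefly—simply asserting that with this choice of $\underline{\varepsilon}$ the argument is ``near identical'' to the symplectic case ``aside from trivial differences in the shapes considered''—so your additional detail about the row re-indexing and the Lipschitz-continuity of the determined coordinates is consistent with and fills in the paper's compressed reference.
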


\begin{proof}

Recall that \[\mathfrak{c}_{SO}(k,\beta)=\sum_{\underline{\varepsilon}\in\{\pm 1\}^{2k\beta}}\mathsf{vol}\left(\mathcal{V}_{(k,\beta;\underline{\varepsilon})}^{SO}\right).\] 
Thus, the proof of the strict positivity of the leading order coefficient $\mathfrak{c}_{SO}(k,\beta)$ can be deduced from showing that, for at least one choice of $\underline{\varepsilon}\in\{\pm1\}^{2k\beta}$, the volume $\mathsf{vol}\left(\mathcal{V}_{(k,\beta;\underline{\varepsilon})}^{SO}\right)$ is strictly positive.  Henceforth, we choose $\underline{\varepsilon}=(1,1,\dots,1)$.  Then, the argument is near identical to the one given in the symplectic case, see the proof of Lemma~\ref{PositivityOfVolumeSymplectic}, aside from trivial differences in the shapes considered. 

\end{proof}

%++++++++++++++++++%
%     Examples     %
%++++++++++++++++++%

\section{Examples}\label{sec:examples}  
We now give various explicit examples of the polynomials $\mom_{G(N)}(k,\beta)$ for $G(N)\in\{ Sp(2N), SO(2N)\}$ and small, integer values of $k, \beta$.  These examples were calculated using expressions for averages over $Sp(2N), SO(2N)$ using Toeplitz and Hankel determinants, see for example~\cite{garmig19}.  For small $k, \beta$ this is a computationally feasible task, but the complexity grows swiftly with $k, \beta$. 

\subsection{Symplectic case}
\begin{align*}
\mom_{Sp(2N)}(1,1)&=\frac{1}{2}(N+1)(N+2)\\
\mom_{Sp(2N)}(1,2)&=\frac{1}{181440}(N+1) (N+2) (N+3) (N+4) (2 N+5)\\
&\times \left(23 N^4+230 N^3+905 N^2+1650N+1512\right)\\
\mom_{Sp(2N)}(1,3)&=\frac{1}{405483668029440000}(N+1) (N+2) (N+3) (N+4) (N+5) (N+6)\\
&\times \left(10253349 N^{14}+502414101
   N^{13}+11401640999 N^{12}+158831139621 N^{11}\right.\\
   &\qquad+ 1517607151837 N^{10}+10524657547803
   N^9+54662663279397 N^8\\
   &\qquad+ 216189375784263 N^7+655178814761674 N^6+1517469287314596
   N^5\\
   &\qquad+ 2654161159219304 N^4+3424171976788416 N^3+3125457664755840
   N^2\\
   &\qquad\left.+ 1856618315596800 N+563171761152000\right)\\
\mom_{Sp(2N)}(2,1)&=\frac{1}{10080}(N+1) (N+2) (N+3) (N+4) \left(3 N^4+30 N^3+127 N^2+260 N+420\right)\\
\mom_{Sp(2N)}(3,1)&= \frac{1}{133382785536000}(N+1) (N+2) (N+3) (N+4) (N+5) (N+6) \\
&\times (5810 N^{12}+244020 N^{11}+4746259 N^{10}+56513415
   N^9\\
   &\qquad+459233580 N^8+2688408450 N^7+11665223647 N^6+38004428175 N^5\\
   &\qquad+93222284960 N^4+171600705780 N^3+236485094544 N^2\\
   &\qquad+239758263360 N+185253868800)
\end{align*}

\subsection{Orthogonal case}

\begin{align*}
  \mom_{SO(2N)}(1,1)&= 2(N+1)\\
  \mom_{SO(2N)}(1,2)&= \frac{1}{60} (N+1) (N+2) (2 N+3) \left(13 N^2+39 N+20\right)\\
  \mom_{SO(2N)}(1,3)&= \frac{1}{43589145600}(N+1) (N+2) (N+3) (N+4) \\
  &\times\left(677127 N^{10}+16928175 N^9+188303800 N^8+1226849750 N^7+5186281891 N^6\right.\\
  &\qquad+ 14881334615 N^5+29392642150 N^4+39443286500 N^3\\
  &\qquad\left.+\ 34230199032 N^2+17098220160 N+3632428800\right)
\end{align*}
\begin{align*}
  \mom_{SO(2N)}(2,1)&= \frac{1}{2} (N+1)^2 (N+2)^2\\
  \mom_{SO(2N)}(3,1)&= \frac{1}{1360800}(N+1) (N+2)^2 (N+3)^2 (N+4)(N^2+5 N+9)\hspace{4cm}\\
  &\times(31 N^4+310 N^3+1163 N^2+1940N+2100)
\end{align*}

\section{Acknowledgements}\label{sec:acknowledgements}    

We would like to thank Benjamin Fahs for a useful discussion on the results of his paper \cite{fahs19}. ECB is grateful to the Heilbronn Institute for Mathematical Research for support.  TA and JPK are pleased to acknowledge support from ERC Advanced Grant 740900 (LogCorRM).  JPK was also supported by a Royal Society Wolfson Research Merit Award.

\end{document}